\documentclass[12pt]{amsart}

\input{self-test.sty}
\title[A non-Robust quantum correlation self-test]{A non-Robust quantum correlation self-test}
\author{Ranyiliu Chen}
\author{Yuming Zhao}
\date{}

\address{Quantum Science Center of Guangdong-Hong Kong-Macao Greater Bay Area, Shenzhen, China}
\email{chenranyiliu@quantumsc.cn}

\address{QMATH, Department of Mathematical Sciences, University of Copenhagen, Denmark}
\email{yuming@math.ku.dk}

\begin{document}

\begin{abstract}
We show that exact self-testing of a quantum correlation does not imply robust self-testing. Man\v{c}inska and Schmidt previously established such a separation for non-local games, but their obstruction to robustness arises from two distinct optimal correlations, with the self-tested one remaining robust. We establish the separation for a single correlation.

Our proof is operator-algebraic. The main technical contribution is a general construction that associates a synchronous binary correlation to any finitely presented $C^*$-algebra with unitary generators and real-coefficient relations. The implementing states of the correlation correspond exactly to the tracial states on the algebra. We apply this construction to an algebra with a unique finite-dimensional tracial state and a distinct amenable tracial state. By a characterization established by Zhao and independently by Kar, the resulting correlation is a self-test but not a robust self-test.
\end{abstract}
\maketitle
\vspace{-0.2cm}

\paragraph{\textbf{Disclosure of AI usage}}
The authors formulated the research problem and presented it to ChatGPT (5.6 sol ultra), which generated an initial solution. The authors subsequently distilled the core ideas, substantially simplified the argument, and elucidated the underlying principles. These refinements made the approach more amenable to generalization and brought out its broader algebraic significance. The manuscript was written by the authors, and LLMs were also used for language editing.

\vspace{-0.2cm}
 
\tableofcontents
\section{Introduction}

In a device-independent quantum protocol, a classical verifier interacts with two non-communicating quantum players and wishes to certify that they share a prescribed entangled state and perform commanded measurements. Since the verifier has no access to the players' devices, this conclusion must be drawn entirely from their classical answers to classical questions. \emph{Self-testing} provides such a certification: for certain observed input-output statistics, every quantum strategy producing these statistics must contain the same ideal strategy, up to local isometries and an auxiliary system \cite{MayersYao04}.

More concretely, in a bipartite Bell scenario with question sets $X,Y$ and answer sets $A,B$, a \emph{quantum strategy} $\mcS$ consists of a state $ \ket{\psi}\in\mcH_A\otimes\mcH_B$ and local measurements $\{P_{x,a}\}_{a\in A}\subset \msB(\mcH_A), \{Q_{y,b}\}_{b\in B}\subset\msB(\mcH_B)$. It produces a \emph{quantum correlation}
\begin{equation*}
    p_{\mcS}(a,b|x,y) = \bra{\psi}P_{x,a}\otimes Q_{y,b}\ket{\psi}.
\end{equation*}
%which describes the probability that Alice and Bob return answers $a\in A$and $b\in B$ upon receiving questions $x\in X$ and $y\in Y$.
Throughout this paper, all quantum strategies are \emph{finite-dimensional}, and the resulting set of quantum correlations is denoted by $C_q(X,Y,A,B)$.

Self-testing can be formulated either for correlations or for non-local games. A correlation $p$ is a self-test if every quantum strategy producing $p$ is equivalent to one fixed ideal strategy. In a non-local game, the verifier samples questions and decides whether the players win from their questions and answers. Such a game is a self-test if every optimal quantum strategy is equivalent to one fixed ideal optimal strategy. 
%Thus, correlation self-testing uses the complete input-outputstatistics, whereas game self-testing uses only the winning probability obtained from those statistics. The roots of self-testing can be traced back to foundational work of Summers and Werner \cite{SW87,SW88}, Popescu and Rohrlich \cite{PR92},and Tsirelson \cite{Tsi93}, and the notion was later introduced by Mayers and Yao \cite{MayersYao04} from a cryptographic perspective.
Self-testing in both settings are important tools for device-independent quantum certification, with applications to device-independent cryptography~\cite{VV14,MS16}, verifiable quantum computation and delegation~\cite{RUV13,CGJV19,BMZ24}, and quantum complexity theory~\cite{NV18,JNVWY22,MNY22,NZ23}; see \cite{SupicBowles20} for a survey.

For these applications, exact self-testing is not sufficient. A physical experiment involves noise and imperfections, and the verifier can only estimate the players' input-output statistics from finitely many rounds. Consequently, the observed correlation or winning probability can only be expected to be close to its ideal value. To obtain a sound certification protocol, one needs a \emph{robust} self-test: every strategy producing a correlation close to the ideal correlation, or achieving a winning probability close to the optimal value, must itself be close to the ideal strategy, up to local isometries.

This leads to the following natural question: 
\begin{question}\label{Q1}
Does exact self-testing automatically provide such a soundness guarantee? In other words, must every self-test be robust?
\end{question}

 There is no immediate compactness argument for this implication. Although every quantum strategy considered here is finite-dimensional, there is no uniform bound on its dimension. In fact, the set of quantum correlations $C_q(X,Y,A,B)$ is not closed in general \cite{Slofstra19,DPP19}, and we denote its closure by $C_{qa}(X,Y,A,B)$ whose elements are called \emph{quantum-approximate correlations}.

At the level of non-local games, Man\v{c}inska and Schmidt answered \Cref{Q1} negatively \cite{MancinskaSchmidt23}. Their construction starts from two non-local games $G_1$ and $G_2$ and forms the game $G_1\lor G_2$, in which the verifier sends each player one question from each component game, and each player chooses which question to answer. The players win $G_1\lor G_2$ if they choose the same component game and win that game. 

The game $G_1$ has a perfect quantum-approximate correlation $p_1\in C_{qa}$, but no perfect finite-dimensional strategy. In contrast, $G_2$ has a perfect quantum correlation $p_2\in C_q$ and self-tests a corresponding quantum strategy $\wtd{\mcS}$. Every perfect finite-dimensional strategy for $G_1\lor G_2$ must choose $G_2$ and play it perfectly, so $G_1\lor G_2$ self-tests $\wtd{\mcS}$. On the other hand, finite-dimensional strategies approximating $p_1$ give near-perfect strategies for $G_1\lor G_2$, but these strategies do not approach the ideal strategy $\wtd{\mcS}$. Hence $G_1\lor G_2$ is not a robust self-test.

The crucial point is that, in $G_1\lor G_2$, the players' answers reveal which branch they have chosen. There is therefore a natural way to make a certification protocol based on $G_1\lor G_2$ still sound: the verifier rejects whenever the players choose $G_1$ and otherwise applies the $G_2$-test, which may be chosen to robustly self-test $\wtd{\mcS}$. Although this changes the formal acceptance predicate, it uses only information already contained in the players' answers and removes the obstruction to soundness caused by the strategies converging to $p_1$.

Thus, although this example answers \Cref{Q1} negatively for non-local games, it does not fully settle the soundness question motivating it. 
%The failure of robustness comes from the other distinguishable correlation $p_1$, rather than from correlations approaching the self-tested correlation $p_2$. 
It leaves open whether exact self-testing of a \emph{single} quantum correlation always yields a sound device-independent protocol. In other words:
\begin{question}\label{Q2}
    If a quantum correlation is a self-test, must it be a robust self-test?
\end{question}

We answer this sharper question in the negative.

\begin{theorem}\label{thm:main}
 There exists a synchronous quantum correlation $\wtd{p}$ such that $\wtd{p}$ is a self-test but not a robust self-test. 
\end{theorem}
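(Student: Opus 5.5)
The plan follows the route announced in the abstract: build an algebra with the right tracial structure, encode it as a synchronous correlation, and apply the Zhao--Kar characterization.

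\textbf{Step 1 (characterization).} We use the Zhao--Kar characterization for synchronous correlations. A synchronous $p\in C_q$ is a self-test exactly when there is a unique tracial state on the synchronous algebra $\mathcal{A}(X,A)$ implementing $p$. It is a robust self-test exactly when, in addition, every amenable tracial state implementing $p$, equivalently every implementing state arising as a limit point from $C_{qa}$, coincides with that finite-dimensional one. It therefore suffices to find a synchronous $\wtd{p}$ implemented by a unique finite-dimensional tracial state $\tau_0$ and also by an amenable tracial state $\tau_1\neq\tau_0$. The trace $\tau_1$ is then approximable by finite-dimensional strategies, via amenability and Connes embedding for hyperfinite traces, which gives correlations converging to $\wtd{p}$ whose strategies stay far from the ideal one.

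\textbf{Step 2 (encoding construction).} Let $\mathcal{B}=C^*\langle u_1,\dots,u_n \mid r_j=0\rangle$ be finitely presented with unitary generators and real-coefficient relations. We encode it in a binary synchronous game or correlation as follows. Each generator is split into projections via $u=\frac{1}{2}(1-2P)+\dots$; because $u$ need not be self-adjoint, we use real and imaginary parts, or self-adjoint unitary dilations. Each relation is turned into question pairs whose synchronous probabilities $\tau(P_xP_y)$ pin down $\tau(r_j^*r_j)=0$. The point of restricting to real coefficients is that these relations can be expressed through finitely many trace values of products of projections, fixed by the correlation. The target statement is this: the tracial states on $\mathcal{A}(X,A)$ implementing $\wtd{p}$ correspond bijectively to the tracial states on $\mathcal{B}$, and amenable traces correspond to amenable traces.

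\textbf{Step 3 (the algebra).} We choose $\mathcal{B}$ with a unique finite-dimensional trace and a distinct amenable trace. A candidate is an amalgam in which a relation forces $u$ to have prescribed spectral weights. In finite dimensions these weights must be rational or combinatorially rigid, which forces a specific representation. An infinite-dimensional amenable representation, for example through a rotation-algebra or Baumslag-type relation such as $vuv^*=u^2$, is then allowed with a different trace value. To keep the correlation single, we tie both traces to the same values on the generators that the correlation sees. This is done by a free product or direct sum trick that makes their observed moments agree while the traces differ elsewhere.

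\textbf{Main obstacle.} Step 2 is the main difficulty. We must show that the implementing tracial states are exactly the traces of $\mathcal{B}$: the correlation must fix all the needed moments without fixing more, and the encoding must preserve amenability and finite-dimensionality. We would first attempt the encoding for self-adjoint unitaries, then reduce general unitaries to that case via $2\times 2$ self-adjoint dilations using the real-coefficient hypothesis.
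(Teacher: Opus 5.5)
Your outline has the same architecture as the paper: the tracial characterization, an encoding theorem for algebras with a finite real-coefficient unitary presentation, and an algebra with a unique finite-dimensional trace plus a distinct amenable one. There is a genuine gap, however. Both load-bearing steps are stated as goals, and the mechanisms you sketch would not deliver them.

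The decisive requirement in Step 2 is that the pair values $\tau(P_xP_y)$ be identical for $\tau_0$ and $\tau_1$. Your bijection target in fact demands this for every trace on $\mathcal{B}$, and the paper proves exactly that in \Cref{thm:projection-construction}(b). Generating projections built from the generators inside $\mathcal{B}$ itself cannot achieve this when $\tau_0$ is a character, as in the natural examples. Then $\tau_0(P_x)\in\{0,1\}$, and any $\tau_1$ with the same values sends every $P_x$ to that scalar in its GNS representation, so $\tau_1=\tau_0$. Indeed the correlation would be deterministic, hence robust. Naive $2\times 2$ self-adjoint dilations do not cure this: $G(u):=\frac12\bigl(\Id+\bigl(\begin{smallmatrix}0&u\\u^*&0\end{smallmatrix}\bigr)\bigr)$ gives $(\tr_2\otimes\tau)(G(u)G(v))=\frac14\bigl(1+\mathrm{Re}\,\tau(uv^*)\bigr)$, which depends on $\tau$. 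The missing idea is the paper's large amplification $M_N(\mcA)$. There the generators appear only off the diagonal: each has a reference copy on its own index pair $s_i,t_i$, and all suffixes of the monomials are stored in one column $\eta$. The family is arranged so that the overlaps $\zeta_P^*\zeta_Q$ of its rank-one members are scalar multiples of unitaries, and the remaining members are real linear combinations of these. Hence every $(\tr_N\otimes\tau)(P_xP_y)$ is independent of $\tau$. A ``free product or direct sum trick'' that ties observed moments is not a substitute for this.

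The converse direction, \Cref{thm:projection-construction}(c), is also missing. From an arbitrary trace on $\PVM{X,\Z_2}$ with the prescribed pair values you must rebuild matrix units, unitaries $U_i$, and the relations $R_j(U)=0$. ``Question pairs pinning down $\tau(r_j^*r_j)=0$'' cannot do this directly. Pair probabilities are degree-two data, while $\tau(r_j^*r_j)$ for a relation like $uv-zvu$ is a higher-degree moment. The paper proceeds as follows.
\begin{itemize}
\item The pair data determine $\tau(L^*L)$ for every real linear combination $L$ of the $e_{x,1}$ and $\Id$. So faithfulness of the trace on the GNS von Neumann algebra transfers every real linear identity and every vanishing product among the model projections.
\item \Cref{lem:pc-two-projections}, applied through the projections $K$, then yields matrix units and unitaries in the corner $q'_1\mcM q'_1$.
\item \Cref{lem:pc-equal-coefficients} forces repeated occurrences of a generator to coincide.
\item The column $\eta$ is recovered from the recursion $c_i=h_ic_{\nu(i)}$ by a trace-plus-subprojection argument. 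Its orthogonality to the real scalar columns $\xi_j$ is exactly $R_j(U)=0$.
\end{itemize}

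Step 3 is likewise only heuristic, and you never specify an algebra with a unique finite-dimensional trace. The relation $vuv^*=u^2$ alone has nontrivial finite-dimensional representations: take $u=\mathrm{diag}(\omega,\omega^2)$ with $\omega$ a primitive cube root of unity and $v$ the swap. The paper's relations $zu=uz$, $zv=vz$, $uv=zvu$, $(2z+2z^*-\Id)(u+v-2\Id)=0$ work for three reasons. In finite dimensions, determinants make the eigenvalues of $z$ roots of unity. An algebraic-integer argument then makes $2z+2z^*-\Id$ invertible, forcing $u=v=\Id$. Finally, $\zeta=(1+i\sqrt{15})/4$ leaves an irrational rotation quotient carrying the amenable trace.

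Last, \Cref{thm:tracial-self-testing} requires $\wtd{p}$ to be extreme, and you never establish this. Also, self-testing corresponds to uniqueness among \emph{finite-dimensional} implementing traces, not among all of them as your Step 1 first says. The paper gets extremality from \Cref{lem:unique-extreme-trace-implies-extreme-correlation}, using that $\wtd{\tau}_0$ is $\tr_N$ composed with a surjection onto $M_N$.
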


A quantum correlation $p\in C_q(X,X,A,A)$ is said to be \emph{synchronous} if $p(a,b|x,x)=0$ for every $x\in X$ and distinct $a,b\in A$. We write $C_q^s(X,A)$ for the set of such correlations. Despite this simple definition, synchronous correlations admit a rich operator-algebraic description: correlations arising from different models of entanglement are characterized by different kinds of \emph{tracial states}. 

More concretely, let $\PVM{X,A}$ denote the universal $C^*$-algebra generated by ``abstract" PVMs $\{e_{x,a}:a\in A\}, x\in X$. A quantum correlation $p\in C_q(X,X,A,A)$ is synchronous if and only if there is a finite-dimensional tracial state $\tau$ on $\PVM{X,A}$ \emph{implementing} $p$, in the sense that
\begin{equation*}
    p(a,b|x,y)=\tau(e_{x,a}e_{y,b})
\end{equation*}
for all $x,y\in X$ and $a,b\in A$. Here, a tracial state is called finite-dimensional if its GNS representation is finite-dimensional. Moreover, synchronous quantum-approximate correlations  correlations are characterized by \emph{amenable} tracial states \cite{PSSTW16,KPS18,HMPS19}.

This tracial framework has become an important tool for connecting quantum interactive proofs with operator algebras, including in the developments surrounding $\mathrm{MIP}^*=\mathrm{RE}$ and the resulting negative solution to Connes' embedding problem \cite{JNVWY22,MNY22,NZ23}. It also provides the starting point for our construction.

\subsection{Proof approach and technical contributions}
Our proof is based precisely on this tracial-state characterization. Tracial states not only describe synchronous correlations, but also determine when such a correlation is a self-test or a robust self-test. The following operator-algebraic characterizations were first established in \cite{Zhao24} and independently proved in \cite{Kar25}. Suppose $p\in C_q^s(X,A)$ is an \emph{extreme} point in $C_q(X,X,A,A)$. Then:
\begin{enumerate}[(a)]
    \item $p$ is a self-test if and only if there is a \emph{unique finite-dimensional} tracial state on $\PVM{X,A}$ implementing $p$.
    \item $p$ is a robust self-test if and only if there is a \emph{unique amenable} tracial state on $\PVM{X,A}$ implementing $p$.
\end{enumerate}

%Since $C_q^s(X,A)$ is a convex face of $C_q(X,X,A,A)$, a synchronous correlation $p$ is extreme in $C_q(X,X,A,A)$ if and only if it is extreme in $C_q^s(X,A)$; we call such a correlation an\emph{extreme synchronous quantum correlation}. This extremality assumption is crucial for the ``if'' directions of both (a) and (b). Moreover, for an extreme quantum correlation, robust self-testing against pure-state strategies extends to mixed-state strategies~\cite{BCKLNS25}; see \Cref{subsec:lifting} for detailed definitions and the corresponding assumption-lifting results.

Note that every finite-dimensional tracial state is amenable. To prove \Cref{thm:main}, we therefore seek an extreme synchronous correlation that has a unique finite-dimensional implementing tracial state but also a distinct amenable implementing tracial state.

We start by constructing a $C^*$-algebra $\mcA_{\mathrm{rot}}$ \footnote{The subscript $\mathrm{rot}$ stands for ``rotation'', whose meaning will be clear from the context.} with analogous uniqueness and separation properties for its tracial states. This algebra admits a simple finite presentation by unitary generators and relations with real coefficients, and has the decomposition
\begin{equation*}
    \mcA_{\mathrm{rot}}\cong\C\oplus\mcA_\infty.
\end{equation*}
Here $\mcA_\infty$ contains irrational rotation relations among unitaries. It has no finite-dimensional tracial states, but carries an amenable $\tau_\infty$. Consequently, the character $\tau_0$ defined by
\begin{equation*}
    \tau_0(c\oplus b)=c
\end{equation*}
is the unique finite-dimensional tracial state on $\mcA_{\mathrm{rot}}$, while the tracial state $\tau_1$ defined by
\begin{equation*}
    \tau_1(c\oplus b)=\tau_\infty(b)
\end{equation*}
is a distinct amenable tracial state. Thus the desired uniqueness and separation are already present at the level of this algebra. We give its presentation and establish these properties in \Cref{sec:rotation-example}.

To transfer these properties to the implementing tracial states for synchronous correlations, we establish the following general construction. This is our main technical contribution.

\begin{theorem}[Informal version of \Cref{thm:projection-construction}]
\label{thm:projection-construction-intro}
Suppose $\mcA$ admits a tracial state and has a presentation by finitely many unitary generators and finitely many $*$-polynomial relations with real coefficients. There exist a positive integer $N$, a finite set $X$, and projections $P_x\in M_N(\mcA)$, $x\in X$, with the following properties.
\begin{enumerate}
    \item[(P1)] The projections $P_x$ generate $M_N(\mcA)$. Thus the $*$-homomorphism $\lambda:\PVM{X,\Z_2}\arr M_N(\mcA)$ sending $e_{x,1}\mapsto P_x$ is surjective.

    \item[(P2)] For any $x,y\in X$, the value $(\tr_N\otimes\tau)(P_xP_y)$ is independent of the choice of tracial state $\tau$ on $\mcA$. Denote this value by $p_{x,y}$.
    \item[(P3)] Every tracial state $\varphi$ on $\PVM{X,\Z_2}$ satisfying $\varphi(e_{x,1}e_{y,1})=p_{x,y}$ for all $x,y\in X$ factors through $\lambda$, i.e., $\varphi=(\tr_N\otimes\tau_{\varphi})\circ\lambda$ for a unique tracial state $\tau_{\varphi}$ on $\mcA$.
\end{enumerate}
\end{theorem}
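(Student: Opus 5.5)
We will build the projections in stages. First we encode each unitary generator $u_i$ ($i=1,\dots,n$) and its relations by projections with rigid pairwise inner products. The standard device is the $2\times 2$ reflection trick. For a unitary $u$, the projection
\begin{equation*}
Q_u=\tfrac12\begin{pmatrix}1&u\\ u^*&1\end{pmatrix}\in M_2(\mcA)
\end{equation*}
satisfies $\tr_2\otimes\tau(Q_u)=1/2$ for every trace. Together with the constant projections $E_{11}$ and $Q_1$ it recovers $u$ as a corner of $Q_u$. Products such as $(\tr_2\otimes\tau)(Q_uE_{11})=1/4$ are trace-independent.

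The problem lies in pairs like $(\tr\otimes\tau)(Q_uQ_v)=\tfrac14(2+\Re\tau(u v^*))$, which depend on $\tau$. To fix this, we rewrite the presentation so that every relation becomes a word of length at most two. We introduce auxiliary unitaries for subwords, and use real coefficients and polynomial identities so that relations become linear in such products. For $\tau(uv^*)$ we then want a rigid value, for instance $uv^*=w$ with $w$ another generator, or $uv^*$ equal to a fixed scalar combination. This will not make things rigid on its own. Instead we choose $N$ large and place the generators in mutually "orthogonal" blocks via matrix units $E_{ij}\otimes 1$, so that $P_xP_y$ has trace-independent trace except for pairs we explicitly control.

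Concretely, we take $X$ to consist of three kinds of projection. The first kind is the constant projections $E_{ii}$ and $\tfrac12(E_{ii}+E_{ij}+E_{ji}+E_{jj})$, which generate $M_N\otimes 1$ and have trace-independent products. The second kind is one projection $\tfrac12(E_{ii}+u E_{ij}+u^*E_{ji}+E_{jj})$ per generator, each on its own pair of indices. The third kind encodes each relation, written as a real-coefficient combination $\sum c_k w_k=0$ of words of length at most two. For each relation we add projections whose products with generator projections compute $\Re\tau$ of terms. The realness of the coefficients is what lets these be self-adjoint combinations, e.g.\ via $u+u^*$ for real relations, so that the relation forces those traces to equal fixed numbers. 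Thus (P1) holds because the matrix units together with the corners recover each $u_i\otimes E_{11}$, and hence all of $M_N(\mcA)$. For (P2), each pairwise product $P_xP_y$ has trace that is either a constant or $\Re\tau$ of a word that the relations force to be constant.

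For (P3), let $\varphi$ be a tracial state with the prescribed values. The first step is to show that in the GNS representation $\pi_\varphi$, the images of the first-kind projections satisfy the relations of the matrix units. This uses the rigidity principle: equality in the Cauchy--Schwarz inequality, for example $\varphi(pq)=\varphi(p)$, forces $\pi(p)\le\pi(q)$, and $\varphi(pq)=0$ forces orthogonality. Hence $\pi_\varphi(\cdot)$ contains a copy of $M_N$, and $\pi_\varphi\cong \mathrm{id}_{M_N}\otimes\rho$ on the commutant corner. Next we show that the compressed operators $2E_{11}P_{u_i}E_{22}$ (after identification) are unitaries $U_i$ in that corner and satisfy the relations. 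Unitarity follows from the projection identity together with trace-equality arguments, and the relations follow because the traces force $\|\sum c_kW_k\|_{2,\varphi}=0$; faithfulness on the GNS image then gives the operator equation. Universality of $\mcA$ then yields a homomorphism $\mcA\to$ corner, and $\tau_\varphi$ is $N\varphi$ restricted to $E_{11}$ composed with it. Uniqueness is immediate because $\tr_N\otimes\tau$ determines $\tau$ on $1\otimes\mcA$.

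The main obstacle is the design of the third kind of projection. The relations must become $2$-norm conditions, meaning that $\varphi((\sum c_kW_k)^*(\sum c_kW_k))$ should be computed from pairwise products $\varphi(e_xe_y)$ alone. That requires reducing relations to degree at most one in the encoded unitaries, with each term realized as a single corner of one projection. I would first try Slofstra-style linearization, which introduces new unitary generators for subwords so that all relations become $uv=w$ or $u=u^*$-type relations. Each such relation can be checked by a single projection $\tfrac12(E_{ii}+uE_{ij}+\ldots)$ compared against the projection for $w$. If that fails, the fallback is to enlarge $N$ to allow longer chains of corners.
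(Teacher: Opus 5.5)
You correctly identify the third kind of projection as the main obstacle. Your (P1) argument and the last step of (P3) (universality of $\mcA$ and uniqueness of $\tau_\varphi$) are fine. However, you never construct the third kind, and the two mechanisms you do describe both fail.

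\emph{First gap: rigidity in (P3).} In your first two kinds, distinct members share at most one support index $c$; a scalar block and a unitary block on the same pair would give the trace-dependent value $\tfrac{1}{2N}(1+\operatorname{Re}\tau(u))$. When only one index is shared, $(\tr_N\otimes\tau)(PQ)=\tfrac1N\tau(P_{cc}Q_{cc})$, so the prescribed pair values see only diagonal entries. They are therefore reproduced by commuting projections. Work in $M_N\otimes L^\infty[0,1]$, keep $E_{ii}\otimes 1$, and replace each $2\times2$ block projection on $\{i,j\}$ by $E_{ii}\otimes\chi_A+E_{jj}\otimes\chi_{A^c}$, using an independent event $A$ of measure $\tfrac12$ for each member. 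All pair values match. Equality in Cauchy--Schwarz only yields orthogonality and order relations. It cannot produce matrix units in the GNS image, nor make $2E_{11}P_{u_i}E_{22}$ unitary. The paper fixes this by adding, for each $G_{a,b}(h)$, the projection $K_{a,b}(h)=\tfrac12(q_a+q_b)+\tfrac1{\sqrt2}(G_{a,b}(h)-q_b)$. A real-linear identity among members transfers to the GNS representation, because the $2$-norm of the combination is computed from pair values. Then $K'^2=K'$ forces $q'_aG'q'_a=\tfrac12q'_a$ and $q'_bG'q'_b=\tfrac12q'_b$, so $2q'_aG'q'_b$ is a partial isometry (\Cref{lem:pc-two-projections}).

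\emph{Second gap: the relations and (P2).} You claim the remaining pair values are ``$\operatorname{Re}\tau$ of a word that the relations force to be constant''. This is false. A relation $\sum_kc_kw_k=0$ does not fix $\tau(w_k)$ or $\tau(w_kw_l^*)$; for example, $\tau_0(z)=1\neq\zeta=\tau_1(z)$ in $\mcA_{\mathrm{rot}}$. Its only trace-independent consequence is $\tau(R^*R)=0$. Linearizing to relations of type $uv=w$ or $u=u^*$ covers only group relations, not a genuinely real-linear relation such as the last one in \eqref{eq:rot-presentation}. After any degree reduction you are still left with relations $\sum_kc_kw_k=0$.

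The missing idea is to realize $\tau(R_j^*R_j)$ as a single pair value while keeping every other pair value trace-independent. The paper does this as follows.
\begin{enumerate}
\item It stores all monomials and their suffixes as the entries $c_i$ of one normalized column $\eta$ and sets $E=\eta\eta^*$.
\item It takes scalar-column projections $F_j$, so that $(\tr_N\otimes\tau)(EF_j)=\tau(R_j^*R_j)/(Nk\Lambda_j)=0$.
\item It encodes the recursion $c_i=h_ic_{\nu(i)}$ by projections $G_{a,b}(h)$ consistent with $\eta$. One diagonal unitary $W$ then conjugates $E$ and all of these to scalar matrices (\Cref{lem:pc-column-projection}(b)). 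The generator projections $G_{s_i,t_i}(u_i)$ sit on indices disjoint from $\mcV$.
\end{enumerate}
Because each generator now occurs at several positions, (P3) also requires identifying these copies in the GNS representation without creating trace-dependent pair values. The paper's $T_{a,b}(h)$ does this: it is a real-linear combination of other members, and its projection property forces the recovered copy $H$ to equal $h(U)$ (\Cref{lem:pc-equal-coefficients}). Your proposal has no counterpart to either device, so neither (P2) nor (P3) is established.
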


Apply \Cref{thm:projection-construction-intro} to $\mcA_{\mathrm{rot}}$ and set $\wtd{\tau}_0:=\tr_N\otimes\tau_0,\wtd{\tau}_1:=\tr_N\otimes\tau_1$. Then $\wtd{\tau}_0$ is the unique finite-dimensional tracial state on $M_N(\mcA_{\mathrm{rot}})$. Writing $P_{x,1}:=P_x$ and $P_{x,0}:=\Id-P_x$, it defines a synchronous quantum correlation $\wtd{p}\in C_q^s(X,\Z_2)$ via
\begin{equation}
    \wtd{p}(a,b|x,y):=\wtd{\tau}_0(P_{x,a}P_{y,b}).
    \label{eq:Intro_wtd_correlation}
\end{equation}
By (P2), $\wtd{\tau}_1$ yields the same correlation. Thus the pullbacks $\wtd{\tau}_0\circ\lambda$ and $\wtd{\tau}_1\circ\lambda$ are respectively a finite-dimensional and an amenable tracial state on $\PVM{X,\Z_2}$ implementing $\wtd{p}$, and they are distinct. 

The existence of $\wtd{\tau}_1\circ\lambda$ already provides the desired obstruction to robustness. To prove $\wtd{p}$ is a self-test, it remains to establish that $\wtd{\tau}_0\circ\lambda$ is indeed the unique finite-dimensional trace implementing $\wtd{p}$. This follows from (P3): every tracial state on $\PVM{X,\Z_2}$ implementing $\wtd{p}$ must factor through $\lambda$, so the desired uniqueness of $\wtd{\tau}_0\circ\lambda$ follows from the uniqueness of $\wtd{\tau}_0$. The same factorization argument also shows that $\wtd{p}$ is extreme. Therefore, by (a) and (b), $\wtd{p}$ is a self-test but not a robust self-test. 

This unique $\wtd{\tau}_0\circ\lambda$ also gives a natural choice of ideal strategy $\wtd{S}$. Indeed, the GNS space of $\wtd{\tau}_0=\tr_N\otimes\tau_0$ is naturally identified with $L^2\big(M_N(\C),\tr_N\big)\cong \C^N\otimes\C^N$, under which the GNS vector corresponds to a maximally entangled state $\ket{\varphi_N}\in\C^N\otimes\C^N$. Thus, $\wtd{S}$ may be chosen to employ this $\ket{\varphi_N}$. We can therefore conclude that $\wtd{p}$ self-tests a strategy employing a maximally entangled state of Schmidt rank $N$, but this self-test is not robust.

The correlation $\wtd{p}$ has binary answer set $\Z_2$, while the size of its question set $X$ is far from optimized. We give an explicit count in \Cref{sec:rotation-example}, with $\abs{X}=250$ and $N=36$. In the opposite direction, we show in \Cref{sec:lower} that this phenomenon cannot occur in the smallest synchronous binary scenarios: any synchronous binary correlation that is a self-test but not a robust self-test must have at least four questions, and its ideal state must have Schmidt rank at least three.

Finally, we remark that properties (P1) and (P2) in \Cref{thm:projection-construction-intro} hold for every finitely generated unital $C^*$-algebra, without any finite-presentation or real-coefficient assumption. The hypothesis on the presentation is required for the proof of (P3). We expect this construction to have further applications connecting tracial properties of $C^*$-algebras with nonlocal correlations and device-independent certification.

\subsection*{Acknowledgments} R.C is supported by National Natural Science Foundation of China (No 62602562). Y.Z. is supported by the European Union via an ERC grant (QInteract, Grant No 101078107).

\section{Preliminaries}

All Hilbert spaces are complex and separable. We use bra-ket notation, with inner products conjugate-linear in the first variable and linear in the second. Thus $\ket{\xi}\bra{\eta}$ denotes the operator sending $\ket{\zeta}$ to $\langle\eta|\zeta\rangle\ket{\xi}$. The norm of a vector and the operator norm are both denoted by $\norm{\cdot}$, with the meaning clear from the context.

\subsection{Algebras and states}

A \emph{$C^*$-algebra} is a complex Banach $*$-algebra $\mcA$ satisfying $\norm{a^*a}=\norm{a}^2$ for all $a\in\mcA$. All $C^*$-algebras and $*$-homomorphisms in this paper are assumed to be unital. We write $\Id$ for the identity, or $\Id_{\mcA}$ when the algebra needs to be specified. The algebra of bounded operators on a Hilbert space $\mcH$ is denoted by $\msB(\mcH)$, with identity $\Id_{\mcH}$. An element is \emph{positive}, written $a\geq0$, if it has the form $a=b^*b$. A \emph{projection} satisfies $P=P^*=P^2$, a \emph{unitary} satisfies $u^*u=uu^*=\Id$, and a \emph{partial isometry} is an element $v$ for which $v^*v$ and $vv^*$ are projections. For projections, $P\leq Q$ means $PQ=QP=P$. For a nonzero projection $q\in\mcA$, the \emph{corner} $q\mcA q$ is a $C^*$-algebra with identity $q$.

A \emph{$*$-representation} of $\mcA$ on $\mcH$ is a $*$-homomorphism $\pi:\mcA\arr\msB(\mcH)$. It is \emph{faithful} if it is injective, \emph{finite-dimensional} if $\mcH$ is finite-dimensional, and \emph{irreducible} if the only closed subspaces invariant under $\pi(\mcA)$ are $\{0\}$ and $\mcH$. The algebra $\mcA$ is \emph{residually finite-dimensional} if its finite-dimensional representations separate its elements. A family \emph{generates} $\mcA$ if the unital $*$-algebra it generates is norm-dense in $\mcA$. 

A presentation
\begin{equation*}
    \mcA=C^*\ang{g_1,\ldots,g_d\mid\mcR}
\end{equation*}
means that $\mcA$ is generated by $g_1,\ldots,g_d$ subject to the relations $R=0$ for $R\in\mcR$, where $\mcR$ is a set of noncommutative $*$-polynomials in the formal generators $g_1,\ldots,g_d$ (so $g_1^*,\ldots,g_d^*$ are also allowed variables). It has the following universal property: every tuple $(b_1,\ldots,b_d)$ in a unital $C^*$-algebra $\mcB$ satisfying $R(b_1,\ldots,b_d)=0$ for all $R\in\mcR$ determines a unique $*$-homomorphism $\lambda:\mcA\arr\mcB$ with $\lambda(g_i)=b_i$. We sometimes write $R=0$ in place of $R$ when listing relations. The algebra $\mcA$ is \emph{finitely presented} if it admits such a presentation with finitely many generators and relations. This notation is used only when the universal algebra exists. Imposing additional equations means taking the quotient by the norm-closed two-sided $*$-ideal generated by those equations.

We write $M_n:=M_n(\C)=\msB(\C^n)$ and use the standard basis $\{\ket{i}:1\leq i\leq n\}$, with matrix units $\ket{i}\bra{j}$. The identity matrix is denoted by $\Id_n$. The ordinary and normalized matrix traces are
\begin{equation*}
    \operatorname{Tr}(a):=\sum_{i=1}^n a_{i,i}~\text{ and }~ \tr_n(a):=\frac1n\operatorname{Tr}(a).
\end{equation*}
The space of $n$-by-$m$ matrices over $\mcA$ is denoted by $M_{n,m}(\mcA)$, and we write $M_{n,m}:=M_{n,m}(\C)$ and $M_n(\mcA):=M_{n,n}(\mcA)$. In particular, elements of $M_{n,1}(\mcA)$ are called \emph{columns}, with the usual matrix multiplication and adjoint.

A \emph{state} on $\mcA$ is a linear functional $f:\mcA\arr\C$ such that $f(a^*a)\geq0$ for all $a\in\mcA$ and $f(\Id)=1$. It is \emph{faithful} if $f(a^*a)=0$ implies $a=0$. A state $\tau$ is \emph{tracial} if $\tau(ab)=\tau(ba)$ for all $a,b\in\mcA$, and $T(\mcA)$ denotes the set of tracial states. Weak-$*$ convergence of states means pointwise convergence on $\mcA$. The \emph{GNS representation} of $f$ consists of a representation $\pi_f:\mcA\arr\msB(\mcH_f)$ and a unit vector $\ket{f}$ such that
\begin{equation*}
    f(a)=\bra{f}\pi_f(a)\ket{f} ~\text{ and }~ \mcH_f=\overline{\{\pi_f(a)\ket{f}:a\in\mcA\}}.
\end{equation*}
The GNS representation $(\mcH_f,\pi_f,\ket{f})$ is unique up to unitary equivalence. Concretely, $\mcH_f$ is the completion of the vector-space quotient $\mcA/\{a:f(a^*a)=0\}$ for the inner product $\langle[a],[b]\rangle=f(a^*b)$, the representation acts by left multiplication, and $\ket{f}=[\Id]$.

For a tracial state $\tau$, the space $L^2(\mcA,\tau)$ is the completion of $\mcA$ in the seminorm $\norm{a}_{2,\tau}:=\tau(a^*a)^{1/2}$ after quotienting out its null space. Thus the GNS construction canonically identifies $L^2(\mcA,\tau)=\mcH_\tau$, with $[a]$ corresponding to $\pi_\tau(a)\ket{\tau}$.

A state $f$ is \emph{finite-dimensional} if its GNS Hilbert space $\mcH_f$ is finite-dimensional, equivalently, if it factors through a finite-dimensional $C^*$-algebra. We write $T_{\mathrm{fin}}(\mcA)$ for the convex set of finite-dimensional tracial states. A \emph{character} is a $*$-homomorphism $\chi:\mcA\arr\C$; it is a tracial state with one-dimensional GNS Hilbert space.

Recall that an element $s$ of a convex set $K$ is \emph{extreme} if $s=ts_1+(1-t)s_2$, with $0<t<1$ and $s_1,s_2\in K$, forces $s_1=s_2=s$. A convex subset $F\subseteq K$ is a \emph{face} if $ts_1+(1-t)s_2\in F$ under the same conditions forces $s_1,s_2\in F$.

For a tracial state $\tau$, the set $\{a:\tau(a^*a)=0\}$ is the closed two-sided ideal $\ker\pi_\tau$, and $\tau$ induces a faithful trace on the quotient $\mcA/\ker\pi_\tau\cong\pi_\tau(\mcA)$. We write
\begin{equation*}
    \mcM:=\pi_\tau(\mcA)''\subseteq\msB(\mcH_\tau)
\end{equation*}
for its double commutant, equivalently its weak-operator closure, which is a von Neumann algebra. The vector state $\tau_{\mcM}(x):=\bra{\tau}x\ket{\tau}$ is a faithful normal tracial state on $\mcM$ and satisfies $\tau=\tau_{\mcM}\circ\pi_\tau$. For projections $P,Q$ and a tracial state $\tau$, we will repeatedly use
\begin{equation*}
    \tau(PQ)=\tau(PQP)=\tau\bigl((QP)^*(QP)\bigr)\geq0.
\end{equation*}
Thus, when $\tau$ is faithful, $\tau(PQ)=0$ implies $PQ=0$; likewise, $P\leq Q$ and $\tau(P)=\tau(Q)$ imply $P=Q$.

The \emph{minimal tensor product} $\mcA\otimes_{\min}\mcB$ is the completion of the algebraic tensor product $\mcA\odot\mcB$ in the operator norm obtained from faithful representations of $\mcA$ and $\mcB$ on separate Hilbert spaces. This norm does not depend on the chosen faithful representations.

We always identify
\begin{equation*}
    M_{n,m}(\mcA)=M_{n,m}\otimes\mcA,
\end{equation*}
where the tensor product here is algebraic: representing $\mcA$ faithfully on a Hilbert space $\mcH$, we view these matrices as operators from $\C^m\otimes\mcH$ to $\C^n\otimes\mcH$, with the corresponding operator norm. Thus we use
\begin{equation*}
    (a_{i,j})_{1\leq i\leq n,\,1\leq j\leq m} \qquad\text{and}\qquad \sum_{i=1}^n\sum_{j=1}^m \ket{i}\bra{j}\otimes a_{i,j}
\end{equation*}
interchangeably for elements of $M_{n,m}(\mcA)$, where each $a_{i,j}\in\mcA$. In the square case, this identification realizes the \emph{matrix amplification} $M_n(\mcA)$ as the $C^*$-algebra $M_n\otimes_{\min}\mcA$.\footnote{Since $M_n$ is nuclear, we also have $M_n\otimes_{\min}\mcA=M_n\otimes_{\max}\mcA$.} A $*$-homomorphism $\lambda:\mcA\arr\mcB$ amplifies entrywise to $\operatorname{id}_{M_n}\otimes\lambda:M_n(\mcA)\arr M_n(\mcB)$. For $\tau\in T(\mcA)$, its matrix amplification on $M_n(\mcA)$ is the tracial state
\begin{equation*}
    (\tr_n\otimes\tau)\bigl((a_{i,j})_{i,j=1}^n\bigr) :=\frac1n\sum_{i=1}^n\tau(a_{i,i}).
\end{equation*}
Every tracial state $\sigma$ on $M_n(\mcA)$ has this form for a unique $\tau$, recovered by $\tau(a)=\sigma(\Id_n\otimes a)$.

The \emph{opposite algebra} $\mcA^{\mathrm{op}}$ has the same underlying vector space, involution, and norm as $\mcA$, with multiplication reversed: $a^{\mathrm{op}}b^{\mathrm{op}}=(ba)^{\mathrm{op}}$. A tracial state $\tau$ on $\mcA$ is \emph{amenable} if the linear functional on $\mcA\odot\mcA^{\mathrm{op}}$ given by
\begin{equation*}
    a\otimes b^{\mathrm{op}}\longmapsto\tau(ab)
\end{equation*}
extends to a state on $\mcA\otimes_{\min}\mcA^{\mathrm{op}}$; see \cite[Theorem~3.1.6]{Brown06}. A $C^*$-algebra $\mcA$ is \emph{nuclear} if $\mcA\odot\mcB$ has a unique $C^*$-norm for every $C^*$-algebra $\mcB$. Every tracial state on a nuclear $C^*$-algebra is amenable~\cite[Section~4.2]{Brown06}.

The following lemma records some properties of tracial states that we need.

\begin{lemma}\label{lemma:pre_traces}
    Let $\mcA$ and $\mcB$ be two $C^*$-algebras, and let $\lambda:\mcA\arr M_N(\mcB)$ be a surjective homomorphism for some $N\geq 1$. For any tracial state $\tau$ on $\mcB$, let $\wtd{\tau}:=(\tr_N\otimes \tau)\circ \lambda$ be a tracial state on $\mcA$. Then:
    \begin{enumerate}[(a)]
        \item The tracial state $\tau$ is finite-dimensional if and only if the corresponding $\wtd{\tau}$ if finite-dimensional.
        \item If $\tau$ is amenable, then the corresponding $\wtd{\tau}$ is amenable.
    \end{enumerate}
\end{lemma}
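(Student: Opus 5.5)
The plan is to reduce both parts to statements about GNS representations and then pass everything through the surjection $\lambda$. The key identification is that the GNS data of $\wtd{\tau}$ can be read off from those of $\tau$. Let $(\mcH_\tau,\pi_\tau,\ket{\tau})$ be the GNS triple of $\tau$. Then the tracial state $\tr_N\otimes\tau$ on $M_N(\mcB)$ has GNS space $L^2(M_N(\mcB),\tr_N\otimes\tau)\cong M_N\otimes \mcH_\tau \cong \C^N\otimes\C^N\otimes\mcH_\tau$. The representation is $\operatorname{id}_{M_N}\otimes\pi_\tau$ acting on the first and last factors, and the cyclic vector is $\frac{1}{\sqrt N}\sum_i \ket{i}\ket{i}\otimes\ket{\tau}$. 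This follows directly from the description $L^2(\mcA,\tau)$ given in the preliminaries: $\norm{(b_{ij})}_{2}^2=\frac1N\sum_{i,j}\tau(b_{ij}^*b_{ij})$.

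Since $\lambda$ is surjective, $\pi_{\tr_N\otimes\tau}\circ\lambda$ has the same image as $\pi_{\tr_N\otimes\tau}$. The same cyclic vector is therefore still cyclic, and by uniqueness of GNS this is the GNS representation of $\wtd{\tau}$. Hence $\mcH_{\wtd{\tau}}\cong \C^{N^2}\otimes\mcH_\tau$, which is finite-dimensional exactly when $\mcH_\tau$ is. This proves (a). One could alternatively argue via ``factors through a finite-dimensional algebra''. For the converse in that route, note that $\tau(b)=\wtd{\tau}(a)$ where $\lambda(a)=\Id_N\otimes b$. Then $\pi_\tau(\mcB)$ embeds as a corner of $\pi_{\wtd{\tau}}(\mcA)$. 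The GNS-dimension count is cleaner, so I would use it.

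For (b), I would first show that $\tr_N\otimes\tau$ is amenable on $M_N(\mcB)$, and then that amenability pulls back along surjective $*$-homomorphisms.

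\emph{Amenability of the amplification.} Use the characterization stated in the preliminaries: the functional $b\otimes c^{\mathrm{op}}\mapsto\tau(bc)$ extends to a state $\mu$ on $\mcB\otimes_{\min}\mcB^{\mathrm{op}}$. We have $M_N(\mcB)\otimes_{\min}M_N(\mcB)^{\mathrm{op}}\cong M_N\otimes M_N^{\mathrm{op}}\otimes(\mcB\otimes_{\min}\mcB^{\mathrm{op}})$. The desired functional is then $(\omega\otimes\mu)$, where $\omega(x\otimes y^{\mathrm{op}})=\tr_N(xy)$ on $M_N\otimes M_N^{\mathrm{op}}$. Checking on elementary tensors: $\omega\otimes\mu$ applied to $(e_{ij}\otimes b)\otimes(e_{kl}\otimes c)^{\mathrm{op}}$ gives $\tr_N(e_{ij}e_{kl})\tau(bc)$, which matches $(\tr_N\otimes\tau)\bigl((e_{ij}\otimes b)(e_{kl}\otimes c)\bigr)$. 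Now $\omega$ is a state: it is the vector state of the maximally entangled vector under the identification $M_N^{\mathrm{op}}\cong M_N$ via transpose. A tensor product of states on a minimal tensor product is a state, so $\omega\otimes\mu$ is a state.

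\emph{Pulling back.} The map $\lambda\otimes\lambda^{\mathrm{op}}:\mcA\odot\mcA^{\mathrm{op}}\to M_N(\mcB)\odot M_N(\mcB)^{\mathrm{op}}$ extends to a $*$-homomorphism on minimal tensor products, because the minimal tensor product is functorial for $*$-homomorphisms. Composing the state above with this map gives a state on $\mcA\otimes_{\min}\mcA^{\mathrm{op}}$, and it sends $a\otimes a'^{\mathrm{op}}\mapsto\wtd{\tau}(aa')$. Hence $\wtd{\tau}$ is amenable. Note that surjectivity is not needed for (b).

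I expect the main obstacle to be minor bookkeeping. The points to get right are the op-algebra identification $M_N(\mcB)^{\mathrm{op}}\cong M_N^{\mathrm{op}}\otimes\mcB^{\mathrm{op}}$ and the positivity of $\omega$. In (a), the point to verify is that cyclicity survives under $\lambda$, which is exactly where surjectivity is used.
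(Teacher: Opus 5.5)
Your proof is correct and follows the paper's route. For part (a) you use the identification $\mcH_{\wtd{\tau}}\cong\C^{N^2}\otimes\mcH_\tau$, with surjectivity of $\lambda$ preserving cyclicity. For part (b) you use preservation of amenability under matrix amplification and under pullback; the paper only cites these as standard facts, and your product state $\omega\otimes\mu$ and the functoriality of $\otimes_{\min}$ correctly supply the details it omits.
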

\begin{proof}
    Since $\lambda$ is surjective, pullback along $\lambda$ preserves the GNS Hilbert space. Moreover, the GNS space of the matrix-amplified trace satisfies
\begin{equation*}
    \mcH_{\wtd{\tau}} \cong \mcH_{\tr_N\otimes\tau} \cong L^2(M_N(\C),\tr_N)\otimes\mcH_\tau \cong \C^{N^2}\otimes\mcH_\tau.
\end{equation*}
This proves part~(a). 

Part~(b) follows from the standard facts that amenability of tracial states is preserved under matrix amplification and under pullback by a $*$-homomorphism.
\end{proof}

Note that parts (a) and (b) have different functorial behaviour. Surjectivity of $\lambda$ is what makes the GNS space of the pullback $\wtd{\tau}$ canonically equivalent to that of $\tr_N\otimes\tau$, and hence gives the equivalence in part~(a). By contrast, surjectivity is not needed for part~(b): amenability is preserved under pullback by any unital $*$-homomorphism. The converse of part~(b) is false in general, because amenability need not descend through quotient maps~\cite{Brown06}.

\begin{lemma}[{\cite[Lemma~5.4(b) and Proposition~5.6]{Zhao24}}]\label{lem:matrix-trace-extreme}
    Let $\mcA$ be a $C^*$-algebra and let
    \begin{equation*}
        \pi:\mcA\longrightarrow M_d(\C)
    \end{equation*}
    be a surjective $*$-homomorphism. Then the tracial state
    \begin{equation*}
        \tau:=\tr_d\circ\pi
    \end{equation*}
    is an extreme point of $T_{\mathrm{fin}}(\mcA)$. Moreover, the GNS representation of $\tau$ is unitarily equivalent to
    \begin{equation*}
        \big( \C^d\otimes\C^d, \pi(\,\cdot\,)\otimes\Id, \ket{\varphi_d} \big),
    \end{equation*}
    where $\ket{\varphi_d} := \frac{1}{\sqrt d} \sum_{i=1}^d \ket{i}\otimes\ket{i}$ is a maximally entangled state in $\C^d\otimes\C^d$.
\end{lemma}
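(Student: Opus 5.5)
The statement to prove is Lemma~\ref{lem:matrix-trace-extreme}: if $\pi:\mcA\arr M_d(\C)$ is a surjective $*$-homomorphism and $\tau:=\tr_d\circ\pi$, then $\tau$ is extreme in $T_{\mathrm{fin}}(\mcA)$, and its GNS representation is $(\C^d\otimes\C^d,\pi(\cdot)\otimes\Id,\ket{\varphi_d})$. I will prove the GNS identification first, because the extremality argument then reduces to the known fact that $M_d$ has a unique tracial state.

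\emph{GNS identification.} Put $\mcH=\C^d\otimes\C^d$, $\rho(a)=\pi(a)\otimes\Id$ and $\xi=\ket{\varphi_d}$.
\begin{itemize}
\item The state matches: $\bra{\varphi_d}(X\otimes\Id)\ket{\varphi_d}=\tr_d(X)$ for every $X\in M_d$, so $\bra{\xi}\rho(a)\ket{\xi}=\tau(a)$.
\item $\xi$ is cyclic: since $\pi$ is surjective, $\rho(\mcA)\xi=\{(X\otimes\Id)\ket{\varphi_d}:X\in M_d\}$.
\item This set is all of $\mcH$: the map $X\mapsto (X\otimes\Id)\ket{\varphi_d}$ is the (scaled) vectorization isomorphism $M_d\cong\C^d\otimes\C^d$, since $(\ket{i}\bra{j}\otimes\Id)\ket{\varphi_d}=\frac1{\sqrt d}\ket{i}\otimes\ket{j}$.
\end{itemize}
Uniqueness of the GNS triple up to unitary equivalence then gives the claimed form. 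In particular $\mcH_\tau$ is finite-dimensional, so $\tau\in T_{\mathrm{fin}}(\mcA)$.

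\emph{Extremality.} Suppose $\tau=t\tau_1+(1-t)\tau_2$ with $0<t<1$ and $\tau_1,\tau_2\in T_{\mathrm{fin}}(\mcA)$; in fact I will only use that each $\tau_i$ is a tracial state.
\begin{itemize}
\item Since $\tau_i\ge0$, we have $t\tau_1\le\tau$. If $a\in\ker\pi$, then $\tau(a^*a)=0$, hence $\tau_1(a^*a)=0$.
\item By Cauchy--Schwarz, $|\tau_1(ba)|^2\le\tau_1(bb^*)\tau_1(a^*a)=0$ for all $b\in\mcA$, so $\tau_1$ vanishes on $\ker\pi$. (Traciality is not even needed for this step.)
\item Therefore $\tau_1$ factors as $\tau_1=\sigma_1\circ\pi$ for a linear functional $\sigma_1$ on $\mcA/\ker\pi\cong M_d$.
\item Surjectivity of $\pi$ makes $\sigma_1$ a tracial state on $M_d$, and uniqueness of the trace on $M_d$ forces $\sigma_1=\tr_d$. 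Hence $\tau_1=\tau$, and likewise $\tau_2=\tau$.
\end{itemize}

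The only delicate point is the domination argument showing $\ker\pi\subseteq\ker\tau_i$, which is handled by Cauchy--Schwarz as above. Everything else is a direct consequence of the surjectivity of $\pi$ and the uniqueness of the trace on $M_d$. This argument shows $\tau$ is extreme even in $T(\mcA)$, which in particular gives the stated extremality in $T_{\mathrm{fin}}(\mcA)$.
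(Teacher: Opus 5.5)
Your proof is correct. Note that the paper does not prove this lemma; it imports it from \cite[Lemma~5.4(b) and Proposition~5.6]{Zhao24}, so there is no in-paper argument to compare yours against.

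Your argument is self-contained and elementary. The GNS part is the vectorization identity $(\ket{i}\bra{j}\otimes\Id)\ket{\varphi_d}=\tfrac{1}{\sqrt d}\ket{i}\otimes\ket{j}$ together with surjectivity of $\pi$ and uniqueness of the GNS triple. You rightly point out that this step is also what places $\tau$ in $T_{\mathrm{fin}}(\mcA)$, which the extremality claim presupposes.

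The extremality part runs as follows.
\begin{enumerate}
\item A convex summand $\tau_1$ of $\tau$ is dominated by $\tau$, so $\tau_1(a^*a)=0$ for every $a\in\ker\pi$.
\item By Cauchy--Schwarz, $\tau_1$ then vanishes on $\ker\pi$.
\item Hence $\tau_1$ descends through the surjection $\pi$ to a tracial state on $M_d(\C)$, which must be $\tr_d$.
\end{enumerate}

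This proves slightly more than stated: $\tau$ is extreme in the whole tracial state space $T(\mcA)$, not just in $T_{\mathrm{fin}}(\mcA)$. The stronger form is fully adequate for how the paper uses the lemma in the proof of \Cref{thm:nonrobust-from-traces}. There it is applied to the surjection $\Phi_0$, both to feed \Cref{lem:unique-extreme-trace-implies-extreme-correlation} and to identify the maximally entangled ideal state.
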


\subsection{Correlations and strategies}
A (bipartite) \emph{Bell scenario} $(X,Y,A,B)$ consists of question sets $X,Y$ and answer sets $A,B$ for Alice and Bob, respectively. All question and answer sets are finite and nonempty.

A \emph{positive operator-valued measure} (POVM) on a Hilbert space $\mcH$ with outcome set $A$ is a family of positive operators $\{P_a:a\in A\}\subseteq\msB(\mcH)$ satisfying $\sum_{a\in A}P_a=\Id_{\mcH}$. It is a \emph{projection-valued measure} (PVM) if every $P_a$ is a projection; these projections are then pairwise orthogonal. A \emph{correlation} in the scenario $(X,Y,A,B)$ is an array of nonnegative real numbers $p(a,b|x,y)$ satisfying $\sum_{a\in A,b\in B}p(a,b|x,y)=1$ for every $x\in X,y\in Y$. A correlation is \emph{binary} if $A=B=\Z_2$. We always use representatives $\{0,1\}$ for $\Z_2$.

\begin{definition}\label{def:strategy}
    A \emph{quantum strategy} in the scenario $(X,Y,A,B)$ is a tuple
    \begin{equation*}
        S=\bigl( \ket{\psi}\in\mcH_A\otimes\mcH_B, \{P_{x,a}\},\{Q_{y,b}\} \bigr),
    \end{equation*}
    where $\mcH_A,\mcH_B$ are finite-dimensional Hilbert spaces, $\ket{\psi}$ is a unit vector, and $\{P_{x,a}:a\in A\}$ and $\{Q_{y,b}:b\in B\}$ are POVMs on $\mcH_A$ and $\mcH_B$, respectively, for every $x\in X$ and $y\in Y$. Its \emph{quantum correlation} is
    \begin{equation*}
        p_S(a,b|x,y) :=\bra{\psi}P_{x,a}\otimes Q_{y,b}\ket{\psi}.
    \end{equation*}
\end{definition}

A strategy is \emph{projective} if all its POVMs are PVMs. For its shared state, the reduced density operators are
\begin{equation*}
    \rho_A:=\operatorname{Tr}_B(\ket{\psi}\bra{\psi})~\text{ and }~\rho_B:=\operatorname{Tr}_A(\ket{\psi}\bra{\psi}),
\end{equation*}
where $\operatorname{Tr}_A$ and $\operatorname{Tr}_B$ denote partial traces. The \emph{Schmidt rank} of $\ket{\psi}$ is $\operatorname{rank}\rho_A=\operatorname{rank}\rho_B$. We call the state, and the strategy, \emph{full-rank} if both reduced density operators are invertible. In particular, both local spaces then have dimension equal to the Schmidt rank. For local spaces of dimension $d$, the state is \emph{maximally entangled} if $\rho_A=\tfrac{1}{d}\Id_{\mcH_A}$ and $\rho_B=\tfrac{1}{d}\Id_{\mcH_B}$, as for $\ket{\varphi_d}$ above.

Unless explicitly stated otherwise, strategies have pure shared states as in \Cref{def:strategy}. When mixed states are allowed, the shared state is instead a density operator $\rho\geq0$ on $\mcH_A\otimes\mcH_B$ with $\operatorname{Tr}(\rho)=1$, and the correlation is given by $p_S(a,b|x,y)=\operatorname{Tr}\bigl(\rho(P_{x,a}\otimes Q_{y,b})\bigr)$. All local Hilbert spaces remain finite-dimensional.

We denote the set of quantum correlations by $C_q(X,Y,A,B)$. It is convex, as can be seen by taking direct sums of strategies. An \emph{extreme quantum correlation} is an extreme point of this set. In general, $C_q(X,Y,A,B)$ is not closed in the Euclidean topology~\cite{Slofstra19}. Its closure is denoted by $C_{qa}(X,Y,A,B)$, the set of \emph{quantum-approximate correlations}. Every quantum correlation is nonsignalling, meaning that the marginal probabilities
\begin{equation*}
    p_A(a|x):=\sum_{b\in B}p(a,b|x,y)~\text{ and }~ p_B(b|y):=\sum_{a\in A}p(a,b|x,y)
\end{equation*}
are independent of $y$ and $x$, respectively.

In a scenario $(X,X,A,A)$, a correlation is \emph{synchronous} if
\begin{equation*}
    p(a,b|x,x)=0 \qquad\text{for all }x\in X\text{ and distinct }a,b\in A.
\end{equation*}
We write $C_q^s(X,A)$ and $C_{qa}^s(X,A)$ for the synchronous correlations in $C_q(X,X,A,A)$ and $C_{qa}(X,X,A,A)$, respectively. Nonnegativity of the probabilities shows that $C_q^s(X,A)$ is a face of $C_q(X,X,A,A)$: a convex combination can vanish on all forbidden synchronous events only if each summand does. Thus a synchronous quantum correlation is extreme in $C_q^s(X,A)$ if and only if it is extreme in $C_q(X,X,A,A)$. We call it an \emph{extreme synchronous quantum correlation}.

A correlation is \emph{deterministic} if there are functions $f:X\arr A$ and $g:Y\arr B$ such that $p(a,b|x,y)=\delta_{a,f(x)}\delta_{b,g(y)}$, and \emph{local} if it is a convex combination of deterministic correlations. Since deterministic correlations are quantum, an extreme quantum correlation that is local must be deterministic.

\section{Self-testing, robustness, and operator-algebraic formulations}\label{sec:self-testing}
In this section, we make precise the notions of self-testing and robust self-testing used in this paper. Given two correlations $p,q\in C_q(X,Y,A,B)$, we define
\begin{equation*}
    \norm{p-q}_1 := \sum_{\substack{x\in X,\ y\in Y\\a\in A,\ b\in B}} \abs{p(a,b|x,y)-q(a,b|x,y)}.
\end{equation*}

\begin{definition}[Local $\epsilon$-dilation]
\label{def:local-epsilon-dilation}
 Let
 \begin{equation*}
      S = \big( \ket{\psi}\in\mcH_A\otimes\mcH_B, \{P_{x,a}\}, \{Q_{y,b}\} \big)
 \end{equation*}
  and 
  \begin{equation*}
       \wtd{S} = \big( \ket{\wtd{\psi}}\in \wtd{\mcH}_A\otimes\wtd{\mcH}_B, \{\wtd{P}_{x,a}\}, \{\wtd{Q}_{y,b}\} \big)
  \end{equation*}
  be two quantum strategies, and let $\epsilon\geq 0$. We say that $\wtd{S}$ is a \emph{local $\epsilon$-dilation} of $S$ if there exist finite-dimensional Hilbert spaces $\mcK_A$ and $\mcK_B$, isometries
    \begin{equation*}
        V_A:\mcH_A\longrightarrow\wtd{\mcH}_A\otimes\mcK_A, \qquad V_B:\mcH_B\longrightarrow\wtd{\mcH}_B\otimes\mcK_B,
    \end{equation*}
    and a unit vector
    \begin{equation*}
        \ket{\kappa}\in\mcK_A\otimes\mcK_B
    \end{equation*}
    such that
    \begin{equation*}
        \norm{ (V_A\otimes V_B) (P_{x,a}\otimes Q_{y,b})\ket{\psi} - \big( (\wtd{P}_{x,a}\otimes\wtd{Q}_{y,b}) \ket{\wtd{\psi}} \big) \otimes\ket{\kappa} } \leq \epsilon
    \end{equation*}
    for all $x\in X$, $y\in Y$, $a\in A$, and $b\in B$. When $\epsilon=0$, we simply say that $\wtd{S}$ is a \emph{local dilation} of $S$.
\end{definition}

In \Cref{def:local-epsilon-dilation}, we implicitly identify
\begin{equation*}
    (\wtd{\mcH}_A\otimes\mcK_A) \otimes (\wtd{\mcH}_B\otimes\mcK_B)
\end{equation*}
with
\begin{equation*}
    (\wtd{\mcH}_A\otimes\wtd{\mcH}_B) \otimes (\mcK_A\otimes\mcK_B)
\end{equation*}
by exchanging the two middle tensor factors. 

\begin{definition}[Self-testing and robust self-testing]
\label{def:self-testing}
    Let $\mcC$ be a class of quantum strategies, let $\wtd{S}\in\mcC$, and let $\wtd{p}:=p_{\wtd{S}}$. 
    
    We say that $\wtd{p}$ \emph{self-tests} $\wtd{S}$ for $\mcC$ if $\wtd{S}$ is a local dilation of every $S\in\mcC$ satisfying
    \begin{equation*}
        p_{S}=\wtd{p}.
    \end{equation*}

    We say that $\wtd{p}$ \emph{robustly self-tests} $\wtd{S}$ for $\mcC$ if there is a function $\delta:\R_{\geq 0}\longrightarrow\R_{\geq 0}$ with $\delta(\epsilon)\to 0$ as $\epsilon\to 0$ such that, for every $\epsilon\geq 0$ and every $S\in\mcC$ satisfying
    \begin{equation*}
        \norm{p_{S}-\wtd{p}}_1\leq\epsilon,
    \end{equation*}
    the strategy $\wtd{S}$ is a local $\delta(\epsilon)$-dilation of $S$.
\end{definition}

The strategy $\wtd{S}$ in \Cref{def:self-testing} is commonly referred to as the \emph{ideal strategy}, and strategies in $\mcC$ are called \emph{employed strategies}. When the ideal strategy is clear from the context, we simply say that $\wtd{p}$ is a self-test or a robust self-test for $\mcC$. The direction in the term ``local dilation'' is important: the ideal strategy $\wtd{S}$ is extracted from the employed strategy $S$, possibly together with an auxiliary system.

\subsection{Lifting assumptions on the employed strategies}\label{subsec:lifting} The definition of self-testing depends a priori on the class $\mcC$ of employed strategies. Suppose $\mcC_1$ and $\mcC_2$ are two classes of quantum strategies with $\mcC_1\subsetneq \mcC_2$. If a correlation self-tests an ideal strategy $\wtd{S}\in\mcC_1$ for the larger class $\mcC_2$, then it automatically self-tests the same strategy $\wtd{S}$ for the more restricted class $\mcC_1$. The converse, however, does not hold in general.

In many cases, restrictions on the employed strategies make a self-testing result considerably easier to prove. In particular, we sometimes assume all employed strategies use projective measurements, or share a full-rank quantum state, resulting in the class $\mcC_{\mathrm{PVM}}$ of PVM strategies and the class $\mcC_{\mathrm{full}}$ of full-rank strategies, respectively. Quantum strategies with arbitrary POVM measurements, as defined in \Cref{def:strategy}, form the class $\mcC_{\mathrm{POVM}}$. Strategies in $\mcC_{\mathrm{POVM}}$ are still assumed to share a pure quantum state.  If arbitrary mixed states are also allowed, we obtain the class $\mcC_{\mathrm{free}}$ of so-called assumption-free strategies.

Since self-testing for a restricted class does not necessarily imply self-testing for a larger class, it is natural to ask when assumptions on the employed strategies can be lifted. This question was first studied in \cite{PSZZ24}, where it was shown that, for synchronous correlations and for extreme binary correlations, self-testing for $\mcC_{\mathrm{PVM}}$ implies self-testing for $\mcC_{\mathrm{POVM}}$.  

The problem was subsequently studied systematically in \cite{BCKLNS25}, for both self-testing and robust self-testing. Their main results can be summarized as follows. Let $\wtd{p}$ be an \emph{extreme} quantum correlation and let $\wtd{S}\in\mcC_{\mathrm{POVM}}$ be an ideal strategy for $\wtd{p}$. If $\wtd{S}$ is full-rank, then robust self-testing for $\mcC_{\mathrm{PVM}}$ implies robust self-testing for $\mcC_{\mathrm{free}}$. Similarly, if $\wtd{S}$ is projective, then robust self-testing for $\mcC_{\mathrm{full}}$ implies robust self-testing for $\mcC_{\mathrm{free}}$. The same implications hold for exact self-testing. Here, extremality is needed in the final step that lifts the purity assumption and allows arbitrary mixed employed states. In particular, when the ideal strategy is both full-rank and projective, a robust self-testing result proved under either the PVM assumption or the full-rank assumption can be upgraded to an assumption-free one.

Synchronous correlations allow an even stronger reduction. Every extreme synchronous quantum correlation admits a \emph{projective maximally entangled strategy} \cite{PSSTW16}, that is, a strategy employing a maximally entangled state and projective measurements. We denote the class of such strategies by $\mcC_{\mathrm{PME}}$. We see that
\begin{equation*}
    \mcC_{\mathrm{PME}} \subseteq \mcC_{\mathrm{PVM}}\cap\mcC_{\mathrm{full}}\subseteq \mcC_{\mathrm{POVM}}\subseteq \mcC_{\mathrm{free}}.
\end{equation*}

The lifting theorem of \cite{VZ25} shows that, for extreme synchronous quantum correlations, robust self-testing for $\mcC_{\mathrm{PME}}$ can be lifted to robust self-testing for $\mcC_{\mathrm{POVM}}$, with a quantitative relation between the two robustness functions. Combining this result with the $\mcC_{\mathrm{POVM}}$-to-$\mcC_{\mathrm{free}}$ lifting from \cite{BCKLNS25}, we conclude that:
\begin{theorem}
    Let $p\in C_q^s(X,A)$ be an extreme synchronous quantum correlation. Then $p$ is a robust self-test for $\mcC_{\mathrm{PME}}$ if and only if $p$ is a robust self-test for $\mcC_{\mathrm{free}}$. The corresponding statement also holds for exact self-testing.
\end{theorem}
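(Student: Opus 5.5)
The statement follows by chaining the two cited lifting results, since the four classes are nested. I would treat the robust and exact versions in parallel.

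For the ``if'' direction, I would use only the inclusion $\mcC_{\mathrm{PME}}\subseteq\mcC_{\mathrm{free}}$. This requires the ideal strategy to lie in $\mcC_{\mathrm{PME}}$, so the first step is to fix that choice. Because $p$ is extreme synchronous, \cite{PSSTW16} supplies a projective maximally entangled strategy $\wtd{S}$ with $p_{\wtd{S}}=p$. I would note that a robust self-test for $\mcC_{\mathrm{free}}$ can be taken to have this $\wtd{S}$ as its ideal strategy. The justification is that any two ideal strategies for the same self-test are local dilations of each other, because each is employed in $\mcC_{\mathrm{free}}$. Composing a local $\delta(\epsilon)$-dilation with an exact local dilation then gives a local $\delta(\epsilon)$-dilation, after enlarging the auxiliary spaces. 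Once the ideal strategy is in $\mcC_{\mathrm{PME}}$, restricting the employed strategies from $\mcC_{\mathrm{free}}$ to the subclass $\mcC_{\mathrm{PME}}$ trivially preserves the defining condition of \Cref{def:self-testing}, with the same $\delta$.

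For the ``only if'' direction, I would first apply \cite{VZ25} to the extreme synchronous correlation $p$. This upgrades a robust self-test for $\mcC_{\mathrm{PME}}$, with ideal strategy $\wtd{S}\in\mcC_{\mathrm{PME}}$, to a robust self-test for $\mcC_{\mathrm{POVM}}$, with some modified robustness function $\delta'$. Next I would invoke \cite{BCKLNS25}. Its hypotheses are that $p$ is an extreme quantum correlation and that $\wtd{S}$ is full-rank or projective; here $\wtd{S}$ is both. Extremality in $C_q(X,X,A,A)$ follows from extremality in $C_q^s(X,A)$, because $C_q^s(X,A)$ is a face, as noted in the preliminaries. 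Since $\mcC_{\mathrm{PVM}}\subseteq\mcC_{\mathrm{POVM}}$, the robust self-test restricts to $\mcC_{\mathrm{PVM}}$, and the full-rank case of \cite{BCKLNS25} then lifts it to $\mcC_{\mathrm{free}}$. The exact version runs the same way, using the exact-self-testing versions of both cited theorems.

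I expect the main obstacle to be bookkeeping rather than mathematics. One must check that the cited lifting theorems apply to the same ideal strategy: \cite{VZ25} outputs a POVM robust self-test with ideal strategy in $\mcC_{\mathrm{PME}}$, which is exactly what \cite{BCKLNS25} needs as input. One must also check that the robustness functions compose to a function tending to $0$ as $\epsilon\to0$. In addition, the ``if'' direction needs the small argument that the ideal strategy may be replaced by the PME one.
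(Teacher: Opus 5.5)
Your lifting direction is the paper's argument: \cite{VZ25} carries the $\mcC_{\mathrm{PME}}$ self-test up to $\mcC_{\mathrm{POVM}}$, and \cite{BCKLNS25} carries it to $\mcC_{\mathrm{free}}$. You go through $\mcC_{\mathrm{PVM}}$ and the full-rank case rather than the direct purity lift from $\mcC_{\mathrm{POVM}}$ that the paper invokes. That detour is harmless because the ideal strategy is PME. The converse direction, which the paper treats as plain restriction to a subclass, is where your proposal has a step that fails as written.

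You assert that the $\mcC_{\mathrm{free}}$ self-test can be taken to have an \emph{arbitrary} PME strategy $\wtd{S}$ supplied by \cite{PSSTW16} as its ideal strategy, since ``any two ideal strategies are local dilations of each other''. That principle applies only once $\wtd{S}$ is known to be an ideal strategy. What you actually know is that the given ideal strategy $\wtd{S}_1$ is a local dilation of $\wtd{S}$, and this relation is not symmetric.

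The claim is in fact false in general. Let $\wtd{S}_0$ be a PME ideal strategy with shared state $\ket{\varphi_d}$, and adjoin a second pair $\ket{\varphi_2}$ on which both players act by the identity. The resulting $S'$ is PME with $p_{S'}=p$. But $S'$ is not a local dilation of the employed strategy $\wtd{S}_0$. Summing the dilation conditions over outcomes would give $(V_A\otimes V_B)\ket{\varphi_d}=\ket{\varphi_d}\otimes\ket{\varphi_2}\otimes\ket{\kappa}$. The left side has Schmidt rank $d$, while the right side has Schmidt rank at least $2d$. So \cite{PSSTW16} alone does not give you a PME ideal strategy.

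The repair is to use a PME strategy already known to be ideal. Take the one from \Cref{thm:ideal-strategy-from-trace} (\cite[Corollary~6.7]{Zhao24}), built from the irreducible representation $\pi$ for which $\tr_d\circ\pi$ is the unique finite-dimensional implementing trace. Being itself an ideal strategy, it is an exact local dilation of $\wtd{S}_1$. Your composition argument then gives a robust self-test for $\mcC_{\mathrm{PME}}$ with the same $\delta$, and the exact case works the same way.
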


It follows that, for extreme synchronous quantum correlations, all the classes of strategies considered above give the same notions of self-testing and robust self-testing. We will therefore omit the class of employed strategies and simply say that an extreme synchronous correlation is a self-test or a robust self-test.

\subsection{Tracial-states characterizations}
 Quantum correlations can be naturally described by states on $C^*$-algebras. Let $\POVM{X,A}$ denote the universal $C^*$-algebra generated by positive elements
\begin{equation*}
    \{e_{x,a}:x\in X,\ a\in A\}
\end{equation*}
satisfying
\begin{equation*}
    \sum_{a\in A}e_{x,a}=\Id
\end{equation*}
for every $x\in X$. Thus, for each $x\in X$, the generators $\{e_{x,a}:a\in A\}$ form an ``abstract'' POVM. We define $\PVM{X,A}$ to be the quotient of $\POVM{X,A}$ by the additional relations
\begin{equation*}
    e_{x,a}^2=e_{x,a}
\end{equation*}
for all $x\in X,a\in A.$ Hence, $\PVM{X,A}$ is the universal $C^*$-algebra generated by ``abstract'' PVMs.

A state $f$ on $\POVM{X,A}\otimes_{\min}\POVM{Y,B}$ is said to \emph{implement} a correlation $p$ if
\begin{equation*}
    f(e_{x,a}\otimes e_{y,b}) = p(a,b|x,y)
\end{equation*}
for all $x\in X$, $y\in Y$, $a\in A$, and $b\in B$. We use the same terminology for states on $\PVM{X,A}\otimes_{\min}\PVM{Y,B}$.

By the universal property, every quantum strategy $S$ induces representations $\pi_A$ and $\pi_B$ of the corresponding POVM algebras, and hence a finite-dimensional state
\begin{equation*}
    f_{S}(\alpha\otimes\beta) := \bra{\psi} \pi_A(\alpha)\otimes\pi_B(\beta) \ket{\psi}
\end{equation*}
implementing the correlation $p_{S}$. If $S$ is projective, then $f_{S}$ factors through the PVM algebra $\PVM{X,A}\otimes_{\min}\PVM{Y,B}$. Conversely, by the GNS construction, every finite-dimensional state on either tensor-product algebra arises from a quantum strategy of the corresponding type.

The following result from \cite{PSZZ24} relates uniqueness of implementing states to self-testing.

\begin{theorem}\label{thm:state-self-testing}
    Let $p\in C_q(X,Y,A,B)$ be an extreme quantum correlation.
    \begin{enumerate}[(a)]
        \item The correlation $p$ is a self-test for $\mcC_{\mathrm{POVM}}$ if and only if there is a unique finite-dimensional state on
        \begin{equation*}
            \POVM{X,A}\otimes_{\min}\POVM{Y,B}
        \end{equation*}
        implementing $p$.

        \item Suppose, in addition, that $p$ admits a full-rank projective strategy. Then $p$ is a self-test for $\mcC_{\mathrm{PVM}}$ if and only if there is a unique finite-dimensional state on
        \begin{equation*}
            \PVM{X,A}\otimes_{\min}\PVM{Y,B}
        \end{equation*}
        implementing $p$.
    \end{enumerate}
\end{theorem}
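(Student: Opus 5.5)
The plan is to translate between strategies and finite-dimensional states on the tensor-product algebra in both directions. The translation rests on two facts. First, every $S\in\mcC_{\mathrm{POVM}}$ yields the finite-dimensional state $f_S$ implementing $p_S$. Second, every finite-dimensional state $f$ on $\POVM{X,A}\otimes_{\min}\POVM{Y,B}$ arises as $f_S$ from some strategy. To see the second fact, note that the GNS representation of $f$ is finite-dimensional. A finite-dimensional representation of a minimal tensor product of unital $C^*$-algebras is, up to unitary equivalence and a direct sum, built from commuting images, hence from tensor products $\pi_A\otimes\pi_B$ of finite-dimensional representations. So $f$ is a vector state $\bra{\psi}\pi_A(\cdot)\otimes\pi_B(\cdot)\ket{\psi}$. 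I will first reduce every employed strategy to one whose state has full Schmidt rank on the relevant supports. For POVM strategies this is achieved by compressing $P_{x,a}$ to $\operatorname{supp}\rho_A$, which is still a POVM. For part (b) the compression need not stay projective, and this is exactly why the hypothesis that $p$ admits a full-rank projective strategy is added there.

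For ``self-test $\Rightarrow$ uniqueness'', take a finite-dimensional state $f$ implementing $p$ and realize it as $f_S$. The self-test property gives isometries $V_A,V_B$ and a vector $\ket{\kappa}$ with $(V_A\otimes V_B)(P_{x,a}\otimes Q_{y,b})\ket{\psi}=(\wtd P_{x,a}\otimes\wtd Q_{y,b})\ket{\wtd\psi}\otimes\ket{\kappa}$. Taking the $Q$-marginal and using the full Schmidt rank on the supports, I would derive the intertwining relation $V_AP_{x,a}=(\wtd P_{x,a}\otimes\Id)V_A$ on $\operatorname{supp}\rho_A$, and similarly for Bob. Iterating this relation along arbitrary words in the generators shows $f_S(\alpha\otimes\beta)=f_{\wtd S}(\alpha\otimes\beta)$ for all monomials $\alpha,\beta$. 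Hence $f_S=f_{\wtd S}$ by density and continuity. The delicate point is that the support of $\rho_A$ need not be invariant under the $P_{x,a}$ of an arbitrary strategy. This is where I would use extremality of $p$. If invariance failed, splitting $\ket{\psi}$ along the invariant subspace generated by the support would exhibit $p$ as a nontrivial convex combination of quantum correlations. Alternatively, comparing with the full-rank ideal strategy forces the two components to coincide.

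For ``uniqueness $\Rightarrow$ self-test'', let $S$ satisfy $p_S=p$. Then $f_S$ is finite-dimensional and implements $p$, so $f_S=f_{\wtd S}$. Both strategies are therefore vector realizations of the same state, whose GNS triple is unique. The cyclic subspace $\overline{\pi_A\otimes\pi_B(\mcA\otimes\mcB)\ket{\psi}}$ of $S$ is then unitarily equivalent to that of $\wtd S$, with the vectors matching. The remaining step, which I expect to be the main obstacle, is to make this equivalence \emph{local}. Since $\pi_A(\POVM{X,A})$ is finite-dimensional, I would decompose $\mcH_A$ into isotypic components $\bigoplus_i\C^{d_i}\otimes\C^{m_i}$, and do the same for $\mcH_B$. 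I would then write $\ket{\psi}$ blockwise and compare with the corresponding decomposition of $\wtd S$. Here extremality is used again: $p$ extreme forces the state to live on a single pair of matched irreducible blocks, or on a family of blocks all realizing the same correlation. The multiplicity spaces then supply the auxiliary systems $\mcK_A,\mcK_B$ and the junk vector $\ket{\kappa}$, and the block inclusions give the isometries $V_A,V_B$ of \Cref{def:local-epsilon-dilation} with $\epsilon=0$.

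Part (b) follows the same scheme with $\PVM{X,A}\otimes_{\min}\PVM{Y,B}$ in place of the POVM algebra. The full-rank projective strategy serves as the ideal strategy, so that compression to supports keeps the measurements projective on the ideal side. The restriction to $\mcC_{\mathrm{PVM}}$ guarantees that every $f_S$ factors through the PVM algebra.
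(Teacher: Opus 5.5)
The paper quotes this theorem from the literature without proof, so I am comparing your outline with the standard argument. It has two genuine gaps, and it puts extremality in the wrong direction.

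\textbf{Self-test $\Rightarrow$ uniqueness.} You correctly isolate the crux. The intertwining $V_AP_{x,a}=(\wtd{P}_{x,a}\otimes\Id)V_A$ holds only on $\operatorname{supp}\rho_A$, so iterating along words requires invariant supports. Your extremality fix does not work. The vector $\ket{\psi}$ already lies in $\operatorname{supp}\rho_A\otimes\operatorname{supp}\rho_B$, which sits inside the invariant subspace generated by the supports. So there is nothing to split, and no convex decomposition of $p$ arises.

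The mechanism that does work is this. Comparing reduced states in $(V_A\otimes V_B)\ket{\psi}=\ket{\wtd{\psi}}\otimes\ket{\kappa}$ gives $V_A(\operatorname{supp}\rho_A)=\operatorname{supp}\wtd{\rho}_A\otimes\operatorname{supp}\kappa_A$. Hence, if $\operatorname{supp}\wtd{\rho}_A$ is $\wtd{P}_{x,a}$-invariant, intertwining and injectivity of $V_A$ force $\operatorname{supp}\rho_A$ to be $P_{x,a}$-invariant, and your iteration then yields $f_S=f_{\wtd{S}}$. But the definition hands you an arbitrary ideal $\wtd{S}\in\mcC$, and you never justify taking it full-rank.

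The missing step is to take a full-rank model $S_0$ of $p$ in the class. For POVMs, this is the compression of any model to its supports; that is where compression belongs. Compressing the employed $S$ instead changes $f_S$, which is the very state you are trying to identify. For PVMs, the existence of $S_0$ is exactly the extra hypothesis in (b). Since $S_0$ is full-rank, the intertwining holds on its whole space, so the range $\operatorname{supp}\wtd{\rho}_A\otimes\operatorname{supp}\kappa^0_A$ of $V^0_A$ is $(\wtd{P}_{x,a}\otimes\Id)$-invariant. Therefore $\operatorname{supp}\wtd{\rho}_A$ is $\wtd{P}_{x,a}$-invariant, likewise for Bob, and the compression of $\wtd{S}$ is a full-rank ideal in the same class. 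No extremality is needed in this direction.

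In (b) you also cannot simply declare the given full-rank projective strategy to be the ideal. It may carry entangled junk, and since local isometries cannot raise Schmidt rank, it need not be a local dilation of every model.

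\textbf{Uniqueness $\Rightarrow$ self-test.} The missing idea is \emph{purity} of the unique state $f$, and it requires uniqueness together with extremality, not extremality alone. Suppose $f=tf_1+(1-t)f_2$ with finite-dimensional states $f_i$. These implement correlations $p_i$ with $p=tp_1+(1-t)p_2$, so $p_i=p$ by extremality and $f_i=f$ by uniqueness. Any state dominated by $f$ is again finite-dimensional, so $f$ is pure. Its GNS representation is therefore an irreducible $\pi_A\otimes\pi_B$ with image $M_{d_A}\otimes M_{d_B}$.

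Only then does your block argument close. For $S$ with $p_S=p$, each weighted block of $f_S=f$ is a state equal to $f$ by purity. So all weight lies in the single isotypic component $(\C^{d_A}\otimes\C^{m_A})\otimes(\C^{d_B}\otimes\C^{m_B})$ of type $\pi_A\otimes\pi_B$. There the reduced state of $\ket{\psi}$ on $\C^{d_A}\otimes\C^{d_B}$ is the pure state of $\ket{\wtd{\psi}}$, and that is what forces $\ket{\psi}=\ket{\wtd{\psi}}\otimes\ket{\kappa}$.

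Your formulation, ``a single pair of blocks, or a family of blocks all realizing the same correlation'', does not deliver this. Without purity, $\ket{\psi}$ may have a mixed reduced state on the irreducible factors, or weight on inequivalent blocks with different multiplicity vectors, and then no single junk vector $\ket{\kappa}$ exists. So the roles are the reverse of your outline: extremality enters only in this direction, through purity, while a full-rank model in the class is what the other direction needs.
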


It is well known that $\POVM{X,A}\otimes_{\min}\POVM{Y,B}$ is residually finite-dimensional (see \cite[Appendix~A]{Zhao24} for an elementary proof), so finite-dimensional states are weak-$*$ dense in the state space. This argument connects uniqueness among all implementing states with robust self-testing.

\begin{theorem}[{\cite{Zhao24,Kar25}}]
\label{thm:state-robust-self-testing}
    Let $p\in C_q(X,Y,A,B)$ be a quantum correlation.
    \begin{enumerate}[(a)]
        \item If $p$ is a robust self-test for $\mcC_{\mathrm{POVM}}$, then there is a unique state on
        \begin{equation*}
            \POVM{X,A}\otimes_{\min}\POVM{Y,B}
        \end{equation*}
        implementing $p$.

        \item Suppose $p$ is an extreme synchronous quantum correlation. Then $p$ is a robust self-test if and only if there is a unique state on
        \begin{equation*}
            \PVM{X,A}\otimes_{\min}\PVM{X,A}
        \end{equation*}
        implementing $p$.
    \end{enumerate}
\end{theorem}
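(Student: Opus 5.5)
For part (a) I will argue by weak-$*$ approximation, using that $\mcA:=\POVM{X,A}\otimes_{\min}\POVM{Y,B}$ is residually finite-dimensional. Let $\wtd{S}$ be the ideal strategy and $\wtd{f}:=f_{\wtd{S}}$. Take an arbitrary state $f$ on $\mcA$ implementing $p$. Since finite-dimensional states are weak-$*$ dense, there are finite-dimensional states $f_n\to f$. By the GNS construction, $f_n=f_{S_n}$ for POVM strategies $S_n$. The correlations $p_{S_n}$ converge to $p$, so $\epsilon_n:=\norm{p_{S_n}-p}_1\to0$. Robustness then makes $\wtd{S}$ a local $\delta(\epsilon_n)$-dilation of $S_n$, with $\delta(\epsilon_n)\to0$. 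It suffices to show that $f_n(w)\to\wtd{f}(w)$ for every word $w=\alpha\otimes\beta$ in the generators. This forces $f=\wtd{f}$ on a dense set, and hence everywhere.

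The key step is upgrading the dilation estimate, which only controls vectors $(P_{x,a}\otimes Q_{y,b})\ket{\psi}$, to words of arbitrary length. I would first sum over $b$ (and over $a$) to get single-sided estimates:
\begin{equation*}
(V_A\otimes V_B)(P_{x,a}\otimes\Id)\ket{\psi}\approx\big((\wtd{P}_{x,a}\otimes\Id)\ket{\wtd{\psi}}\big)\otimes\ket{\kappa},
\end{equation*}
with error at most $|B|\delta(\epsilon_n)$. I would then induct on the length of $\alpha$. The operators on the two sides commute, so a Bob-side factor can be peeled off onto the bra, and $\bra{\psi}(\alpha\otimes\beta)\ket{\psi}$ becomes an inner product of two shorter words. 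For the Alice side alone, I would use the isometry $V_A$ to approximately intertwine $P_{x,a}$ with $\wtd{P}_{x,a}\otimes\Id_{\mcK_A}$ on the relevant vectors. Each induction step costs a fixed multiple of the previous error, and all generators have norm $\leq1$. Hence for fixed $w$ the error is $O_{|w|}(\delta(\epsilon_n))\to0$.

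\textbf{This intertwining is the main obstacle}: the dilation estimate only controls the single vector $\ket{\psi}$, not arbitrary vectors in $\mcH_A\otimes\mcH_B$. My first attempt would be to reduce to the support of $\rho_A$, where the estimate can be propagated along the words already controlled. If that fails, I would instead compare the positive operators
\begin{equation*}
V_A^*(\wtd{P}_{x,a}\otimes\Id)V_A \quad\text{and}\quad P_{x,a}
\end{equation*}
in the $\rho_A$-weighted $2$-norm.

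For part (b), the forward direction is the PVM analogue of the argument for (a). The tensor product $\PVM{X,A}\otimes_{\min}\PVM{X,A}$ is also residually finite-dimensional, so the same approximation argument applies, restricted to projective strategies.

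For the converse, suppose the implementing state $f$ is unique but $p$ is not a robust self-test. By the lifting theorem it suffices to treat $\mcC_{\mathrm{PME}}$. So there are $\eta>0$ and projective maximally entangled strategies $S_n$ with $p_{S_n}\to p$ such that $\wtd{S}$ is not a local $\eta$-dilation of any $S_n$. Passing to a weak-$*$ cluster point of the states $f_{S_n}$ gives a state implementing $p$, which by uniqueness is $f$. Since $p$ is quantum and extreme, $f$ is finite-dimensional. By the synchronous tracial picture, $f$ comes from a trace factoring through a surjection onto some $M_d$. By \Cref{lem:matrix-trace-extreme}, the ideal strategy is then the maximally entangled one on $\C^d\otimes\C^d$.

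Next I would choose noncommutative polynomials in the $e_{x,a}$ that approximate the matrix units of $M_d$ under this surjection. Weak-$*$ convergence implies that, in $S_n$ and in the trace $2$-norm, these polynomials approximately satisfy the matrix-unit relations. By stability of matrix-unit relations in the $2$-norm for tracial von Neumann algebras, they can be perturbed to exact matrix units. This gives a factorization $\mcH_A\cong\C^d\otimes\mcK$ on each side, and hence local isometries under which $S_n$ is $o(1)$-close to $\wtd{S}\otimes\ket{\kappa}$. This contradicts the choice of $\eta$.
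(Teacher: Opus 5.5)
Your overall architecture for (a) is the route the paper indicates when it cites \cite{Zhao24,Kar25} after noting residual finite-dimensionality and weak-$*$ density of finite-dimensional states: approximate an arbitrary implementing state by $f_{S_n}$, apply robustness, then prove convergence on words. Your compactness and matrix-unit sketch for the converse in (b) is also workable, with one correction. A unital copy of $M_d$ sits in $M_D$ only when $d\mid D$, so the perturbed matrix units have a small defect $\Id-\sum_i F_{ii}$. That defect has to be absorbed into the isometries; you do not get an honest factorization $\mcH_A\cong\C^d\otimes\mcK$.

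\textbf{The gap.} The step you flag as the main obstacle is a genuine gap, and it is the entire content of (a) and of the forward direction of (b). Put $Y_P:=V_AP-(\wtd{P}\otimes\Id)V_A$. The dilation only gives $\norm{(Y_P\otimes\Id)\ket{\psi}}=\norm{Y_P\rho_A^{1/2}}_2\le c\,\delta$, which is a $\rho_A$-weighted Hilbert--Schmidt bound. Since $Y_{P\alpha}=Y_P\alpha+(\wtd{P}\otimes\Id)Y_\alpha$, your induction must bound $\norm{Y_P\,\alpha\,\rho_A^{1/2}}_2$. This is not controlled by $\norm{Y_P\rho_A^{1/2}}_2$ unless $\rho_A$ is scalar.
\begin{itemize}
\item The word $\alpha$ need not commute with $\rho_A$, and it can move vectors off the support of $\rho_A$. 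So restricting to that support does not help.
\item Comparing $P$ with $V_A^*(\wtd{P}\otimes\Id)V_A$ in the same weighted norm meets exactly the same obstruction.
\end{itemize}
So your claim that each induction step costs a fixed multiple of the previous error is unjustified as stated.

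\textbf{The missing idea: full rank of the ideal state.} First, you may assume $\wtd{S}$ is full-rank. Its compression to the supports of $\wtd{\rho}_A,\wtd{\rho}_B$ is a full-rank POVM strategy for $p$, so $\wtd{S}$ is a local dilation of it. Full rank makes the resulting intertwining hold on the whole space, and a rank count then shows these supports are invariant under all $\wtd{P}_{x,a},\wtd{Q}_{y,b}$. Hence the compression is itself an ideal strategy. Next, $(V_A\otimes V_B)\ket{\psi}\approx\ket{\wtd{\psi}}\otimes\ket{\kappa}$ gives $\norm{V_A\rho_AV_A^*-\wtd{\rho}_A\otimes\kappa_A}_1=O(\delta)$. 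Powers--St{\o}rmer upgrades this to $O(\sqrt{\delta})$ closeness of the square roots in $2$-norm. Now write $\alpha\rho_A^{1/2}=V_A^*\bigl(Y_\alpha+(\wtd{\alpha}\otimes\Id)V_A\bigr)\rho_A^{1/2}$ and use
\[
(\wtd{\alpha}\otimes\Id)(\wtd{\rho}_A\otimes\kappa_A)^{1/2}=(\wtd{\rho}_A\otimes\kappa_A)^{1/2}\bigl(\wtd{\rho}_A^{-1/2}\wtd{\alpha}\,\wtd{\rho}_A^{1/2}\otimes\Id\bigr).
\]
This yields $\norm{Y_{P\alpha}\rho_A^{1/2}}_2\le C\bigl(\norm{\wtd{\rho}_A^{-1}},\abs{\alpha}\bigr)\sqrt{\delta}$. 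That suffices, because each word is fixed as $n\to\infty$.

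\textbf{A shortcut for (b).} Every implementing state is $f_\tau$ for an amenable trace $\tau$. Amenable traces on $\PVM{X,A}$ are weak-$*$ limits of finite-dimensional ones, by Kirchberg's characterization together with $2$-norm stability of PVMs. The corresponding approximating strategies are then projective maximally entangled. There $\rho_A$ is scalar, and your induction goes through verbatim. As written, however, ``the same approximation argument, restricted to projective strategies'' inherits the gap.
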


Even for extreme quantum correlations, it is not known whether the converse of \Cref{thm:state-robust-self-testing}(a) holds in general. Besides synchronous correlations, the converse is known for several other structured classes, including unique optimal correlations of XOR games.

\begin{proposition}[{\cite[Theorem~7.14]{Zhao24}}]
\label{thm:unbiased-even-rank}
    Let $p\in C_q^s(X,\Z_2)$ be an extreme synchronous quantum correlation with $p_A(1|x)=1/2$ for every $x\in X$. Define
    \begin{equation*}
        C_{x,y}:= \sum_{a,b\in\Z_2}(-1)^{a+b}p(a,b|x,y) =4p(1,1|x,y)-1.
    \end{equation*}
    If $\operatorname{rank}C$ is even, then $p$ is a robust self-test.
\end{proposition}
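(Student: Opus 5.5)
The plan is to go through \Cref{thm:state-robust-self-testing}(b). For a synchronous correlation, states on $\PVM{X,\Z_2}\otimes_{\min}\PVM{X,\Z_2}$ implementing $p$ correspond to tracial states $\tau$ on $\PVM{X,\Z_2}$ with $\tau(e_{x,a}e_{y,b})=p(a,b|x,y)$ \cite{PSSTW16,KPS18}. So it suffices to show that there is a \emph{unique} tracial state on $\PVM{X,\Z_2}$ implementing $p$.

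\textbf{Setup.} I will work with the self-adjoint unitaries $u_x:=2e_{x,1}-\Id$. Because the marginals are unbiased, a tracial state $\tau$ implements $p$ exactly when
\begin{equation*}
\tau(u_x)=0 \quad\text{and}\quad \tau(u_xu_y)=C_{x,y}
\end{equation*}
for all $x,y\in X$. The matrix $C$ is real, positive semidefinite, and has unit diagonal, so it lies in the elliptope. Let $r=\operatorname{rank}C$ and write $C_{x,y}=\ang{v_x,v_y}$ for unit vectors $v_x\in\R^r$ that span $\R^r$.

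\textbf{Step 1: $C$ is extreme in the elliptope.} Every point of the elliptope is realized by a finite-dimensional unbiased synchronous strategy. Concretely, take $w_x=\sum_i v_x^i\gamma_i$, where the $\gamma_i$ are anticommuting self-adjoint unitaries (Clifford generators), and use the normalized trace. If $C=tC_1+(1-t)C_2$ with $C_1,C_2$ in the elliptope, the corresponding correlations $p_1,p_2$ lie in $C_q$ and satisfy $p=tp_1+(1-t)p_2$. Extremality of $p$ then forces $C_1=C_2=C$. By the Li--Tam characterization of extreme points of the elliptope, the rank-one matrices $v_xv_x^{T}$, $x\in X$, span the space of real symmetric $r\times r$ matrices.

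\textbf{Step 2: linear rigidity.} Let $\tau$ be any implementing tracial state. Pass to the GNS representation and the faithful trace $\tau_{\mcM}$. Suppose $\sum_x\alpha_xv_x=0$ with real coefficients $\alpha_x$. Then
\begin{equation*}
\Bigl\|\sum_x\alpha_x\pi_\tau(u_x)\Bigr\|_{2}^2=\sum_{x,y}\alpha_x\alpha_yC_{x,y}=0,
\end{equation*}
so $\sum_x\alpha_x\pi_\tau(u_x)=0$ by faithfulness. Choose $x_1,\dots,x_r$ with $v_{x_i}$ a basis and let $w_1,\dots,w_r$ be the dual combinations of the $\pi_\tau(u_{x_i})$. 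Then $\pi_\tau(u_x)=\sum_iv_x^iw_i$ with self-adjoint $w_i$, and the $w_i$ generate $\pi_\tau(\PVM{X,\Z_2})$.

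\textbf{Step 3: Clifford relations.} From $\pi_\tau(u_x)^2=\Id$ we get
\begin{equation*}
\sum_{i,j}v_x^iv_x^j\,\tfrac12(w_iw_j+w_jw_i)=\Id=\sum_{i,j}v_x^iv_x^j\,\delta_{ij}\Id \qquad\text{for all }x.
\end{equation*}
Since the matrices $v_xv_x^T$ span all symmetric matrices (Step 1), this yields $w_iw_j+w_jw_i=2\delta_{ij}\Id$. Hence $\pi_\tau$ factors through the Clifford algebra $\mathrm{Cl}_r$. When $r$ is even, $\mathrm{Cl}_r\cong M_{2^{r/2}}$ is simple, so it has exactly one tracial state. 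Therefore $\tau=\tr_{2^{r/2}}\circ\pi$, where $\pi$ is the fixed representation $e_{x,1}\mapsto\tfrac12(\Id+\sum_iv_x^i\gamma_i)$. This tracial state is unique, and together with \Cref{thm:state-robust-self-testing}(b) this gives robustness. When $r$ is odd, $\mathrm{Cl}_r$ has two simple summands, and uniqueness fails; this is where evenness enters.

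\textbf{Main obstacle.} I expect the hard part to be Step 1 and its use in Step 3: establishing that extremality of $p$ in $C_q$ transfers to extremality of $C$ in the elliptope, and invoking the spanning property correctly. A secondary point needing care is the correspondence between states on the tensor product and tracial states on $\PVM{X,\Z_2}$ in the synchronous case.
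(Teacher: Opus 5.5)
Your proof is correct. The paper gives no argument of its own for this statement: it quotes it from \cite[Theorem~7.14]{Zhao24} and uses it only in the rank-two case, in the proof of \Cref{prop:four-question-lower-bound}. So there is no in-paper route to compare with. What you supply is a self-contained derivation inside the paper's tracial framework, and each step checks out.
\begin{enumerate}
\item The affine bijection $C\mapsto p$, $p(a,b|x,y)=\tfrac14\bigl(1+(-1)^{a+b}C_{x,y}\bigr)$, between the elliptope and unbiased synchronous correlations lands in $C_q$ by your Clifford realizations. Hence extremality passes from $p$ to $C$.
\item Li--Tam then gives that $\{v_xv_x^T\}$ spans the real symmetric $r\times r$ matrices.
\item Faithfulness of $\tau_{\mcM}$ converts the Gram data into linear relations. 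Via $\pi_\tau(u_x)^2=\Id$ and the spanning property, these become Clifford relations.
\item For even $r$, the algebra $\mathrm{Cl}_r\cong M_{2^{r/2}}$ has a unique trace.
\end{enumerate}
This route yields slightly more than the statement. It gives uniqueness among all tracial states, not only amenable ones. It also gives an explicit ideal strategy, the $2^{r/2}$-dimensional Clifford representation (the Pauli matrices when $r=2$, matching the Bloch-vector computation in \Cref{prop:four-question-lower-bound}).

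Three small points to tidy.
\begin{itemize}
\item The correspondence with states on $\PVM{X,\Z_2}\otimes_{\min}\PVM{X,\Z_2}$ is with \emph{amenable} traces, so it is cleaner to invoke \Cref{thm:tracial-self-testing}(b) directly. Uniqueness among all traces implies uniqueness among amenable ones, and existence comes from \Cref{thm:synchronous-quantum-fd-trace}.
\item In Step~1, if some $C_i$ has rank one you need to choose the single generator traceless (e.g.\ $\operatorname{diag}(1,-1)$), or pad its Gram vectors into $\R^2$. With two or more anticommuting generators, tracelessness and $\tr_D(\gamma_i\gamma_j)=\delta_{ij}$ follow automatically.
\item In Step~3, write the factorization explicitly as $\pi_\tau=\psi\circ\pi_0$, with $\psi(\gamma_i)=w_i$ given by universality of $\mathrm{Cl}_r$. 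Then $\tau=(\tau_{\mcM}\circ\psi)\circ\pi_0$, and $\tau_{\mcM}\circ\psi$ is a tracial state on $\mathrm{Cl}_r$.
\end{itemize}
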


For synchronous correlations, the characterization in terms of states on the tensor product of two algebras reduces to one involving tracial states on a single algebra. We say that a tracial state $\tau$ on $\PVM{X,A}$ \emph{implements} a synchronous correlation $p$ if
\begin{equation*}
    \tau(e_{x,a}e_{y,b}) = p(a,b|x,y)
\end{equation*}
for all $x,y\in X$ and $a,b\in A$. The following correspondence, first established in \cite{PSSTW16}, demonstrates that synchronous quantum correlations are characterized by finite-dimensional implementing tracial states.
\begin{theorem}\label{thm:synchronous-quantum-fd-trace}
    A correlation $p$ is in $C_q^s(X,A)$ if and only if there is a finite-dimensional tracial state on $\PVM{X,A}$ implementing $p$.
\end{theorem}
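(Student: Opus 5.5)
Both directions go through the Gelfand--Naimark--Segal construction. The aim is to convert between a projective maximally entangled strategy, written in the form $\ket{\psi}$, $P_{x,a}\otimes\Id$ and $\Id\otimes Q_{x,a}$, and a finite-dimensional trace on $\PVM{X,A}$.

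\emph{Trace to correlation.} Let $\tau$ be a finite-dimensional tracial state on $\PVM{X,A}$ implementing $p$. Then $\tau$ factors through a finite-dimensional $C^*$-algebra, i.e.\ a direct sum $\bigoplus_k M_{d_k}$. Accordingly $\tau=\sum_k t_k\,\tr_{d_k}\circ\pi_k$, where $t_k\ge 0$, $\sum_k t_k=1$, and each $\pi_k$ is a $*$-representation on $\C^{d_k}$. For each $k$ I build the strategy
\[
\ket{\varphi_{d_k}},\qquad \pi_k(e_{x,a})\otimes\Id,\qquad \Id\otimes\pi_k(e_{y,b})^{T}.
\]
The transpose trick $(M\otimes\Id)\ket{\varphi_d}=(\Id\otimes M^T)\ket{\varphi_d}$ gives correlation $\tr_{d_k}\bigl(\pi_k(e_{x,a})\pi_k(e_{y,b})\bigr)$. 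This correlation is synchronous, since for $a\neq b$ the projections $e_{x,a}$ and $e_{x,b}$ are orthogonal. Because $C_q$ is convex (direct sums of strategies), the convex combination over $k$ lies in $C_q$. It equals $p$, so $p\in C_q^s(X,A)$.

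\emph{Correlation to trace.} This is the harder direction. Let $p\in C_q^s(X,A)$ be given by a strategy $(\ket\psi,\{P_{x,a}\},\{Q_{y,b}\})$. The steps are as follows.

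\textbf{Step 1: reduce to full rank.} Compress both local spaces to the supports of $\rho_A$ and $\rho_B$. The compressed operators still form POVMs and give the same correlation.

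\textbf{Step 2: synchronicity forces projective, aligned measurements.} Synchronicity gives
\[
\sum_a \bra\psi P_{x,a}\otimes Q_{x,a}\ket\psi=1.
\]
Together with $\sum_a P_{x,a}\otimes\Id=\Id$, this shows that $\sum_a\bigl\|(P_{x,a}\otimes\Id-\Id\otimes Q_{x,a})\ket\psi\bigr\|^2$ equals
\[
\sum_a \bra\psi (P_{x,a}^2\otimes\Id)\ket\psi + \sum_a\bra\psi(\Id\otimes Q_{x,a}^2)\ket\psi - 2 .
\]
Since $P^2\le P$ for $0\le P\le\Id$, this quantity is $\le 0$, so $(P_{x,a}\otimes\Id)\ket\psi=(\Id\otimes Q_{x,a})\ket\psi$. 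Equality also forces $\bra\psi (P_{x,a}-P_{x,a}^2)\otimes\Id\ket\psi=0$. Full rank then upgrades this to $P_{x,a}^2=P_{x,a}$, and likewise for $Q$.

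\textbf{Step 3: read off the trace identity.} Write $\ket\psi=\operatorname{vec}(K)$ with $K$ invertible, so that $\rho_A=KK^*$. The relation from Step 2 becomes $P_{x,a}K=K Q_{x,a}^T$. It follows that $P_{x,a}\rho_A=K Q^T_{x,a}K^*$ is self-adjoint, so $P_{x,a}$ commutes with $\rho_A$. Then
\[
p(a,b|x,y)=\operatorname{Tr}\bigl(\rho_A P_{x,a}P_{y,b}\bigr).
\]
Since every $P_{x,a}$ commutes with $\rho_A$, the functional $T\mapsto\operatorname{Tr}(\rho_A T)$ is tracial on the $C^*$-algebra generated by the $P_{x,a}$, because that algebra lies in the commutant of $\rho_A$ and $\operatorname{Tr}(\rho_A\,\cdot\,)$ is tracial there. (Equivalently, decompose $\rho_A$ spectrally and apply the same reasoning on each eigenspace.)

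\textbf{Step 4: pull back.} Pulling this trace back along the representation $e_{x,a}\mapsto P_{x,a}$ yields a finite-dimensional tracial state on $\PVM{X,A}$ implementing $p$.

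The main obstacle is Steps 2--3: extracting projectivity and the commutation $[P_{x,a},\rho_A]=0$ from synchronicity alone. Everything else is routine GNS bookkeeping.
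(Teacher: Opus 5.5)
Your proof is correct and complete. The paper does not reprove this statement; it cites \cite{PSSTW16}, and your argument is essentially the standard one from that source: synchronicity plus full rank gives $(P_{x,a}\otimes\Id)\ket{\psi}=(\Id\otimes Q_{x,a})\ket{\psi}$, projectivity, and $[P_{x,a},\rho_A]=0$, so $\operatorname{Tr}(\rho_A\,\cdot\,)$ is a finite-dimensional trace, while the converse is the transpose trick with $\ket{\varphi_d}$ that the paper uses in \Cref{lem:matrix-trace-extreme} and \Cref{thm:ideal-strategy-from-trace}.
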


Let $\omega$ be the canonical $*$-anti-automorphism of $\PVM{X,A}$ fixing every generator $e_{x,a}$, so that $\omega$ reverses the order of products.\footnote{Equivalently, $\alpha\mapsto\omega(\alpha)^{\mathrm{op}}$ is the canonical $*$-isomorphism from $\PVM{X,A}$ onto its opposite algebra $\big(\PVM{X,A}\big)^{\mathrm{op}}$.} A tracial state $\tau$ on $\PVM{X,A}$ is amenable if and only if the formula
\begin{equation}
    f_\tau(\alpha\otimes\beta) := \tau\big(\alpha\omega(\beta)\big)
    \label{eq:trace-associated-state}
\end{equation}
extends to a state on $\PVM{X,A}\otimes_{\min}\PVM{X,A}$. In this case, $\tau$ and $f_\tau$ implement the same synchronous correlation. Conversely, every state on $\PVM{X,A}\otimes_{\min}\PVM{X,A}$ implementing a synchronous correlation arises uniquely in this way from an amenable tracial state on $\PVM{X,A}$. Moreover, under this correspondence, $\tau$ is finite-dimensional if and only if the corresponding $f_\tau$ is finite-dimensional.

Combining this correspondence with \Cref{thm:state-self-testing,thm:state-robust-self-testing} gives the following characterization.

\begin{theorem}[{\cite{Zhao24,Kar25}}]
\label{thm:tracial-self-testing}
    Let $p\in C_q^s(X,A)$ be an extreme synchronous quantum correlation.
    \begin{enumerate}[(a)]
        \item The correlation $p$ is a self-test if and only if there is a unique finite-dimensional tracial state on $\PVM{X,A}$ implementing $p$.

        \item The correlation $p$ is a robust self-test if and only if there is a unique amenable tracial state on $\PVM{X,A}$ implementing $p$.
    \end{enumerate}
\end{theorem}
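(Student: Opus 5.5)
The plan is to deduce \Cref{thm:tracial-self-testing} from the state-level characterizations \Cref{thm:state-self-testing} and \Cref{thm:state-robust-self-testing}, using the bijection $\tau\mapsto f_\tau$ of \eqref{eq:trace-associated-state} between amenable tracial states on $\PVM{X,A}$ and states on $\PVM{X,A}\otimes_{\min}\PVM{X,A}$ implementing synchronous correlations. Throughout, $p$ is an extreme synchronous quantum correlation.

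For part (a), I first check the hypothesis of \Cref{thm:state-self-testing}(b). Since $p$ is extreme and synchronous, it admits a projective maximally entangled strategy \cite{PSSTW16}, and such a strategy is in particular full-rank and projective. \Cref{thm:state-self-testing}(b) then says that $p$ is a self-test for $\mcC_{\mathrm{PVM}}$ if and only if there is a unique finite-dimensional state on $\PVM{X,A}\otimes_{\min}\PVM{X,A}$ implementing $p$. By the lifting discussion in \Cref{subsec:lifting}, self-testing for $\mcC_{\mathrm{PVM}}$ is the same as self-testing for extreme synchronous correlations.

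It remains to translate uniqueness of implementing states into uniqueness of implementing tracial states. Every finite-dimensional tracial state is amenable. Any state $f$ implementing $p$ implements a synchronous correlation, so $f=f_\tau$ for a unique amenable $\tau$. The correspondence preserves finite-dimensionality in both directions, and $\tau$ and $f_\tau$ implement the same correlation. Hence $\tau\mapsto f_\tau$ restricts to a bijection between finite-dimensional tracial states on $\PVM{X,A}$ implementing $p$ and finite-dimensional states on the tensor product implementing $p$. Uniqueness on one side is therefore equivalent to uniqueness on the other, which proves (a).

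Part (b) follows the same pattern with \Cref{thm:state-robust-self-testing}(b) in place of \Cref{thm:state-self-testing}(b): $p$ is a robust self-test if and only if there is a unique state on $\PVM{X,A}\otimes_{\min}\PVM{X,A}$ implementing $p$. Since every such state is $f_\tau$ for a unique amenable tracial state $\tau$ implementing $p$, and conversely each such $\tau$ gives an implementing state, the two sets are in bijection and uniqueness transfers.

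The main obstacle is the correspondence itself: that a state implementing a synchronous correlation has the form $f_\tau$, and that finite-dimensionality matches. I take this as stated above, relying on \cite{PSSTW16,KPS18}. The key mechanism is that synchronicity forces $f(e_{x,a}\otimes\Id)\ket{f}=f(\Id\otimes e_{x,a})\ket{f}$ in the GNS space, which yields both the trace on Alice's algebra and the anti-automorphism $\omega$.
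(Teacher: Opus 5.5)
Your proposal is correct and follows the paper's route. The paper obtains this theorem simply by combining the tracial correspondence $\tau\mapsto f_\tau$ of \eqref{eq:trace-associated-state} with the state-level characterizations. You fill in the details the paper leaves implicit: checking the full-rank projective hypothesis via the projective maximally entangled strategy from \cite{PSSTW16}, using the lifting results so that self-testing for $\mcC_{\mathrm{PVM}}$ is the unqualified notion, and noting that the correspondence restricts to a bijection on finite-dimensional implementing states. The only slip is notational: the GNS identity in your last sentence should read $\pi_f(e_{x,a}\otimes\Id)\ket{f}=\pi_f(\Id\otimes e_{x,a})\ket{f}$, since $f$ is not what acts on $\ket{f}$.
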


Thus, for extreme synchronous quantum correlations, the difference between exact and robust self-testing is precisely the difference between uniqueness among finite-dimensional tracial states and uniqueness among amenable tracial states. This is the formulation that we use throughout the remainder of the paper.

The unique finite-dimensional implementing tracial state in \Cref{thm:tracial-self-testing}(a) also determines a natural ideal strategy, up to local unitary equivalence. 

\begin{theorem}[{\cite[Corollary 6.7]{Zhao24}}]\label{thm:ideal-strategy-from-trace}
   Let $p\in C_q^s(X,A)$ be an extreme synchronous quantum correlation that is a self-test, and let $\tau$ be the unique finite-dimensional tracial state on $\PVM{X,A}$ implementing $p$. Then there exist $d\geq 1$ and an irreducible $*$-representation
    \begin{equation*}
        \pi:\PVM{X,A}\longrightarrow M_d(\C)
    \end{equation*}
    such that
    \begin{equation*}
        \tau=\tr_d\circ\pi,
    \end{equation*}
    The GNS representation of $\tau$ is unitarily equivalent to
    \begin{equation*}
        \big( \C^d\otimes\C^d, \pi(\,\cdot\,)\otimes\Id, \ket{\varphi_d} \big),
    \end{equation*}
    where
    \begin{equation*}
        \ket{\varphi_d} := \frac{1}{\sqrt d} \sum_{i=1}^d \ket{i}\otimes\ket{i}
    \end{equation*}
    is a maximally entangled state in $\C^d\otimes\C^d$. Moreover, $p$ self-tests the ideal strategy
    \begin{equation*}
        \wtd{S} := \big( \ket{\varphi_d}, \{P_{x,a}\}, \{Q_{y,b}\} \big),
    \end{equation*}
    where
    \begin{equation*}
        P_{x,a}:=\pi(e_{x,a}), \qquad Q_{y,b}:=\pi(e_{y,b})^{T},
    \end{equation*}
    and the transpose is taken with respect to the basis $\{\ket{i}:1\leq i\leq d\}$.
\end{theorem}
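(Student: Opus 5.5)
The plan is to first pin down the structure of $\tau$ using extremality of $p$, and then read off the ideal strategy from the GNS representation.

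\textbf{Step 1: $\tau$ is extreme in $T_{\mathrm{fin}}(\PVM{X,A})$.} Suppose $\tau=t\tau_1+(1-t)\tau_2$ with $0<t<1$ and $\tau_1,\tau_2$ finite-dimensional tracial states. By \Cref{thm:synchronous-quantum-fd-trace}, each $\tau_i$ implements some $p_i\in C_q^s(X,A)$, and $p=tp_1+(1-t)p_2$. Since $p$ is extreme in $C_q(X,X,A,A)$, we get $p_1=p_2=p$. Uniqueness of the finite-dimensional implementing trace then forces $\tau_1=\tau_2=\tau$.

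\textbf{Step 2: $\tau=\tr_d\circ\pi$ for an irreducible $\pi$.} Since $\tau$ is finite-dimensional, it factors through the finite-dimensional $C^*$-algebra $\pi_\tau(\PVM{X,A})\cong\bigoplus_{i}M_{d_i}(\C)$. Every tracial state on this algebra is a convex combination $\sum_i t_i\tr_{d_i}$. Hence
\begin{equation*}
\tau=\sum_i t_i\,\tr_{d_i}\circ\pi_i,
\end{equation*}
where each $\pi_i:\PVM{X,A}\to M_{d_i}(\C)$ is surjective. Each $\tr_{d_i}\circ\pi_i$ lies in $T_{\mathrm{fin}}$, so extremality forces a single summand. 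This gives a surjective, hence irreducible, $\pi$ onto $M_d(\C)$ with $\tau=\tr_d\circ\pi$. The GNS description $\big(\C^d\otimes\C^d,\pi(\cdot)\otimes\Id,\ket{\varphi_d}\big)$ is then exactly \Cref{lem:matrix-trace-extreme}.

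\textbf{Step 3: realize the associated state $f_\tau$ by $\wtd S$.} Define
\begin{equation*}
\pi_B(\beta):=\pi(\omega(\beta))^T .
\end{equation*}
Both $\omega$ and the transpose are $*$-anti-automorphisms, so $\pi_B$ is a $*$-homomorphism, and $\pi_B(e_{y,b})=\pi(e_{y,b})^T=Q_{y,b}$. Using the identity $(\Id\otimes B^T)\ket{\varphi_d}=(B\otimes\Id)\ket{\varphi_d}$, we compute
\begin{equation*}
\bra{\varphi_d}\pi(\alpha)\otimes\pi_B(\beta)\ket{\varphi_d}=\tfrac1d\operatorname{Tr}\big(\pi(\alpha)\pi(\omega(\beta))\big)=\tau(\alpha\,\omega(\beta))=f_\tau(\alpha\otimes\beta).
\end{equation*}
So $f_{\wtd S}=f_\tau$, and $\wtd S$ produces $p$.

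\textbf{Step 4: $p$ self-tests $\wtd S$.} By the correspondence between amenable tracial states and states on $\PVM{X,A}\otimes_{\min}\PVM{X,A}$, every finite-dimensional state implementing $p$ comes from a finite-dimensional implementing trace. By hypothesis that trace can only be $\tau$, so $f_\tau$ is the unique finite-dimensional implementing state. The strategy $\wtd S$ is projective and full-rank, since $\ket{\varphi_d}$ is maximally entangled. Hence \Cref{thm:state-self-testing}(b) applies, and $p$ self-tests for $\mcC_{\mathrm{PVM}}$.

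Here lies the main subtlety I expect. I must check that the ideal strategy produced in that proof can be taken to be the GNS strategy of the unique state, which here is $\wtd S$ up to local unitaries. Concretely, I would use the following argument. For any PVM strategy $S$ with $p_S=p$, we have $f_S=f_\tau$. The GNS representation of $f_S$ restricted to the cyclic subspace is then unitarily equivalent to that of $f_\tau$. This equivalence yields local isometries $V_A$, $V_B$ and an auxiliary state $\ket{\kappa}$, using that $\pi(\PVM{X,A})=M_d(\C)$ is a full matrix factor on each side. Finally, the lifting results of \Cref{subsec:lifting}, which apply because $p$ is extreme and synchronous, extend the self-test to all strategy classes.
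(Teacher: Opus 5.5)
Your proposal is correct. The paper gives no proof of its own here; it imports the statement from \cite[Corollary~6.7]{Zhao24}. Your argument rebuilds it from exactly the ingredients the paper records:
\begin{itemize}
\item Step~1 is the converse of \Cref{lem:unique-extreme-trace-implies-extreme-correlation}.
\item Step~2 supplies the direction of \Cref{lem:matrix-trace-extreme} that the paper does not state, namely that an extreme finite-dimensional trace is a matrix trace.
\item Step~3 makes the $\tau\leftrightarrow f_\tau$ correspondence concrete via the transpose.
\item Step~4 uses uniqueness of the finite-dimensional implementing state.
\end{itemize}

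Two points in Step~4 should be tightened.

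First, the appeal to \Cref{thm:state-self-testing}(b) only shows that $p$ self-tests \emph{some} strategy. It is your direct dilation argument that proves the ``moreover'' clause, so (b) can be dropped.

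Second, that direct argument should not rest on the equivalence of the cyclic GNS subrepresentations, because that equivalence is a global unitary and does not split into local isometries. What localizes it is the following.
\begin{enumerate}
\item Since $(\pi_A\oplus\pi)(\PVM{X,A})$ is a finite-dimensional $C^*$-algebra, there is $c\in\PVM{X,A}$ with $\pi(c)=\Id_d$ and $\pi_A(c)=z_A$. Here $z_A$ is the projection onto the $[\pi]$-isotypic part $z_A\mcH_A\cong\C^d\otimes\mcK_A$.
\item Then $f_S(c\otimes\Id)=f_\tau(c\otimes\Id)=\tau(c)=1$, which forces $(z_A\otimes\Id)\ket{\psi}=\ket{\psi}$. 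The same argument works on Bob's side with $\pi_B$.
\item On $(\C^d\otimes\mcK_A)\otimes(\C^d\otimes\mcK_B)$, the equality $f_S=f_\tau$ and surjectivity of $\pi\otimes\pi_B$ onto $M_d\otimes M_d$ make the reduced state of $\ket{\psi}$ on $\C^d\otimes\C^d$ equal to the pure state $\ket{\varphi_d}\bra{\varphi_d}$. Hence $\ket{\psi}=\ket{\varphi_d}\otimes\ket{\kappa}$.
\item Extending the isotypic identifications isometrically on the complements of $z_A\mcH_A$ and $z_B\mcH_B$ then gives $V_A$ and $V_B$.
\end{enumerate}
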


Finally, we recall the following results from \cite{Zhao24}: if a synchronous correlation $p$ admits a unique \emph{extreme} finite-dimensional implementing tracial state, then $p$ itself must be extreme.

\begin{lemma}\label{lem:unique-extreme-trace-implies-extreme-correlation}
    Suppose that $p\in C_q^s(X,A)$ has a unique finite-dimensional implementing tracial state $\tau$. If $\tau$ is an extreme point of $T_{\mathrm{fin}}(\PVM{X,A})$, then $p$ is an extreme synchronous quantum correlation.
\end{lemma}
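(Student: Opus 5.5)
We need to prove: if $p\in C_q^s(X,A)$ has a unique finite-dimensional implementing tracial state $\tau$, and $\tau$ is extreme in $T_{\mathrm{fin}}(\PVM{X,A})$, then $p$ is extreme in $C_q^s(X,A)$, and hence in $C_q(X,X,A,A)$ since $C_q^s(X,A)$ is a face.

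The plan is a contradiction argument that transfers a nontrivial convex decomposition of $p$ to one of $\tau$. Suppose $p=tp_1+(1-t)p_2$ with $0<t<1$ and $p_1,p_2\in C_q(X,X,A,A)$. Because $C_q^s(X,A)$ is a face of $C_q(X,X,A,A)$, as noted in the preliminaries, both $p_1$ and $p_2$ are synchronous. By \Cref{thm:synchronous-quantum-fd-trace} there are finite-dimensional tracial states $\tau_1,\tau_2$ on $\PVM{X,A}$ implementing $p_1,p_2$ respectively.

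Next form $\sigma:=t\tau_1+(1-t)\tau_2$. This is a tracial state implementing $p$, since implementation is affine in the state. It is also finite-dimensional. One way to see this is that $\sigma$ factors through $\pi_{\tau_1}\oplus\pi_{\tau_2}$, whose image is a finite-dimensional $C^*$-algebra, so its GNS space is finite-dimensional; this uses the characterization ``factors through a finite-dimensional $C^*$-algebra'' stated in the preliminaries. Thus $\sigma\in T_{\mathrm{fin}}(\PVM{X,A})$ implements $p$, and uniqueness gives $\sigma=\tau$.

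Since $\tau$ is extreme in $T_{\mathrm{fin}}(\PVM{X,A})$ and $\tau=t\tau_1+(1-t)\tau_2$ with $\tau_1,\tau_2$ in that set, we get $\tau_1=\tau_2=\tau$. Evaluating on the products $e_{x,a}e_{y,b}$ then yields $p_1=p_2=p$, so $p$ is extreme.

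The only step needing any care is checking that the convex combination of finite-dimensional traces is again finite-dimensional, so that the uniqueness hypothesis applies to $\sigma$. The direct-sum representation argument above settles it. Everything else is formal.
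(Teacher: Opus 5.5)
Your proof is correct and follows the paper's own argument: use the face property to get $p_1,p_2$ synchronous, take finite-dimensional implementing traces $\tau_1,\tau_2$, identify $t\tau_1+(1-t)\tau_2$ with $\tau$ by uniqueness, and conclude $\tau_1=\tau_2=\tau$ from extremality. Your direct-sum argument showing that this convex combination is again finite-dimensional is valid, and it fills in a step the paper simply asserts.
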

Recall that $T_{\mathrm{fin}}(\PVM{X,A})$ denotes the set of finite-dimensional tracial states on $\PVM{X,A}$. For completeness, we include a short proof.
\begin{proof}
     Suppose $p=t p_1+(1-t)p_2$ for some $0<t<1$ and $p_1,p_2\in C_q(X,X,A,A)$. Since $C_q^s(X,A)$ is a face of $C_q(X,X,A,A)$, both $p_1,p_2\in C_q^s(X,A)$. By \Cref{thm:synchronous-quantum-fd-trace}, there exist finite-dimensional tracial states $\tau_1$ and $\tau_2$ on $\PVM{X,A}$ implementing $p_1$ and $p_2$, respectively. Then $ t\tau_1+(1-t)\tau_2$ is a finite-dimensional tracial state implementing $p$, and hence equals $\tau$ by uniqueness. Since $\tau$ is extreme, $\tau_1=\tau_2=\tau$. Consequently, $p_1=p_2=p$. This proves that $p$ is extreme in $C_q(X,X,A,A)$, and hence extreme in $C_q^s(X,A)$.
\end{proof}

\section{Encoding algebras by projections}
\label{sec:projections_generating}

\begin{definition}
\label{def:real-coefficient-unitary}
A unital $C^*$-algebra $\mcA$ has a \emph{finite real-coefficient unitary presentation} if there are integers $d,m\geq1$, unitaries $u_1,\ldots,u_d\in\mcA$, a finite set $\mcS$ of formal $*$-monomials in $u_1,\ldots,u_d$, and real coefficients $\lambda_{j,w}$, $w\in\mcS$, $1\leq j\leq m$, such that $\mcA$ is the universal unital $C^*$-algebra generated by $u_1,\ldots,u_d$ subject to the relations
\begin{equation}
    R_j:=\sum_{w\in\mcS}\lambda_{j,w}w=0,\quad 1\leq j\leq m.
    \label{eq:pc-presentation}
\end{equation}
\end{definition}
Here a $*$-monomial in $u_1,\ldots,u_d$ is a finite product of the variables $u_1,\ldots,u_d,u_1^*,\ldots,u_d^*$, with the empty product allowed and denoted by $\Id$. We fix a presentation of this form, collect identical formal $*$-monomials, and use coefficient zero when a monomial does not appear in a particular relation. We may assume that every $w\in\mcS$ has a nonzero coefficient in at least one relation.

%When a $*$-monomial is used to label a matrix index, it refers to itsformal expression. When it appears as a matrix entry, it denotes its value in $\mcA$. For instance, $u_1^*u_1$ and $\Id$ label different indices, although their values in $\mcA$ are equal. For another tuple of unitaries $U=(U_1,\ldots,U_d)$, write $w(U)$ for the product obtained by replacing each $u_i$ in $w$ with $U_i$ and each $u_i^*$ with $U_i^*$, and set $\Id(U):=\Id$.

The purpose of this section is to prove the following construction.
\begin{theorem}
\label{thm:projection-construction}
Suppose $\mcA$ has a finite real-coefficient unitary presentation and admits a tracial state. There exist a positive integer $N$, a finite set $X$, and projections $P_x\in M_N(\mcA)$, $x\in X$, with the following properties.
\begin{enumerate}[(a)]
    \item The projections $P_x$ generate $M_N(\mcA)$. In particular, the unital $*$-homomorphism
    \begin{equation*}
        \Phi:\PVM{X,\Z_2}\arr M_N(\mcA) ~\text{ sending }~ e_{x,1}\mapsto P_x
    \end{equation*}
    is surjective.

    \item There is a real array $(p_{x,y})_{x,y\in X}$ such that
    \begin{equation*}
        (\tr_N\otimes\tau)(P_xP_y)=p_{x,y}
    \end{equation*}
    for every tracial state $\tau$ on $\mcA$ and all $x,y\in X$.

    \item If a tracial state $\varphi$ on $\PVM{X,\Z_2}$ satisfies $\varphi(e_{x,1}e_{y,1})=p_{x,y}$ for all $x,y\in X$, then there is a unique tracial state $\tau_\varphi$ on $\mcA$ such that
    \begin{equation*}
        \varphi=(\tr_N\otimes\tau_\varphi)\circ\Phi.
    \end{equation*}
\end{enumerate}
\end{theorem}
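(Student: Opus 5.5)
The plan is to write down an explicit family of "linking" projections in $M_N(\mcA)$. Every pairwise product of these projections should involve only the identity, or monomials placed in distinct matrix positions. Then traces of pairwise products come out as combinatorial constants, which gives (b). Part (c) is then a rigidity statement: any abstract tracial state with these pairwise values forces the images of the generators to satisfy enough relations to define a homomorphism out of $M_N(\mcA)$.

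\emph{Construction.} Index the matrix positions by a base index $0$, by the formal monomials $w\in\mcS$, and by a few auxiliary indices; this fixes $N$. For each $w$ there is a path of indices from $0$ to $w$ whose edges are labelled by single letters $u_k^{\pm1}$, so the labels along the path multiply to $w$. For an edge $i\!-\!j$ with unitary label $v$, set
\begin{equation*}
Q^{\pm}_{ij}=\tfrac12\bigl(\ket{i}\bra{i}\otimes\Id+\ket{j}\bra{j}\otimes\Id\pm\ket{i}\bra{j}\otimes v\pm\ket{j}\bra{i}\otimes v^*\bigr).
\end{equation*}
The index set $X$ collects the diagonal projections $E_i=\ket{i}\bra{i}\otimes\Id$, the edge projections $Q^{\pm}_{ij}$, and one projection $V_jV_j^*$ per relation $R_j$. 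Here $V_j=\|\lambda_j\|^{-1}\sum_{w}\lambda_{j,w}\ket{w}\otimes\Id$ is read in the "transported" frame; since the entries sit in distinct rows, $V_j^*V_j=\Id$ because the coefficients are real. Part (a) follows because $E_iQ^{+}_{ij}E_j=\tfrac12\ket{i}\bra{j}\otimes v$. These products give all matrix units, since some edges carry the label $\Id$ and the graph is connected. They also give $\Id_N\otimes u_k$. Part (b) is a direct computation. Two distinct edge projections share at most one diagonal slot, so their products only contribute diagonal terms. A relation projection meets the others only through real scalar multiples of $\Id$, because the relation $R_j=0$ holds in $\mcA$. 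Hence $(\tr_N\otimes\tau)(P_xP_y)$ is independent of $\tau$.

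\emph{Rigidity, for (c).} Given $\varphi$, pass to the GNS von Neumann algebra, where the trace is faithful; by the preliminaries, $\tau(PQ)=0$ then forces $PQ=0$, and $P\le Q$ with equal traces forces $P=Q$. First, $\varphi(E_iE_j)=0$ for $i\neq j$ and $\sum_i\varphi(E_i)=1$ make the images of the $E_i$ orthogonal projections summing to $\Id$. Second, $\varphi(Q^+Q^-)=0$ and the trace count give $Q^++Q^-=E_i+E_j$. Third, I must show that the compression $E_iQ^+E_i$ equals $\tfrac12E_i$; then $2E_iQ^+E_j$ is a partial isometry from the $j$-corner onto the $i$-corner, which yields a system of matrix units together with unitaries $U_k$ in the $0$-corner. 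Fourth, the relation projections must force $\sum_w\lambda_{j,w}w(U)=0$ via $\varphi\bigl((\text{relation proj})\cdot(\text{complement})\bigr)=0$. Then the universal property of $\mcA$ gives a map $\mcA\to E_0\mcM E_0$, hence $M_N(\mcA)\to\mcM$ agreeing with $\Phi$ on the generators. Take $\tau_\varphi$ to be the normalized trace on the $0$-corner composed with this map. Uniqueness of $\tau_\varphi$ follows from the surjectivity of $\Phi$.

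\emph{Main obstacle.} The third step is the crux, because $\tau(E_iQ^+)=\tfrac12\tau(E_i)$ alone does not force the compression to be scalar. My first attempt would be to add, for each edge, a third projection $T_{ij}$ whose pairwise traces against $E_i$ and $Q^\pm$ pin down "angle $\pi/4$". An example is the analogous projection built from an auxiliary index and label $\Id$, so that $E_i,Q^+,T$ form a rigid configuration of three projections. Rigidity would then come from an equality case: a sum of pairwise traces that equals a known maximum only when the compression is scalar. That equality case is exactly where faithfulness gives exact operator identities. Real coefficients matter at this step and at the relation step, since they make all cross terms self-adjoint and hence controlled by traces of pairwise products.
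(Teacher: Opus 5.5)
Your outline of (c) is the paper's: pass to the GNS von Neumann algebra, use faithfulness to turn vanishing pair values into operator identities, build matrix units over the base corner, recover unitaries there, and apply the universal property. But the family you propose cannot carry the relations, and this is a genuine gap. Apart from the relation projections, every member of your family has at most two nonzero slots. So a pair value involving $V_jV_j^*$ sees at most two entries of the relation column, hence at most one product $w_1^*vw_2$ (and its adjoint). Two relation columns pair to the scalar $\sum_w\lambda_{j,w}\lambda_{j',w}\Id$. The linear combination $R_j=\sum_w\lambda_{j,w}w$ therefore never enters your data, and nothing can force $\sum_w\lambda_{j,w}w(U)=0$ in your fourth step; a multi-term relation such as the last one in \eqref{eq:rot-presentation} is invisible. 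For example, take the transported column $\sum_w\lambda_{j,w}\ket{w}\otimes w$ with edges along the letter paths. Then every pair value is a formal identity already valid over the free unital $C^*$-algebra $\mathcal{F}$ on $d$ unitaries. Running the same construction over $\mathcal{F}$ with a trace $\sigma$ satisfying $\sigma(R_j^*R_j)>0$ (one exists whenever $R_j\neq0$ in $\mathcal{F}$) gives a $\varphi$ with the same $p_{x,y}$. Since your family generates $M_N(\mathcal{F})$, a factorization through $M_N(\mcA)$ would force $\sigma(R_j^*R_j)=0$, so (c) fails.

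What is missing is a second multi-slot column of the opposite type. The paper stores every monomial and every suffix in one column $\eta$ at positions $a_w$, with duplicates at positions $b_w$ joined to $a_w$ by $\Id$-labelled edges. It pairs $\eta$ with scalar columns $\xi_j$ supported on index $1$ and the $b_w$. Then $\xi_j^*\eta=R_j/\sqrt{k\Lambda_j}$, so $EF_j=0$ holds exactly because $R_j=0$ (\Cref{lem:pc-relation-projections}). The duplicates keep $F_j$ off the letter-labelled edges, which preserves (b). You also need rigidity for these two projections in the GNS picture. In \Cref{lem:pc-column-projection}(a), $\widehat E$ is forced to equal $\eta_U\eta_U^*$ by its support, its trace $1/N$, and $D_{a,b}(h(U))\widehat E=0$. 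In \Cref{lem:pc-scalar-column}, $\widehat F_j$ is pinned down through the real combinations $B_a(t)$; this is where real coefficients enter.

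A second gap comes from the very feature that makes your (b) easy: distinct edges share at most one slot, so the pair data never relates two edges carrying the same letter. In a general tracial GNS representation, each occurrence of $u_k$ yields its own unitary in the corner, and so does each $\Id$-labelled edge outside a spanning tree. There is then no single $U_k$ to feed to the universal property. The paper fixes a reference edge $(s_i,t_i)$ for each generator and adds, for each occurrence, a four-slot $T_{a,b}(h)$. By \Cref{lem:pc-equal-coefficients}, a matrix of this shape is a projection only if the two unitaries agree. $T_{a,b}(h)$ is also a real linear combination of other members of the family, with the scalar projections $C_{a,b}$ supplying the off-diagonal scalar entries \eqref{eq:pc-comparison-linear}.

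This device and the crux you flagged both rest on a principle you should state explicitly. If $L=\gamma_0\Id+\sum_x\gamma_xP_x=0$ with real $\gamma$'s, and $L'$ is the same combination of the GNS images $P'_x$, then $\tau_{\mcM}(L'^*L')=(\tr_N\otimes\varphi)(L^*L)=0$, so $L'=0$. For the crux, add $K=\tfrac12(q_a+q_b)+\tfrac1{\sqrt2}(G-q_b)$, the projection onto $\cos(\pi/8)\ket{a}\otimes h+\sin(\pi/8)\ket{b}\otimes\Id$. Transferring this identity and squaring it gives $q_aGq_a=\tfrac12q_a$, and then $2q_aGq_b$ is a partial isometry (\Cref{lem:pc-two-projections}). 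No auxiliary index or extremal inequality is needed.
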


Our starting point is to encode each defining relation using columns in $M_{N,1}(\mcA)$ and their range projections in $M_N(\mcA)$. For example, consider the commutation relation $xy-yx=0$, where $x,y$ are unitaries in a unital $C^*$-algebra $\mcA$. Define
\begin{equation*}
    \eta:=\frac1{\sqrt5} \begin{pmatrix}0&\Id&y&xy&x&yx\end{pmatrix}^{T} ~\text{ and }~ \xi:=\frac1{\sqrt3} \begin{pmatrix}\Id&0&0&\Id&0&-\Id\end{pmatrix}^{T}
\end{equation*}
in $M_{6,1}(\mcA)$. Since $\eta^*\eta=\xi^*\xi=\Id$, the matrices $E:=\eta\eta^*$ and $F:=\xi\xi^*$ are projections in $M_6(\mcA)$. Moreover,
\begin{equation*}
    \xi^*\eta=\frac{xy-yx}{\sqrt{15}},
\end{equation*}
so $xy-yx=0$ if and only if $\xi^*\eta=0$, or equivalently $EF=0$. We explain how these columns are chosen from the relation in \Cref{ex:pc-commutation}.

When $xy-yx=0$, every tracial state $\tau$ on $\mcA$ satisfies
\begin{equation*}
    (\tr_6\otimes\tau)(E^2) =(\tr_6\otimes\tau)(F^2)=\frac16 ~\text{ and }~ (\tr_6\otimes\tau)(EF)=0.
\end{equation*}
Thus the traces of all products of two projections in $\{E,F\}$ are independent of $\tau$, as required in part~(b). In the general construction, we will use one column $\eta$ containing all the required $*$-monomials and a separate scalar column $\xi_i$ for each relation $R_i$. Their projections $E,F_1,\ldots,F_m$ will express the defining relations through $EF_j=0$ for all $1\leq j\leq m$.

These projections form the starting point of the generating family. The main difficulty is not to add enough projections to generate $M_N(\mcA)$, but to do so while preserving part~(b) and ensuring part~(c). In particular, the prescribed traces of products must force every corresponding tracial GNS representation to factor through $\Phi$. For this, we first establish some technical lemmas that construct the additional projections and determine their form from traces of products.

\subsection{Elementary projection formulas}

A column $\zeta\in M_{N,1}(\mcA)$ is \emph{normalized} if $\zeta^*\zeta=\Id_{\mcA}$. We call a projection $Q\in M_N(\mcA)$ \emph{rank one over $\mcA$} if $Q=\zeta_Q\zeta_Q^*$ for a normalized column $\zeta_Q\in M_{N,1}(\mcA)$. Whenever we use this notation, a column has been chosen, and it need not be unique. This does not assert that $Q$ has one-dimensional range in a Hilbert-space representation. For two such projections $P,Q$ and any tracial state $\tau$ on $\mcA$, traciality gives
\begin{equation*}
    (\tr_N\otimes\tau)(Q)=\frac1N 
\end{equation*}
and
\begin{equation}
    (\tr_N\otimes\tau)(PQ) =\frac1N\tau\bigl((\zeta_P^*\zeta_Q)^* (\zeta_P^*\zeta_Q)\bigr).
    \label{eq:pc-column-pair-trace}
\end{equation}
In particular, if $(\zeta_P^*\zeta_Q)^*(\zeta_P^*\zeta_Q)$ is a fixed scalar multiple of $\Id_{\mcA}$, then \Cref{eq:pc-column-pair-trace} implies that the value $(\tr_N\otimes\tau)(PQ)$ is independent of the choice of $\tau$.

\begin{lemma}
\label{lem:pc-basic-projections}
Let $\mcA$ be a unital $C^*$-algebra and $N\geq1$. For every $1\leq a\leq N$, let
\begin{equation}
    q_a:=\ket{a}\bra{a}\otimes\Id_{\mcA}.
    \label{eq:pc-indexed-qx}
\end{equation}
For distinct $a,b\in\{1,\ldots,N\}$ and any unitary $h\in\mcA$, define
\begin{equation}
    \begin{aligned}
        G_{a,b}(h)&:=\frac12\bigl(q_a+q_b +\ket{a}\bra{b}\otimes h+\ket{b}\bra{a}\otimes h^*\bigr),\\
        K_{a,b}(h)&:=\frac12(q_a+q_b) +\frac1{\sqrt2}\bigl(G_{a,b}(h)-q_b\bigr),\\
        D_{a,b}(h)&:=q_a+q_b-G_{a,b}(h).
    \end{aligned}
    \label{eq:pc-indexed-matrices}
\end{equation}
All these operators in $M_N(\mcA)$ are rank-one projections over $\mcA$. Moreover, for any column $\xi=\sum_{i=1}^{N}\ket{i}\otimes\xi_i\in M_{N,1}(\mcA)$, the equality $D_{a,b}(h)\xi=0$ if and only if $\xi_a=h\xi_b$.
\end{lemma}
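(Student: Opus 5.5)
The plan is to exhibit explicit normalized columns for each of $G_{a,b}(h)$, $K_{a,b}(h)$ and $D_{a,b}(h)$, and then to read off the kernel condition from the column of $D_{a,b}(h)$. By ``rank-one projection over $\mcA$'' we mean an element of the form $\zeta\zeta^*$ with $\zeta^*\zeta=\Id_{\mcA}$. Any such element is automatically a projection, since $(\zeta\zeta^*)^2=\zeta(\zeta^*\zeta)\zeta^*=\zeta\zeta^*$ and it is clearly self-adjoint. So it suffices to produce a normalized column $\zeta$ and verify that $\zeta\zeta^*$ equals the given matrix. Throughout, write $\ket{a}\otimes c$ for the column with entry $c$ in position $a$.

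For $G_{a,b}(h)$ I take
\begin{equation*}
    \gamma:=\tfrac1{\sqrt2}\bigl(\ket{a}\otimes h+\ket{b}\otimes\Id\bigr).
\end{equation*}
Then $\gamma^*\gamma=\tfrac12(h^*h+\Id)=\Id$ because $h$ is unitary. Expanding, $\gamma\gamma^*=\tfrac12\bigl(\ket{a}\bra{a}\otimes hh^*+\ket{a}\bra{b}\otimes h+\ket{b}\bra{a}\otimes h^*+\ket{b}\bra{b}\otimes\Id\bigr)$, which is exactly $G_{a,b}(h)$.

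For $D_{a,b}(h)$ the analogous choice is
\begin{equation*}
    \delta:=\tfrac1{\sqrt2}\bigl(\ket{a}\otimes h-\ket{b}\otimes\Id\bigr).
\end{equation*}
This is normalized for the same reason. Its square $\delta\delta^*$ has the same diagonal blocks as $\gamma\gamma^*$ but negated off-diagonal blocks, so $\delta\delta^*=q_a+q_b-G_{a,b}(h)$.

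For $K_{a,b}(h)$, first expand the definition:
\begin{equation*}
    K=\tfrac12(q_a+q_b)+\tfrac1{2\sqrt2}\bigl(q_a-q_b+\ket{a}\bra{b}\otimes h+\ket{b}\bra{a}\otimes h^*\bigr).
\end{equation*}
This has the shape of a rotated vector $\cos\theta\,\ket{a}\otimes h+\sin\theta\,\ket{b}\otimes\Id$ with $\theta=\pi/8$. Indeed, set $\kappa:=\cos\theta\,\ket{a}\otimes h+\sin\theta\,\ket{b}$. Then $\kappa^*\kappa=\Id$, and $\kappa\kappa^*$ has diagonal blocks $\cos^2\theta=\tfrac12+\tfrac1{2\sqrt2}$ and $\sin^2\theta=\tfrac12-\tfrac1{2\sqrt2}$. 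Its off-diagonal blocks are $\cos\theta\sin\theta\,h=\tfrac1{2\sqrt2}h$ and the adjoint entry $\tfrac1{2\sqrt2}h^*$. Matching these entries against the expansion of $K$ above confirms $K=\kappa\kappa^*$. This is the only step requiring a small computation; the trigonometric identities $\cos^2(\pi/8)=\tfrac{2+\sqrt2}{4}$ and $\sin(\pi/4)/2=\tfrac1{2\sqrt2}$ are what make it match.

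Finally, $D\xi=\delta\delta^*\xi$. Since $\delta^*\delta=\Id$, we have $D\xi=0$ if and only if $\delta^*\xi=0$: one direction follows by multiplying on the left by $\delta^*$, and the other is immediate. Computing, $\delta^*\xi=\tfrac1{\sqrt2}(h^*\xi_a-\xi_b)$, which vanishes if and only if $\xi_a=h\xi_b$, again using that $h$ is unitary. No step is a real obstacle; the matrix identity for $K$ is the one place to be careful.
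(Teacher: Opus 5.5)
Your proof is correct and follows the paper's argument: you use the same normalized columns for $G_{a,b}(h)$, $K_{a,b}(h)$ (with $\cos(\pi/8),\sin(\pi/8)$) and $D_{a,b}(h)$, and you reduce $D_{a,b}(h)\xi=0$ to $\tfrac1{\sqrt2}(h^*\xi_a-\xi_b)=0$ using that the column is normalized. The only item you leave implicit is $q_a$ itself, which is rank one over $\mcA$ via the column $\ket{a}\otimes\Id_{\mcA}$.
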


\begin{proof}
The normalized columns
\begin{equation}
    \begin{aligned}
        \zeta_{q_a}&:=\ket{a}\otimes\Id_{\mcA},\\
        \zeta_{G_{a,b}(h)}&:=\frac1{\sqrt2} \bigl(\ket{a}\otimes h+\ket{b}\otimes\Id_{\mcA}\bigr),\\
        \zeta_{K_{a,b}(h)}&:=\cos(\pi/8)\ket{a}\otimes h +\sin(\pi/8)\ket{b}\otimes\Id_{\mcA},\\
        \zeta_{D_{a,b}(h)}&:=\frac1{\sqrt2} \bigl(\ket{a}\otimes h-\ket{b}\otimes\Id_{\mcA}\bigr)
    \end{aligned}
    \label{eq:pc-basic-columns}
\end{equation}
satisfy $Q=\zeta_Q\zeta_Q^*$ for their respective matrices $Q$.

For the ``moreover" part, observe that $\zeta_{D_{a,b}(h)}^*\xi=\tfrac{1}{\sqrt{2}}(h^*\xi_a-\xi_b)$. Since $\zeta_{D_{a,b}(h)}$ is normalized, this vanishes if and only if $D_{a,b}(h)\xi=0$.
\end{proof}

 \Cref{eq:pc-indexed-qx,eq:pc-indexed-matrices} in the above lemma construct projections from a given unitary, which will be used as generators for $M_N(\mcA)$ in \Cref{thm:projection-construction}. The next lemma gives the converse needed later for part (c) of \Cref{thm:projection-construction}. Using a linear identity between two projections, we can determine the partial isometry appearing in their matrix formulas.

\begin{lemma}
\label{lem:pc-two-projections}
Let $Q_A,Q_B,G,K$ be projections in a unital $C^*$-algebra satisfying
\begin{equation*}
    Q_AQ_B=0,\quad (Q_A+Q_B)G=G,\quad (Q_A+Q_B)K=K,
\end{equation*}
and
\begin{equation}
    K=\frac12(Q_A+Q_B)+\frac1{\sqrt2}(G-Q_B).
    \label{eq:pc-two-projection-relation}
\end{equation}
Then there is a unique partial isometry $x$ satisfying
\begin{equation*}
    x^*x=Q_B,\quad xx^*=Q_A, ~\text{ and }~ G=\frac12(Q_A+Q_B+x+x^*).
\end{equation*}
It is given by $x=2Q_AGQ_B$.
\end{lemma}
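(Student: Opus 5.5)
The plan is to solve the linear relation \eqref{eq:pc-two-projection-relation} for $G$, and then read off the corner structure of $G$ relative to $Q_A$ and $Q_B$. Set $Q:=Q_A+Q_B$, which is a projection because $Q_AQ_B=0$. The relation gives
\begin{equation*}
    G=Q_B+\sqrt2\bigl(K-\tfrac12 Q\bigr).
\end{equation*}
Since $G$ and $K$ are both under $Q$, we have $QGQ=G$ and $QKQ=K$.

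The key observation is that $K-\tfrac12Q$ is a symmetry of $Q$-scale. More precisely, $S:=2K-Q$ is self-adjoint and satisfies $S^2=Q$, because $K\leq Q$ is a projection. Likewise $T:=2G-Q$ satisfies $T^2=Q$, and the relation becomes
\begin{equation*}
    T=\sqrt2\,S-(Q_A-Q_B).
\end{equation*}
Write $R:=Q_A-Q_B$, which also satisfies $R^2=Q$. Expanding $T^2=Q$ gives
\begin{equation*}
    Q=2Q-\sqrt2(SR+RS)+Q,
\end{equation*}
that is, $SR+RS=\sqrt2\,Q$. Substituting $S=(T+R)/\sqrt2$ then yields
\begin{equation*}
    \tfrac1{\sqrt2}(TR+RT+2Q)=\sqrt2\,Q,
\end{equation*}
so $TR+RT=0$. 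In other words, $T$ anticommutes with $R$. Consequently $T$ has no diagonal corners: $Q_ATQ_A=Q_BTQ_B=0$. Indeed, $R$ acts as $+1$ on $Q_A$ and $-1$ on $Q_B$, so
\begin{equation*}
    Q_ATQ_A=Q_ARTQ_A=-Q_ATRQ_A=-Q_ATQ_A.
\end{equation*}
Hence $T=x+x^*$ with $x:=Q_ATQ_B=2Q_AGQ_B$, using $Q_AQQ_B=0$. This establishes the formula $G=\tfrac12(Q+x+x^*)$.

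Next we show that $x$ is the claimed partial isometry. From $T^2=Q$ and $x^2=0$ (since $Q_BQ_A=0$), we get
\begin{equation*}
    T^2=xx^*+x^*x=Q_A+Q_B.
\end{equation*}
Here $xx^*$ lies in the $Q_A$-corner and $x^*x$ lies in the $Q_B$-corner. Compressing by $Q_A$ and $Q_B$ separately therefore gives $xx^*=Q_A$ and $x^*x=Q_B$.

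Finally, uniqueness is immediate. Any $x'$ with $x'^*x'=Q_B$ and $x'x'^*=Q_A$ satisfies $x'=Q_Ax'Q_B$. If in addition $G=\tfrac12(Q+x'+x'^*)$, then compressing gives
\begin{equation*}
    2Q_AGQ_B=x',
\end{equation*}
so $x'=x$. The only step requiring care is the anticommutation computation, and it is routine algebra once everything is rewritten in terms of the symmetries $R$, $S$, $T$ of $Q$.
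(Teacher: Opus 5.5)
Your proof is correct and is essentially the paper's argument in different notation. Since $G-Q_B=\tfrac12(T+R)$ and $T^2=R^2=Q_A+Q_B$, your anticommutation $TR+RT=0$ is exactly the paper's identity $(G-Q_B)^2=\tfrac12(Q_A+Q_B)$. The vanishing of $Q_ATQ_A$ and $Q_BTQ_B$ is the paper's $Q_AGQ_A=\tfrac12Q_A$ and $Q_BGQ_B=\tfrac12Q_B$. Your compression of $T^2=Q_A+Q_B$ and your uniqueness step match the paper's compression of $G^2=G$ and its uniqueness remark. The symmetries $S,T,R$ are a tidy repackaging of the same computation rather than a different route.
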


\begin{proof}
Taking adjoints also gives $G(Q_A+Q_B)=G$ and $K(Q_A+Q_B)=K$. Squaring \eqref{eq:pc-two-projection-relation} and using $K^2=K$ gives $(G-Q_B)^2=\tfrac{1}{2}(Q_A+Q_B)$. Multiplying on the left and right by $Q_A$ in one calculation and by $Q_B$ in another, and using $G^2=G$, gives
\begin{equation*}
    Q_AGQ_A=\frac{1}{2}Q_A,~\text{ amnd }~ Q_BGQ_B=\frac{1}{2}Q_B.
\end{equation*}
For $x:=2Q_AGQ_B$, the identity $(Q_A+Q_B)G(Q_A+Q_B)=G$ now gives 
\begin{equation*}
    G=(Q_A+Q_B+x+x^*)/2.
\end{equation*}
Since $x=Q_AxQ_B$, multiplying $G^2=G$ on the left and right by each of $Q_A,Q_B$ gives
\begin{equation*}
    \frac{1}{4}(Q_A+xx^*)=\frac{1}{2}Q_A ~\text{ and }~ \frac{1}{4}(Q_B+x^*x)=\frac{1}{2}Q_B.
\end{equation*}
Thus $xx^*=Q_A$ and $x^*x=Q_B$. The formula $x=2Q_AGQ_B$ also follows for any partial isometry with the stated properties, proving uniqueness.
\end{proof}

The next lemma uses equations between the entries of a column to determine its associated projection. These equations first show that any other projection satisfying them is a subprojection of the one given by the column. Equality of their traces then gives equality of the projections. The lemma also computes the pair values by comparing with matrices whose entries are scalar multiples of the identity.

\begin{lemma}
\label{lem:pc-column-projection}
Let $\mcA$ be a unital $C^*$-algebra, let $1\leq k\leq N$, and choose distinct indices $a_1,\ldots,a_k\in\{1,\ldots,N\}$. Set $c_1:=\Id_{\mcA}$. For every $2\leq j\leq k$, choose $1\leq\nu(j)<j$ and a unitary $h_j\in\mcA$, and recursively define $c_j:=h_jc_{\nu(j)}$. The following hold.
\begin{enumerate}[(a)]
    \item For every faithful tracial state $\sigma$ on $\mcA$, there is a unique projection $E\in M_N(\mcA)$ satisfying
    \begin{equation}
    E=QEQ, ~~ (\tr_N\otimes\sigma)(E)=\frac1N, ~\text{ and }~ (\tr_N\otimes\sigma) \bigl(ED_{a_j,a_{\nu(j)}}(h_j)\bigr)=0 \label{eq:pc-column-conditions}
    \end{equation}
    for all $2\leq j\leq k$, where $Q:=\sum_{j=1}^kq_{a_j}$. It is given by $E=\eta\eta^*$, where
    \begin{equation}
        \eta:=\frac1{\sqrt{k}}\sum_{j=1}^k\ket{a_j}\otimes c_j.
        \label{eq:pc-unitary-column}
    \end{equation}

    \item Pick two arbitrary projections $P,Q$ in the family
    \begin{equation*}
        \begin{aligned}
           \mcP:= &\{q_a:1\leq a\leq N\}\cup\{E\}\\
            &\quad\cup\{G_{a_j,a_{\nu(j)}}(h_j), K_{a_j,a_{\nu(j)}}(h_j), D_{a_j,a_{\nu(j)}}(h_j):2\leq j\leq k\}\subseteq M_N(\mcA),
        \end{aligned}
    \end{equation*}
    where $q_a,G_{a,b}(h),K_{a,b}(h),D_{a,b}(h)$ are defined as in \Cref{lem:pc-basic-projections}. The value $(\tr_N\otimes\tau)(PQ)$ is independent of the choice of tracial state $\tau$ on $\mcA$.
\end{enumerate}
\end{lemma}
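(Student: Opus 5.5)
Part (a) splits into existence and uniqueness.

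\emph{Existence.} Set $\eta$ as in \eqref{eq:pc-unitary-column}. Each $c_j$ is a product of unitaries, hence unitary, so $\eta^*\eta=\frac1k\sum_j c_j^*c_j=\Id_{\mcA}$. Thus $E=\eta\eta^*$ is a rank-one projection over $\mcA$, and the trace formula before \Cref{lem:pc-basic-projections} gives $(\tr_N\otimes\sigma)(E)=1/N$. Clearly $Q\eta=\eta$, so $E=QEQ$. Finally, $c_j=h_jc_{\nu(j)}$, so by the ``moreover'' part of \Cref{lem:pc-basic-projections} we get $D_{a_j,a_{\nu(j)}}(h_j)\eta=0$. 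This makes each trace in \eqref{eq:pc-column-conditions} vanish.

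\emph{Uniqueness.} Let $E'$ be any projection satisfying \eqref{eq:pc-column-conditions}. Write $D_j:=D_{a_j,a_{\nu(j)}}(h_j)$, and note $(\tr_N\otimes\sigma)(E'D_j)=(\tr_N\otimes\sigma)((D_jE')^*(D_jE'))$. Since $\sigma$ is faithful, so is $\tr_N\otimes\sigma$ on $M_N(\mcA)$, and therefore $D_jE'=0$. Now take any column $\xi=E'\zeta$ of $E'$, i.e.\ $E'$ applied to a column $\zeta$ (equivalently, work with each column of the matrix $E'$). Because $E'=QE'Q$, its entries $\xi_i$ vanish for $i\notin\{a_1,\dots,a_k\}$. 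Because $D_j\xi=0$, \Cref{lem:pc-basic-projections} gives $\xi_{a_j}=h_j\xi_{a_{\nu(j)}}$. Inducting on $j$ yields $\xi_{a_j}=c_j\xi_{a_1}$, so $\xi=\sqrt k\,\eta\,\xi_{a_1}$ lies in $\eta\mcA$. Applying this to all of $E'$ gives $E'=\eta\eta^*E'=EE'$, hence $E'\le E$. Both have trace $1/N$ under the faithful trace, so $E-E'\ge0$ has trace zero and $E'=E$. The main point needing care is this passage from columns to the operator inequality; I would phrase it as $(\Id-E)E'=0$, verified entrywise using $\xi=\eta\eta^*\xi$ whenever $\xi$ has the form $\eta a$.

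Part (b) uses \eqref{eq:pc-column-pair-trace}. Every member of $\mcP$ is rank one over $\mcA$, with the columns from \eqref{eq:pc-basic-columns} and $\eta$. It therefore suffices to show that for each pair $P,Q$ of columns $\zeta_P,\zeta_Q$, the element $\zeta_P^*\zeta_Q$ is a scalar multiple of a unitary. Then $(\zeta_P^*\zeta_Q)^*(\zeta_P^*\zeta_Q)$ is a fixed scalar, and the trace of $PQ$ is independent of $\tau$.

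\emph{Strategy for the basis columns.} Change basis by the diagonal unitary $W:=\sum_j\ket{a_j}\bra{a_j}\otimes c_j+\sum_{a\notin\{a_j\}}q_a$. Conjugation by $W$ preserves products and $\tr_N\otimes\tau$ (by traciality), and it turns $\eta$ into a column with entries $\frac1{\sqrt k}\Id$. It also keeps each $q_a$ fixed.

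\emph{Strategy for the $h_j$-columns.} The columns for $G,K,D$ attached to $h_j$ become $\alpha\ket{a_j}\otimes c_j^*h_jc_{\nu(j)}+\beta\ket{a_{\nu(j)}}\otimes c_{\nu(j)}^*c_{\nu(j)}$. Since $c_j=h_jc_{\nu(j)}$, we have $c_j^*h_jc_{\nu(j)}=\Id$, so both entries are $\alpha\Id$ and $\beta\Id$.

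After conjugation all columns therefore have scalar entries, every $\zeta_P^*\zeta_Q$ is scalar, and the traces are constants.
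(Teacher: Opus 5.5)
Your proposal is correct and follows the paper's proof essentially step for step. In (a) you use faithfulness to get $D_{a_j,a_{\nu(j)}}(h_j)E'=0$, propagate the recursion to show $EE'=E'$, and conclude from equal traces; in (b) you use the same diagonal unitary $W$ to turn every member of $\mcP$ into a scalar matrix tensored with $\Id$. One cosmetic point: $W^*\zeta_{G_{a_j,a_{\nu(j)}}(h_j)}$ literally has entries proportional to $c_j^*h_j=c_{\nu(j)}^*$, so your displayed form silently re-chooses the column by right-multiplying with $c_{\nu(j)}$. This is harmless, and you could instead conclude directly from $W^*PW=P^{(0)}\otimes\Id$ as the paper does.
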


\begin{proof}
For part (a), we first define $E:=\eta\eta^*$ using \Cref{eq:pc-unitary-column} and examine that it satisfies all the conditions in \eqref{eq:pc-column-conditions}. Every $c_j$ is unitary, so $\eta^*\eta=\Id_\mcA$ and $E=\eta\eta^*$ is a rank-one projection over $\mcA$. It follows that $(\tr_N\otimes\tau)(E)=\tfrac{1}{N}$ for every tracial state $\tau$ on $\mcA$. The recursion $c_j:=h_jc_{\nu(j)}$ and the ``moreover" part of \Cref{lem:pc-basic-projections} give $D_{a_j,a_{\nu(j)}}(h_j)\eta=0$, so $E$ satisfies \eqref{eq:pc-column-conditions}.

For uniqueness in~(a), fix a faithful tracial state $\sigma$ and let $P$ be a projection satisfying \eqref{eq:pc-column-conditions}. Faithfulness then implies
\begin{equation}
    D_{a_j,a_{\nu(j)}}(h_j)P=0 \label{eq:DP=0}
\end{equation}
for all $2\leq j\leq k$. Writing $P_{a,b}\in\mcA$ for the matrix entries of $P\in M_N(\mcA)$, the $(a_j,b)$-entry of \Cref{eq:DP=0} gives
\begin{equation*}
    0=\frac12\bigl(P_{a_j,b}-h_jP_{a_{\nu(j)},b}\bigr)
\end{equation*}
for all $2\leq j\leq k$ and $1\leq b\leq N$. We now show by induction on $j$ that $P_{a_j,b}=c_jP_{a_1,b}$ for every $b$. The case $j=1$ follows from $c_1=\Id_{\mcA}$. For $j\geq2$, since $\nu(j)<j$, the induction hypothesis on $\nu(j)$ and the equality $c_j=h_jc_{\nu(j)}$ give
\begin{equation*}
    P_{a_j,b} =h_jP_{a_{\nu(j)},b} =h_jc_{\nu(j)}P_{a_1,b} =c_jP_{a_1,b}.
\end{equation*}
This completes the induction. Together with $P=QPQ$, these identities give $EP=P$. Indeed, for $1\leq i\leq k$ and every $b$,
\begin{equation*}
    (EP)_{a_i,b}=\frac1k\sum_{j=1}^k c_ic_j^*P_{a_j,b} =c_iP_{a_1,b}=P_{a_i,b},
\end{equation*}
and all other rows are zero. Thus $P\leq E$. Both projections have trace $\tfrac{1}{N}$, so faithfulness gives $P=E$. This proves the uniqueness of $E$.

For part~(b), for each $P\in\mcP$, let $P^{(0)}\in M_N(\C)$ be obtained from its defining formula by replacing every $h_j,c_j$, and $\Id_{\mcA}$ with $1$, without changing the indices. For instance,
    \begin{equation*}
        q_a^{(0)}=\ket{a}\bra{a},\quad E^{(0)}=\frac1k\sum_{i,j=1}^k\ket{a_i}\bra{a_j}.
    \end{equation*}
    Observe that the unitary
\begin{equation*}
    W:=\sum_{j=1}^k\ket{a_j}\bra{a_j}\otimes c_j +\sum_{\substack{1\leq a\leq N\\a\notin\{a_1,\ldots,a_k\}}}q_a \in M_N(\mcA)
\end{equation*}
fixes each $q_a$ by conjugation and satisfies
\begin{equation*}
    W^*EW=\frac1k\sum_{i,j=1}^k\ket{a_i}\bra{a_j}\otimes\Id=E^{(0)}\otimes\Id.
\end{equation*}
Moreover, since $c_j^*h_jc_{\nu(j)}=\Id$, we have
\begin{align*}
    W^*&G_{a_j,a_{\nu(j)}}(h_j)W\\ & = \ket{a_j}\bra{a_j}\otimes\Id + \ket{a_{\nu(j)}}\bra{a_{\nu(j)}}\otimes\Id+ \ket{a_j}\bra{a_{\nu(j)}}\otimes c_j^*h_jc_{\nu(j)}+\ket{a_{\nu(j)}}\bra{a_j}\otimes c_{\nu(j)}^*h_j^*c_j \\
         &=G^{(0)}_{a_j,a_{\nu(j)}}(h_j)\otimes\Id.
\end{align*}
Similarly, $W^*K_{a_j,a_{\nu(j)}}(h_j)W =K^{(0)}_{a_j,a_{\nu(j)}}(h_j)\otimes\Id$ and $W^*D_{a_j,a_{\nu(j)}}(h_j)W =D^{(0)}_{a_j,a_{\nu(j)}}(h_j)\otimes\Id$. We conclude that $W^*PW=P^{(0)}\otimes\Id$ for every $P\in\mcP$. Traciality then gives
\begin{equation*}
    (\tr_N\otimes\tau)(PQ) =(\tr_N\otimes\tau)(W^*PQW) =\tr_N(P^{(0)}Q^{(0)}),
\end{equation*}
for all $P,Q\in\mcP$. Since $\tr_N(P^{(0)}Q^{(0)})$ does not involve $\tau$, this value is independent of choice of $\tau$, which proves (b).
\end{proof}

We also need to determine projections whose columns have prescribed real scalar entries. The multiplier in an equation $z_a=tz_1$ need not be unitary, so we use the scalar projections below in place of $D_{a,1}(h)$. The resulting uniqueness statement will apply both to the projections expressing the relations and to certain scalar projections used to compare repeated generators.

\begin{lemma}
\label{lem:pc-scalar-column}
Let $\mcA$ be a unital $C^*$-algebra with a faithful tracial state $\sigma$, and let $N\geq1$. For $2\leq a\leq N$ and $t\in\R$, define
\begin{equation}
    B_a(t):= \frac{q_a+t^2q_1 -t(2G_{a,1}(\Id)-q_a-q_1)}{1+t^2}.
    \label{eq:pc-weighted-comparison}
\end{equation}
Then $B_a(t)$ is a rank-one projection over $\mcA$.

Choose $0\leq s\leq N-1$, distinct indices $e_1,\ldots,e_s\in\{2,\ldots, N\}$, and real numbers $\lambda_1,\ldots,\lambda_s$. Put
\begin{equation*}
    Q:=q_1+\sum_{j=1}^s q_{e_j} ~\text{ and }~ \Lambda:=1+\sum_{j=1}^s\lambda_j^2.
\end{equation*}
There is a unique projection $F\in M_N(\mcA)$ satisfying
\begin{equation}
F=QFQ,~~(\tr_N\otimes\sigma)(F)=\frac1N, ~\text{ and }~ (\tr_N\otimes\sigma) \bigl(FB_{e_j}(\lambda_j)\bigr)=0 \label{eq:pc-scalar-column-conditions}
\end{equation}
for all $1\leq j\leq s$. It is given by $F=\xi\xi^*$, where
\begin{equation}
    \xi:=\frac1{\sqrt{\Lambda}} \left(\ket{1}+\sum_{j=1}^s\lambda_j\ket{e_j}\right) \otimes\Id_{\mcA}.
    \label{eq:pc-weighted-scalar-column}
\end{equation}
In particular, $F$ is rank one over $\mcA$.
\end{lemma}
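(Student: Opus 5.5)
The proof follows the same template as \Cref{lem:pc-column-projection}(a): first identify $B_a(t)$ as a rank-one projection, then verify that $F=\xi\xi^*$ satisfies \eqref{eq:pc-scalar-column-conditions}, and finally show uniqueness by combining faithfulness, a subprojection argument and a trace comparison.

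\emph{$B_a(t)$ is rank one.} Since $2G_{a,1}(\Id)-q_a-q_1=\ket{a}\bra{1}\otimes\Id+\ket{1}\bra{a}\otimes\Id$, we can write $B_a(t)=\beta\beta^*$ with
\begin{equation*}
    \beta:=\frac{1}{\sqrt{1+t^2}}\bigl(\ket{a}-t\ket{1}\bigr)\otimes\Id_{\mcA}.
\end{equation*}
This column is normalized, so $B_a(t)$ is a rank-one projection over $\mcA$. For any column $z=\sum_i\ket{i}\otimes z_i$ we have $\beta^*z=(z_a-tz_1)/\sqrt{1+t^2}$. Hence $B_a(t)z=0$ if and only if $z_a=tz_1$, which plays the role of the ``moreover'' part of \Cref{lem:pc-basic-projections}.

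\emph{Existence.} The column $\xi$ in \eqref{eq:pc-weighted-scalar-column} has scalar entries, and $\xi^*\xi=\Lambda^{-1}\bigl(1+\sum_j\lambda_j^2\bigr)\Id=\Id$. So $F=\xi\xi^*$ is a rank-one projection with $(\tr_N\otimes\sigma)(F)=1/N$. It satisfies $QFQ=F$ because $\xi$ is supported on the indices $1,e_1,\ldots,e_s$. Moreover,
\begin{equation*}
    \beta_{e_j}^*\xi=\frac{\lambda_j-\lambda_j}{\sqrt{\Lambda(1+\lambda_j^2)}}=0,
\end{equation*}
so $B_{e_j}(\lambda_j)F=0$ and the trace condition holds.

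\emph{Uniqueness.} Let $P$ be any projection satisfying \eqref{eq:pc-scalar-column-conditions}.

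Since $(\tr_N\otimes\sigma)(PB)=(\tr_N\otimes\sigma)\bigl((BP)^*(BP)\bigr)$ and $\tr_N\otimes\sigma$ is faithful on $M_N(\mcA)$ (as $\sigma$ is faithful), we get $B_{e_j}(\lambda_j)P=0$ for every $j$. Applying the criterion above to each column of $P$ gives $P_{e_j,b}=\lambda_jP_{1,b}$ for all $b$. Together with $P=QPQ$, every nonzero row of $P$ is a scalar multiple of row $1$, with coefficient vector $v:=\ket{1}+\sum_j\lambda_j\ket{e_j}$.

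We then compute $FP$ row by row. For $i\in\{1,e_1,\ldots,e_s\}$ with coefficient $v_i$,
\begin{equation*}
    (FP)_{i,b}=\Lambda^{-1}v_i\sum_{l}v_lP_{l,b}=\Lambda^{-1}v_i\Lambda P_{1,b}=P_{i,b},
\end{equation*}
and all other rows vanish on both sides. Hence $FP=P$, that is, $P\le F$.

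Both projections have trace $1/N$, so $(\tr_N\otimes\sigma)(F-P)=0$. Since $F-P\ge0$ is a projection and the trace is faithful, $P=F$.

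The one point needing care is the faithfulness of $\tr_N\otimes\sigma$ on $M_N(\mcA)$. This holds because $(\tr_N\otimes\sigma)(x^*x)=\frac1N\sum_{i,k}\sigma(x_{k,i}^*x_{k,i})$, which vanishes only when every entry $x_{k,i}$ is zero. Everything else is routine matrix bookkeeping.
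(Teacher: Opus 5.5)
Your proposal is correct and follows the paper's proof essentially step for step. It uses the same normalized column $(\ket{a}-t\ket{1})\otimes\Id_{\mcA}/\sqrt{1+t^2}$ for $B_a(t)$, the criterion $B_a(t)z=0 \iff z_a=tz_1$, faithfulness to get $B_{e_j}(\lambda_j)P=0$ and hence $P\le F$, and equality of traces to conclude; your check that $\tr_N\otimes\sigma$ is faithful on $M_N(\mcA)$ just makes explicit a step the paper leaves implicit.
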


\begin{proof}
 The normalized column
\begin{equation*}
    \zeta_{B_a(t)} :=\frac{(\ket{a}-t\ket{1})\otimes\Id_{\mcA}} {\sqrt{1+t^2}}
\end{equation*}
satisfies $B_a(t)=\zeta_{B_a(t)}\zeta_{B_a(t)}^*$, so $B_a(t)$ is rank one over $\mcA$.

We first examine that the projection $F:=\xi\xi^*$ defined by \Cref{eq:pc-weighted-scalar-column} satisfies all the conditions in \eqref{eq:pc-scalar-column-conditions}. The first two conditions follow immediately from the definition of $F$. For the last one, note that for any column $z=\sum_{a=1}^N\ket{a}\otimes z_a\in M_{N,1}(\mcA)$, $B_a(t)z=0$ if and only if $z_a=tz_1$. Since $\xi$ is normalized and satisfies $B_{e_j}(\lambda_j)\xi=0$ for every $j$, we have $FB_{e_j}(\lambda_i)=0$. We conclude that $F$ satisfies all the conditions in \eqref{eq:pc-scalar-column-conditions}.

For uniqueness, let $P$ be another projection satisfying these conditions. For each $j$, traciality and faithfulness give
\begin{equation*}
    (\tr_N\otimes\sigma) \bigl((B_{e_j}(\lambda_j)P)^* (B_{e_j}(\lambda_j)P)\bigr) =(\tr_N\otimes\sigma) \bigl(PB_{e_j}(\lambda_j)\bigr)=0,
\end{equation*}
and hence $B_{e_j}(\lambda_j)P=0$. Taking its $(e_j,b)$-entry yields
\begin{equation*}
    P_{e_j,b}=\lambda_jP_{1,b}
\end{equation*}
for all $1\leq j\leq s$ and $1\leq b\leq N$. Since $P=QPQ$, all other rows are zero. Consequently, the $b$-th column of $P$ equals $\sqrt{\Lambda}\,\xi P_{1,b}$. It follows that $FP=P$, so $P\leq F$. Both projections have trace $\tfrac{1}{N}$, so faithfulness gives $P=F$.
\end{proof}

The formulas for $B_a(t)$ and $F$ define projections even when $\mcA$ has no faithful trace; faithfulness is needed only for uniqueness. The projections $B_a(t)$ are used only through the real linear expression \eqref{eq:pc-weighted-comparison}; they are not additional members of the final generating family. Thus their pair values with a chosen projection are determined by the pair values involving $q_a,q_1$, and $G_{a,1}(\Id)$. The same applies to the projections $D_{a,b}(h)=q_a+q_b-G_{a,b}(h)$.

For distinct $a,b\in\{2,\ldots,N\}$, define
\begin{equation}
    \zeta_{C_{a,b}}:= \frac1{\sqrt3}(\ket{1}+\ket{a}+\ket{b})\otimes\Id_{\mcA} ~\text{ and }~ C_{a,b}:=\zeta_{C_{a,b}}\zeta_{C_{a,b}}^*.
    \label{eq:pc-scalar-column}
\end{equation}
This is the case of \Cref{lem:pc-scalar-column} with indices $a,b$ and weights $1,1$. In particular, its trace, its specified rows and columns, and its zero pair values with $B_a(1)$ and $B_b(1)$ determine it uniquely under a faithful trace. Its defining formula also gives
\begin{equation}
    \ket{a}\bra{b}\otimes\Id+\ket{b}\bra{a}\otimes\Id =3C_{a,b}+q_1-2G_{a,1}(\Id)-2G_{b,1}(\Id).
    \label{eq:pc-scalar-linear-expression}
\end{equation}
The right-hand side is a real linear combination of projections. This will allow us to use the two specified entries on the left in the proof that the canonical projections have the required images.

The above two lemmas allow us to encode a relation in $\mcA$ involving unitaries using projections in $M_N(\mcA)$. We illustrate this with the following small example.

\begin{example}
\label{ex:pc-commutation}
Let $u,v$ be unitaries in a unital $C^*$-algebra $\mcA$, and suppose we aim to encode the commutation relation $uv-vu=0$. In \Cref{lem:pc-column-projection}, take $N=6$, $k=5$, and $a_i=i+1$ for $1\leq i\leq5$. Set $c_1:=\Id$ and define the remaining entries by
\begin{equation*}
    \begin{aligned}
        c_2:=vc_1=v,\quad c_3:=uc_2=uv, \quad c_4:=uc_1=u,\quad c_5:=vc_4=vu.
    \end{aligned}
\end{equation*}
The corresponding normalized column is
\begin{equation*}
    \eta:=\frac1{\sqrt5} \begin{pmatrix}0\\\Id\\v\\uv\\u\\vu\end{pmatrix} \in M_{6,1}(\mcA).
\end{equation*}
To construct the scalar column $\xi$ in \Cref{lem:pc-scalar-column}, choose the indices $e_1=4$, $e_2=6$ and weights $\lambda_1=1$, $\lambda_2=-1$. This gives
\begin{equation*}
    \xi:=\frac1{\sqrt3} \begin{pmatrix}\Id\\0\\0\\\Id\\0\\-\Id\end{pmatrix}.
\end{equation*}
Thus $\xi$ selects the entries $uv$ and $vu$ and direct computation yields 
\begin{equation}
    \xi^*\eta=\frac{uv-vu}{\sqrt{15}}.
    \label{eq:pc-commutation-inner-product}
\end{equation}
Hence $uv-vv=0$ if and only if $\eta$ and $\xi$ are orthogonal. As in \Cref{lem:pc-column-projection,lem:pc-scalar-column}, we define the projections $E:=\eta\eta^*$ and $F:=\xi\xi^*$ in $M_6(\mcA)$. Then $uv-vv=0$ if and only if $EF=0$.
\end{example}

For an algebra with multiple relations, we will use one column $\eta$ to store all $*$-monomials (and their suffixes) appearing in those relations. Each relation $R_j=0$ will have its own scalar column $\xi_j$, whose coefficients select the corresponding terms of $\eta$. Then $\xi_j^*\eta$ is a scalar multiple of $R_j$, so  $E:=\eta\eta^*$ will satisfy $EF_j=0$ for $F_j:=\xi_j\xi_j^*$.

Different relations may use the same generator. In the proof of part~(c) in \Cref{thm:projection-construction}, we must show that all these uses give the same unitary in the GNS representation of any tracial state. The next lemma provides the required equality.

\begin{lemma}
\label{lem:pc-equal-coefficients}
For unitaries $h,k$ in a unital $C^*$-algebra $\mcA$, define
\begin{equation}
    T(h,k):=\frac12\Id_4\otimes\Id_{\mcA} +\frac1{2\sqrt2}
        \begin{pmatrix}
            0&h&\Id&0\\
            h^*&0&0&\Id\\
            \Id&0&0&-k\\
            0&\Id&-k^*&0
        \end{pmatrix}.
    \label{eq:pc-four-row-projection}
\end{equation}
Then $T(h,k)$ is a projection if and only if $h=k$.
\end{lemma}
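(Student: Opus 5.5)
The plan is to reduce the projection condition to a single matrix identity and then read off the entries. Write
\begin{equation*}
    T(h,k)=\tfrac12\,\Id_4\otimes\Id_{\mcA}+\tfrac{1}{2\sqrt2}\,M,
\end{equation*}
where $M\in M_4(\mcA)$ is the displayed $4\times4$ matrix.

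First I check self-adjointness. The off-diagonal pairs of $M$ are $(h,h^*)$, $(\Id,\Id)$, $(\Id,\Id)$ and $(-k,-k^*)$, sitting in mutually transposed positions, so $M=M^*$. Hence $T(h,k)$ is self-adjoint for all unitaries $h,k$, and being a projection is equivalent to idempotence alone.

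Next I expand the square:
\begin{equation*}
    T^2=\tfrac14\Id+\tfrac{1}{2\sqrt2}M+\tfrac18M^2 .
\end{equation*}
Comparing with $T$, the condition $T^2=T$ is equivalent to $\tfrac18M^2=\tfrac14\Id$, i.e.\ to $M^2=2\,\Id_4\otimes\Id_{\mcA}$. So the whole statement reduces to computing $M^2$ entrywise in $M_4(\mcA)$, keeping track of the noncommutative order of products.

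The computation of $M^2$ goes as follows.
\begin{itemize}
    \item Each diagonal entry is a sum of two terms of the form $g g^*$ or $g^*g$ with $g\in\{h,k,\Id\}$. By unitarity each such term equals $\Id$, so every diagonal entry is $2\,\Id$.
    \item The entries $(1,2)$, $(1,3)$, $(2,4)$ and $(3,4)$, together with their mirror images, vanish because the nonzero entries of $M$ follow a bipartite pattern: indices $\{1,4\}$ are linked only to $\{2,3\}$.
    \item The entries $(1,4)$ and $(2,3)$ remain, together with their adjoints. Multiplying row $1$, namely $(0,h,\Id,0)$, by column $4$, namely $(0,\Id,-k,0)^T$, gives $h-k$. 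Multiplying row $2$, namely $(h^*,0,0,\Id)$, by column $3$, namely $(\Id,0,0,-k^*)^T$, gives $h^*-k^*$.
\end{itemize}
Therefore
\begin{equation*}
    M^2=2\,\Id+\ket{1}\bra{4}\otimes(h-k)+\ket{4}\bra{1}\otimes(h-k)^*+\ket{2}\bra{3}\otimes(h-k)^*+\ket{3}\bra{2}\otimes(h-k).
\end{equation*}

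From this formula both directions follow at once. If $h=k$, then $M^2=2\,\Id$, so $T(h,k)$ is a projection. Conversely, if $T(h,k)$ is a projection, then the $(1,4)$ entry of $M^2$ must vanish, which forces $h=k$.

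There is no real obstacle here. The only point needing care is the bookkeeping of the order of factors in the noncommutative products, so that the surviving off-diagonal entries come out as exactly $h-k$ and its adjoint, rather than, say, $hk^*-\Id$.
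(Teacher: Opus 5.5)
Your proof is correct and follows essentially the same route as the paper. The paper also notes self-adjointness and computes $T(h,k)^2-T(h,k)$ directly, which is $\tfrac18(M^2-2\,\Id)$ in your notation, and gets exactly the surviving entries $h-k$ at positions $(1,4),(3,2)$ and $h^*-k^*$ at positions $(2,3),(4,1)$; your reduction to $M^2=2\,\Id$ and the bipartite-pattern remark are simply a tidy way to organize that same computation.
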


\begin{proof}
The matrix is self-adjoint, and direct multiplication gives
\begin{equation*}
    T(h,k)^2-T(h,k)=\frac18
        \begin{pmatrix}
            0&0&0&h-k\\
            0&0&h^*-k^*&0\\
            0&h-k&0&0\\
            h^*-k^*&0&0&0
        \end{pmatrix}.
\end{equation*}
This vanishes exactly when $h=k$.
\end{proof}

\subsection{Construction of the generating family}
\label{subsec:pc-construction}

Suppose $\mcA$ has a finite real-coefficient unitary presentation and admits a tracial state. Throughout the rest of this section, fix a presentation
\begin{equation*}
    \mcA=C^*\ang{u_1,\ldots,u_d\text{ unitary} \mid R_1=0,\ldots,R_m=0},
\end{equation*}
where $R_j=\sum_{w\in\mcS}\lambda_{j,w}w$ as in \Cref{def:real-coefficient-unitary}. 

For each $*$-monomial $w=h_1\cdots h_L\in\mcS$, where each $h_i\in\{u_1,\ldots,u_d,u_1^*,\ldots,u_d^*\}$, consider the finite set
\begin{equation*}
    \{\Id,h_L,h_{L-1}h_L,\ldots,h_1\cdots h_L\}.
\end{equation*}
Let $\mcW$ be the union of these sets. For example, the monomial $u_1u_2^*u_3$ contributes $\Id,u_3,u_2^*u_3,u_1u_2^*u_3$. Thus $\mcW$ is finite and contains $\mcS$. Note that every $*$-monomial $w\neq \Id$ in $\mcW$ has a unique formal expression $hv$, where $h$ is its first variable and $v\in\mcW$ is the product of the remaining variables ($v=\Id$ if $\abs{w}=1$). Put
\begin{equation}
    k:=|\mcW|+|\mcS| ~\text{ and }~ N:=2d+3+k.
    \label{eq:pc-matrix-size}
\end{equation}
For convenience, label the indices $\{1,\ldots,N\}$ as follows. Reserve index $1$. Write
\begin{equation}
    s_i:=2i+2 ~\text{ and }~ t_i:=2i+3
    \label{eq:pc-generator-indices}
\end{equation}
for all $0\leq i\leq d$, so they label the indices $\{2,\ldots,2d+3\}$. Label the remaining $k$ indices by
\begin{equation*}
    a_w,w\in\mcW ~\text{ and }~b_w,w\in\mcS.
\end{equation*}
All these indices are distinct. In particular, $a_w\ne b_w$ for $w\in\mcS$. Denote the set of these $k$ indices by $\mcV$.

We further select the triples
\begin{equation}
    \begin{aligned}
        \mcI:={}&\{(a_{hv},a_v,h):hv\in\mcW, \ h\in\{u_1,\ldots,u_d,u_1^*,\ldots,u_d^*\}\}\\
        &\cup\{(b_w,a_w,\Id):w\in\mcS\}.
    \end{aligned}
    \label{eq:pc-entry-triples}
\end{equation}
Such a triple $(a,b,h)$ will be used to record the equation $z_a=hz_b$ for a column $z$. Note that no two distinct triples have the same (unordered) pair $\{a,b\}$, and at most one of $a,b$ is of the form $b_w$.

Throughout the rest of this section, we use projections $q_a,G_{a,b}(h),K_{a,b}(h),D_{a,b}(h),C_{a,b}\in M_N(\mcA)$ as defined in \Cref{lem:pc-basic-projections} and \Cref{eq:pc-scalar-column}.

\begin{lemma}
\label{lem:pc-relation-projections}
With the above indices and notations, define
\begin{equation}
    \eta:=\frac1{\sqrt{k}}\left( \sum_{w\in\mcW}\ket{a_w}\otimes w +\sum_{w\in\mcS}\ket{b_w}\otimes w\right)\in M_{N,1}(\mcA) ~\text{ and }~ E:=\eta\eta^*\in M_N(\mcA).
    \label{eq:pc-shared-columns}
\end{equation}
For each $1\leq j\leq m$, put $\Lambda_j:=1+\sum_{w\in\mcS}\lambda_{j,w}^2$ and define
\begin{equation*}
    \xi_j:=\frac1{\sqrt{\Lambda_j}}\left( \ket{1}+\sum_{w\in\mcS}\lambda_{j,w}\ket{b_w}\right) \otimes\Id\in M_{N,1}(\mcA) ~\text{ and }~ F_j:=\xi_j\xi_j^*\in M_N(\mcA).
\end{equation*}
Then $E,F_1,\ldots,F_m$ are rank-one projections over $\mcA$, and
\begin{equation}
    \xi_j^*\eta=\frac{R_j}{\sqrt{k\Lambda_j}}=0 ~\text{ and }~ EF_j=0
    \label{eq:pc-shared-overlap}
\end{equation}
for all $1\leq j\leq m$. Moreover, for any $P,Q$ in
\begin{equation*}
    \begin{aligned}
        &\{q_a:1\leq a\leq N\}\cup\{E,F_1,\ldots,F_m\}\\
        &\quad\cup\{G_{a,b}(h),K_{a,b}(h),D_{a,b}(h): (a,b,h)\in\mcI\},
    \end{aligned}
\end{equation*}
the value $(\tr_N\otimes\tau)(PQ)$ is independent of the choice of the tracial state $\tau$ on $\mcA$.
\end{lemma}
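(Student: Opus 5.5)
The plan has four steps: check that $E,F_j$ are normalized projections, compute the overlaps $\xi_j^*\eta$, reduce the pair values to those of scalar matrices by a diagonal unitary conjugation (as in \Cref{lem:pc-column-projection}(b)), and treat the pairs involving some $F_j$ separately using \Cref{eq:pc-column-pair-trace}.

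\textbf{Normalization and the relations.} Every entry of $\eta$ is a $*$-monomial in unitaries, hence unitary, and there are exactly $k=|\mcW|+|\mcS|$ nonzero entries. Thus $\eta^*\eta=\Id$. Likewise $\xi_j^*\xi_j=\Lambda_j^{-1}\bigl(1+\sum_w\lambda_{j,w}^2\bigr)\Id=\Id$. So $E$ and $F_j$ are rank-one projections over $\mcA$. Since $\xi_j$ has no $a_w$ entries and its $\ket{1}$ entry meets a zero entry of $\eta$, we get
\begin{equation*}
    \xi_j^*\eta=\frac{1}{\sqrt{k\Lambda_j}}\sum_{w\in\mcS}\lambda_{j,w}w=\frac{R_j}{\sqrt{k\Lambda_j}},
\end{equation*}
using that the $\lambda_{j,w}$ are real. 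This vanishes in $\mcA$, and therefore $EF_j=\eta(\eta^*\xi_j)\xi_j^*=0$.

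\textbf{Pairs not involving any $F_j$.} Here I mimic \Cref{lem:pc-column-projection}(b). Set $c_{a_w}:=w$ and $c_{b_w}:=w$, and define the diagonal unitary
\begin{equation*}
    W:=\sum_{v\in\mcV}\ket{v}\bra{v}\otimes c_v+\sum_{a\notin\mcV}q_a.
\end{equation*}
Then:
\begin{enumerate}[(i)]
    \item $W$ fixes every $q_a$ under conjugation.
    \item $W^*EW=E^{(0)}\otimes\Id$.
    \item For each triple $(a,b,h)\in\mcI$ we have $c_a=hc_b$. For $(a_{hv},a_v,h)$ this holds by construction of $\mcW$, since $hv$ has value $h$ times the value of $v$. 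For $(b_w,a_w,\Id)$ it holds trivially. Hence $c_a^*hc_b=\Id$, and the same computation as in \Cref{lem:pc-column-projection}(b) gives $W^*G_{a,b}(h)W=G^{(0)}_{a,b}\otimes\Id$.
    \item $K_{a,b}(h)$ and $D_{a,b}(h)$ are real linear combinations of $G_{a,b}(h)$, $q_a$ and $q_b$, so they are handled by (iii).
\end{enumerate}
Consequently every such pair value equals $\tr_N(P^{(0)}Q^{(0)})$, which does not depend on $\tau$.

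\textbf{Pairs involving some $F_j$.} Conjugating by $W$ does not fix $F_j$, because $W$ acts nontrivially on the $b_w$ rows. I will instead use \Cref{eq:pc-column-pair-trace} directly with the columns $\zeta_P$ and $\xi_j$ and check in each case that $\zeta_P^*\xi_j$ is a scalar or a unitary multiple of a scalar.
\begin{enumerate}[(i)]
    \item $F_jF_l$: the product $\xi_j^*\xi_l$ is a real scalar.
    \item $F_jq_a$: the product $\zeta_{q_a}^*\xi_j$ is a scalar.
    \item $F_jE$: the value is $0$ by the relation computed above.
    \item $F_j$ against $G_{a,b}(h)$ with $(a,b,h)=(a_{hv},a_v,h)$: both indices lie outside the support of $\xi_j$, so the product is $0$.
    \item $F_j$ against $G_{b_w,a_w}(\Id)$: the column $\tfrac{1}{\sqrt2}(\ket{b_w}+\ket{a_w})\otimes\Id$ gives the scalar $\lambda_{j,w}/\sqrt{2\Lambda_j}$.
    \item $F_j$ against $K$ and $D$: these follow by real linearity from (ii) and (iv)–(v). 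Alternatively, compute directly from their scalar columns $\zeta_{K}$ and $\zeta_{D}$.
\end{enumerate}
In every case $(\zeta^*\xi_j)^*(\zeta^*\xi_j)$ is a fixed scalar, so the pair value is independent of $\tau$.

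The main obstacle is the bookkeeping in (iii) of the second step: one must verify that each triple in $\mcI$ is compatible with $W$. This relies on the uniqueness of the decomposition $w=hv$ and on the distinctness of all the indices.
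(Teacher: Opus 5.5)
Your proposal is correct and follows essentially the same route as the paper. You compute the normalization and the overlap $\xi_j^*\eta=R_j/\sqrt{k\Lambda_j}$ directly, handle pairs without any $F_j$ by the diagonal-unitary conjugation of \Cref{lem:pc-column-projection}(b), and handle pairs involving $F_j$ by explicit column computations via \eqref{eq:pc-column-pair-trace}. The one minor difference is that you run the conjugation argument directly from $c_a=hc_b$ for each triple in $\mcI$; this avoids the paper's length-ordered enumeration of $\mcV$, which is needed only to meet that lemma's recursive hypothesis $\nu(i)<i$.
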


\begin{proof}
Unitarity of the monomials and the definition of $\Lambda_j$ give $\eta^*\eta=\xi_j^*\xi_j=\Id$. Only the $b_w$-rows of $\xi_j^*$ contribute to $\xi_j^*\eta$, which precisely gives $\tfrac{1}{\sqrt{k\Lambda_j}}\sum_{w\in\mcS}\lambda_{j,w}w=\tfrac{1}{\sqrt{k\Lambda_j}}R_j$. This proves \eqref{eq:pc-shared-overlap}.

It remains to prove the ``moreover'' part. We wish to apply \Cref{lem:pc-column-projection}. For this, we enumerate $\mcV$ as $r_1,\ldots,r_k$ by first listing the indices $a_w$ in nondecreasing order of the length of $w$, and then all the indices $b_w$. Monomials of equal length and the indices $b_w$ may be ordered arbitrarily. In particular, $r_1=a_{\Id}$. We can then rewrite
\begin{equation*}
\eta=\frac1{\sqrt{k}}\sum_{i=1}^k\ket{r_i}\otimes c_i.
\end{equation*}
For each $2\leq i\leq k$, let $(r_i,r_{\nu(i)},h_i)$ be the unique triple in $\mcI$ whose first index is $r_i$. If $r_i=a_{hv}$, its second index is $a_v$, which occurs earlier because $v$ has one fewer factor than $hv$. If $r_i=b_w$, its second index is $a_w$, which also occurs earlier. Consequently $\nu(i)<i$,
\begin{equation}
    c_i=h_ic_{\nu(i)}
    \label{eq:pc-shared-entry-equations}
\end{equation}
for all $2\leq i\leq k$, and $c_1=\Id$. Then \Cref{lem:pc-column-projection}(b) implies that $(\tr_N\otimes \tau)(PQ)$ is independent of $\tau$ for all pairs $P,Q$ not involving an $F_j$.

For the remaining pairs involving $F_j$, fix $1\leq j\leq m$ and an arbitrary tracial state $\tau$ on $\mcA$. There are two types of triples in $\mcI$. For a triple $(a_{hv},a_v,h)$, the entries of $\xi_j$ at both $a_{hv}$ and $a_v$ are zero. Therefore
\begin{equation*}
    \begin{aligned}
        (\tr_N\otimes\tau)(F_jG_{a_{hv},a_v}(h)) &=(\tr_N\otimes\tau)(F_jK_{a_{hv},a_v}(h))\\
            &=(\tr_N\otimes\tau)(F_jD_{a_{hv},a_v}(h))=0.
    \end{aligned}
\end{equation*}
Every other triple has the form $(b_w,a_w,\Id)$ for some $w\in\mcS$. For these, we have
\begin{equation*}
    \begin{aligned}
        (\tr_N\otimes\tau)(F_jG_{b_w,a_w}(\Id)) &=(\tr_N\otimes\tau)(F_jD_{b_w,a_w}(\Id)) =\frac{\lambda_{j,w}^2}{2N\Lambda_j},\\
        (\tr_N\otimes\tau)(F_jK_{b_w,a_w}(\Id)) &=\frac{\lambda_{j,w}^2\cos^2(\pi/8)} {N\Lambda_j}
    \end{aligned}
\end{equation*}
for all $w\in\mcS$. The entries of $\xi_j$ also give
\begin{equation*}
    (\tr_N\otimes\tau)(F_jq_r) =
        \begin{cases}
            \dfrac1{N\Lambda_j},&r=1,\\[4pt]
            \dfrac{\lambda_{j,w}^2}{N\Lambda_j}, &r=b_w\text{ for some }w\in\mcS,\\[4pt]
            0,&\text{otherwise}
        \end{cases}
\end{equation*}
Finally,
\begin{equation}
    (\tr_N\otimes\tau)(F_jF_{j'}) =\frac{\left(1+\sum_{w\in\mcS} \lambda_{j,w}\lambda_{j',w}\right)^2} {N\Lambda_j\Lambda_{j'}}
    \label{eq:pc-scalar-relation-pairs}
\end{equation}
for all $1\leq j'\leq m$. Together with $EF_j=0$, we conclude that (b) holds.
\end{proof}

To construct the desired generating set for \Cref{thm:projection-construction}, we first define the common family
\begin{equation}
    \begin{aligned}
        \mcP_0:={}&\{q_a:1\leq a\leq N\} \cup\bigcup_{a=2}^N\{G_{a,1}(\Id),K_{a,1}(\Id)\}\\
        &\cup\bigcup_{i=0}^d \{G_{s_i,t_i}(u_i),K_{s_i,t_i}(u_i)\} \cup\{C_{s_0,t_0}\}.
    \end{aligned}
    \label{eq:pc-common-family}
\end{equation}
For notational convenience, we let 
\begin{equation*}
 s(\Id):=s_0,~~ t(\Id):=t_0,~~  (s(u_i),t(u_i)):=(s_i,t_i), ~\text{ and }~ (s(u_i^*),t(u_i^*)):=(t_i,s_i).
\end{equation*}
for $1\leq i\leq d$. For each $(a,b,h)\in\mcI$, define $T_{a,b}(h)\in M_N(\mcA)$ by letting its submatrix on the ordered rows and columns $s(h),t(h),a,b$ to be $T(h,h)$ given by \Cref{lem:pc-equal-coefficients}, with all other entries zero. These four indices are distinct, so \Cref{lem:pc-equal-coefficients} makes $T_{a,b}(h)$ a projection. Set
\begin{equation}
    \begin{aligned}
        \mcP:={}&\mcP_0\cup\{E,F_1,\ldots,F_m\}\\
        &\cup\bigcup_{(a,b,h)\in\mcI} \{G_{a,b}(h),K_{a,b}(h), C_{s(h),a},C_{t(h),b},T_{a,b}(h)\}.
    \end{aligned}
    \label{eq:pc-final-family}
\end{equation}
We regard this as an indexed family: different names may denote equal matrices, while each requested $C_{a,b}$ is included only once.

For later use, note that the formulas for $G$ and \eqref{eq:pc-scalar-linear-expression} give
\begin{equation}
    \begin{aligned}
        T_{a,b}(h)={}&\frac12(q_{s(h)}+q_{t(h)}+q_a+q_b)\\
        &+\frac1{2\sqrt2}\Bigl( 2G_{s(h),{t(h)}}(h)-2G_{a,b}(h)-q_{s(h)}-q_{t(h)}+q_a+q_b\\
        &\quad+3C_{s(h),a}+3C_{{t(h)},b}+2q_1 -2\sum_{r\in\{s(h),{t(h)},a,b\}}G_{r,1}(\Id)\Bigr).
    \end{aligned}
    \label{eq:pc-comparison-linear}
\end{equation}
In other words, every $T_{a,b}(h)$ is a real linear combination of the other projections in $\mcP$.

\subsection{Proof of Theorem \ref{thm:projection-construction}}
Let $X$ index the family \eqref{eq:pc-final-family}, and write its members as $P_x$. We are now ready to prove that $\{P_x:x\in X\}$ is the desired generating set for \Cref{thm:projection-construction}.
\begin{proof}[Proof of \Cref{thm:projection-construction}]
 Universality gives a unital $*$-homomorphism
\begin{equation*}
    \Phi:\PVM{X,\Z_2}\arr M_N(\mcA) ~\text{ sending }~ e_{x,1}\mapsto P_x.
\end{equation*}

For part (a), we show that the subset $\mcP_0$ given in \eqref{eq:pc-common-family} already generates $M_N(\mcA)$. In fact,
\begin{equation*}
    \begin{aligned}
        2q_aG_{a,1}(\Id)q_1&=\ket{a}\bra{1}\otimes\Id,\\
        2q_{s_i}G_{s_i,t_i}(u_i)q_{t_i} &=\ket{s_i}\bra{t_i}\otimes u_i
    \end{aligned}
\end{equation*}
for all $2\leq a\leq N$ and $1\leq i\leq d$. Products and adjoints of $\{\ket{a}\bra{1}\otimes\Id:1\leq a\leq N\}$ give every $\ket{a}\bra{b}\otimes\Id$. Multiplying $\{\ket{s_i}\bra{t_i}\otimes u_i:0\leq i\leq d \}$ by these matrices gives all $\ket{1}\bra{1}\otimes u_i$. Hence $\mcP_0$, and therefore $\mcP$, generates $M_N(\mcA)$.

For part~(b), note that all $K_{a,b}(h)$ and $T_{a,b}(h)$  are real linear combinations of the remaining projections in $\mcP$, as shown in \eqref{eq:pc-indexed-matrices} and \eqref{eq:pc-comparison-linear}, so it suffices to check products of $q_a,G_{a,b}(h),C_{a,b},E$, and $F_j$. By \Cref{lem:pc-relation-projections}, $(\tr_N\otimes\tau)(PQ)$ for all $P,Q$ in
\begin{equation*}
    \{q_a:1\leq a\leq N\}\cup\{E,F_1,\ldots,F_m\} \cup\{G_{a,b}(h):(a,b,h)\in\mcI\}
\end{equation*}
are already independent of $\tau$. We calculate only the remaining products. Fix an arbitrary tracial state $\tau$ on $\mcA$.
\begin{itemize}
    \item \emph{Products with $q_r$ and products of two $G$-projections.} For the products with $q_r$ not covered by \Cref{lem:pc-relation-projections},
    \begin{equation*}
        \begin{aligned}
            (\tr_N\otimes\tau)(q_rG_{a,b}(h)) &=\frac{\delta_{ra}+\delta_{rb}}{2N},\\
            (\tr_N\otimes\tau)(q_rC_{a,b}) &=\frac{\delta_{r1}+\delta_{ra}+\delta_{rb}}{3N}
        \end{aligned}
    \end{equation*}
    for all $1\leq r\leq N$. For two $G$-projections, at least one of which is not indexed by $\mcI$,
    \begin{equation*}
        (\tr_N\otimes\tau)(G_{a,b}(h)G_{r,s}(g)) =\frac{|\{a,b\}\cap\{r,s\}|^2}{4N}.
    \end{equation*}

    \item \emph{The remaining products with $E$.} For $2\leq r\leq N$,
    \begin{equation*}
        (\tr_N\otimes\tau)(EG_{r,1}(\Id)) =\begin{cases}
                \frac{1}{2Nk},&r\in\mcV,\\
                0,&r\notin\mcV.
            \end{cases}
    \end{equation*}
    For $0\leq i\leq d$,
    \begin{equation*}
        (\tr_N\otimes\tau)(EG_{s_i,t_i}(u_i))=0.
    \end{equation*}
    Finally, for every $C_{a,b}\in\mcP$,
    \begin{equation*}
        (\tr_N\otimes\tau)(EC_{a,b}) =\frac{|\{a,b\}\cap\mcV|}{3Nk}.
    \end{equation*}

    \item \emph{The remaining products with $F_j$.} Fix $1\leq j\leq m$. For $2\leq r\leq N$,
    \begin{equation*}
        (\tr_N\otimes\tau)(F_jG_{r,1}(\Id)) =\begin{cases}
                \dfrac{(1+\lambda_{j,w})^2}{2N\Lambda_j}, &r=b_w\text{ for some }w\in\mcS,\\[4pt]
                \dfrac1{2N\Lambda_j},&\text{otherwise}.
            \end{cases}
    \end{equation*}
    For $0\leq i\leq d$,
    \begin{equation*}
        (\tr_N\otimes\tau)(F_jG_{s_i,t_i}(u_i))=0.
    \end{equation*}
    For every $C_{a,b}\in\mcP$,
    \begin{equation*}
        (\tr_N\otimes\tau)(F_jC_{a,b}) =\frac{\left(1+\sum_{\substack{w\in\mcS\\
                                b_w\in\{a,b\}}}\lambda_{j,w}\right)^2} {3N\Lambda_j}.
    \end{equation*}

    \item \emph{The remaining products with $C_{a,b}$.} For every $G_{r,s}(h)\in\mcP$,
    \begin{equation*}
        (\tr_N\otimes\tau)(C_{a,b}G_{r,s}(h)) =\frac{|\{1,a,b\}\cap\{r,s\}|^2}{6N}.
    \end{equation*}
    For any two $C$-projections,
    \begin{equation*}
        (\tr_N\otimes\tau)(C_{a,b}C_{r,s}) =\frac{\bigl(1+|\{a,b\}\cap\{r,s\}|\bigr)^2}{9N}.
    \end{equation*}
\end{itemize}
Together with \Cref{lem:pc-relation-projections}, this covers all the cases. We conclude that, for every $x,y\in X$, the value $(\tr_N\otimes\tau)(P_xP_y)$ is independent of $\tau$. Denote this value by $p_{x,y}$. It is real since
\begin{equation*}
    (\tr_N\otimes\tau)(P_xP_y) =(\tr_N\otimes\tau)(P_xP_yP_x)\geq0.
\end{equation*}
This proves~(b).

For part~(c), let $\tau$ be a tracial state on $\PVM{X,\Z_2}$ satisfying $\tau(e_{x,1}e_{y,1})=p_{x,y}$ for all $x,y\in X$. Let $(\pi_\tau,\mcH_\tau,\ket{\tau})$ be its GNS representation, and let
\begin{equation*}
    \mcM:=\pi_\tau(\PVM{X,\Z_2})''\subseteq \msB(\mcH_\tau).
\end{equation*}
Then $\tau$ induces a faithful normal tracial state $\tau_{\mcM}$ on $\mcM$ such that $\tau=\tau_{\mcM}\circ\pi_\tau$. Put $P'_x:=\pi_\tau(e_{x,1})$ for all $x\in X$. For the projections $q_a,G_{a,b}(h),\ldots$ in $\mcP$, write $q'_a,G'_{a,b}(h),\ldots$ for the corresponding projections $P'_x$. 

Let $\mcC:=q'_1\mcM q'_1$. We will construct a $*$-isomorphism $\Psi:M_N(\mcC)\arr\mcM$ and a unital $*$-homomorphism $\rho:\mcA\arr\mcC$ such that
\begin{equation*}
    \pi_\tau=\Psi\circ(\operatorname{id}_{M_N}\otimes\rho)\circ\Phi.
\end{equation*}
Taking traces will then give the desired factorization in~(c).

For this, we first fix a tracial state $\varphi$ on $\mcA$. By part~(b),
\begin{equation}
    \tau_{\mcM}(P'_xP'_y) =\tau(e_{x,1}e_{y,1}) =p_{x,y} =(\tr_N\otimes\varphi)(P_xP_y)
    \label{eq:pc-gns-product-traces}
\end{equation}
for all $x,y\in X$. Observable that, for any real numbers $\gamma_0,\gamma_x,x\in X$, and $L:=\gamma_0\Id+\sum_x\gamma_xP_x$ and $L':=\gamma_0\Id+\sum_x\gamma_xP'_x$,  \Cref{eq:pc-gns-product-traces} further implies
\begin{equation*}
    \tau_{\mcM}((L')^*L')=(\tr_N\otimes\varphi)(L^*L).
\end{equation*}
Faithfulness of $\tau_\mcM$ therefore concludes $L'=0$ whenever $L=0$. Applying these observations to $\sum_aq_a=\Id$ and $q_aq_b=0$ for $a\ne b$ shows that the $\{q'_a:1\leq a\leq N\}\subseteq \mcM$ are orthogonal and sum to $\Id_{\mcH_\tau}$. For any subset of indices $J\subseteq\{1,\ldots,N\}$, if $P\in\mcP$ is zero outside rows and columns indexed $J$, then
\begin{equation*}
    \tau_{\mcM}(P'q'_a)=(\tr_N\otimes\varphi)(Pq_a)=0
\end{equation*}
for all $a\notin J$. Hence
\begin{equation*}
    P'=\left(\sum_{a\in J}q'_a\right)P' \left(\sum_{a\in J}q'_a\right).
\end{equation*}
Set $v_1:=q'_1$ and $v_a:=2q'_aG'_{a,1}(\Id)q'_1$ for $2\leq a\leq N$. Applying \Cref{lem:pc-two-projections} to $G'_{a,1}(\Id)$ and $K'_{a,1}(\Id)$ gives $v_a^*v_a=q'_1$ and $v_av_a^*=q'_a$. Consequently,
\begin{equation*}
    v_a^*v_b=\delta_{a,b}q'_1 ~\text{ and }~ \sum_{a=1}^Nv_av_a^*=\Id
\end{equation*}
for all $1\leq a,b\leq N$. Now define
\begin{equation}
    \Psi:M_N(\mcC)\arr\mcM ~\text{ sending }~ \ket{a}\bra{b}\otimes c\mapsto v_acv_b^*.
    \label{eq:pc-matrix-isomorphism}
\end{equation}
This is a unital $*$-isomorphism, with inverse $y\mapsto(v_a^*yv_b)_{a,b=1}^N$. Since $\tau_{\mcM}(q'_1)=\tfrac{1}{N}$, the functional $\sigma:=N\cdot\tau_{\mcM}|_{\mcC}$ is a faithful normal tracial state on $\mcC$, and
\begin{equation}
    \tau_{\mcM}\circ\Psi=\tr_N\otimes\sigma.
    \label{eq:pc-matrix-trace}
\end{equation}
Indeed, $\tau_{\mcM}(v_acv_b^*)=\tfrac{\delta_{a,b}\sigma(c)}{N}$.

For $P\in\mcP$, write $\widehat P:=\Psi^{-1}(P')$. Then
\begin{equation}
    (\tr_N\otimes\sigma)(\widehat P\widehat Q) =(\tr_N\otimes\varphi)(PQ)
    \label{eq:pc-hat-product-traces}
\end{equation}
for all $P,Q\in\mcP$, also for their real linear combinations with $\Id$. The definition of $\Psi$ gives
\begin{equation*}
    \widehat q_a=\ket{a}\bra{a}\otimes\Id_{\mcC} ~\text{ and }~ \widehat G_{a,1}(\Id)=G_{a,1}(\Id_{\mcC})
\end{equation*}
for $1\leq a\leq N$ and $2\leq a\leq N$, respectively. Form $B_a(t)$ in $M_N(\mcC)$ using these scalar matrices. Its linear formula and \eqref{eq:pc-hat-product-traces} give
\begin{equation*}
    \widehat C_{a,b}B_a(1)=\widehat C_{a,b}B_b(1)=0 ~\text{ and }~ \widehat F_jB_{b_w}(\lambda_{j,w})=0
\end{equation*}
by faithfulness, for all $C_{a,b}\in\mcP$, $1\leq j\leq m$, and $w\in\mcS$. The projections $\widehat C_{a,b},\widehat F_j$ have trace $1/N$ and are zero outside the same indices as $C_{a,b},F_j$. By \Cref{lem:pc-scalar-column}, each $\widehat C_{a,b}$ has the formula \eqref{eq:pc-scalar-column} over $\mcC$, and $\widehat F_j=\xi_{j,\mcC}\xi_{j,\mcC}^*$, where $\xi_{j,\mcC}$ is $\xi_j$ with $\Id_{\mcA}$ replaced by $\Id_{\mcC}$.

The identity \eqref{eq:pc-scalar-linear-expression} also gives
\begin{equation*}
    G_{s_0,t_0}(\Id) =\frac12(q_1+q_{s_0}+q_{t_0}+3C_{s_0,t_0}) -G_{s_0,1}(\Id)-G_{t_0,1}(\Id).
\end{equation*}
The same linear identity holds for the hatted projections, so $\widehat G_{s_0,t_0}(\Id)=G_{s_0,t_0}(\Id_{\mcC})$. For $1\leq i\leq d$, \Cref{lem:pc-two-projections} applied to $\widehat G_{s_i,t_i}(u_i),\widehat K_{s_i,t_i}(u_i)$ gives a unitary $U_i\in\mcC$ with $\widehat G_{s_i,t_i}(u_i)=G_{s_i,t_i}(U_i)$. Put $U:=(U_1,\ldots,U_d)$. Reversing the assigned pair for an adjoint shows that $\widehat G_{s(h),t(h)}(h)$ equals $G_{s(h),t(h)}(h(U))$ for every $h$ used in $\mcI$.

For $(a,b,h)\in\mcI$, \Cref{lem:pc-two-projections} initially gives a unitary $H\in\mcC$ such that $\widehat G_{a,b}(h)=G_{a,b}(H)$. In \eqref{eq:pc-comparison-linear}, all scalar terms have been determined. Replacing its projections by their hatted images therefore makes the submatrix of $\widehat T_{a,b}(h)$ on the ordered indices $s(h),t(h),a,b$ equal to $T(h(U),H)$, with zero entries elsewhere. Since it is a projection, \Cref{lem:pc-equal-coefficients} gives $H=h(U)$. Thus $\widehat G_{a,b}(h)=G_{a,b}(h(U))$ for every $(a,b,h)\in\mcI$.

Let $\eta_U$ be obtained from $\eta$ by replacing every monomial $w$ with $w(U)$. The projection $\widehat E$ has trace $1/N$ and is zero outside rows and columns in $\mcV$. Since $D_{a,b}(h)=q_a+q_b-G_{a,b}(h)$ and $ED_{a,b}(h)=0$, \eqref{eq:pc-hat-product-traces} and faithfulness give
\begin{equation*}
    D_{a,b}(h(U))\widehat E=0
\end{equation*}
for every $(a,b,h)\in\mcI$. Using the enumeration of $\mcV$ from the proof of \Cref{lem:pc-relation-projections}, \Cref{lem:pc-column-projection}(a) applies to the same recursion evaluated at $U$, and gives
\begin{equation}
    \widehat E=\eta_U\eta_U^*.
    \label{eq:pc-recovered-common-column}
\end{equation}
Likewise, $EF_j=0$ implies $\widehat F_j\widehat E=0$, and hence
\begin{equation*}
    0=\xi_{j,\mcC}^*\eta_U =\frac{\sum_{w\in\mcS}\lambda_{j,w}w(U)}{\sqrt{k\Lambda_j}}
\end{equation*}
for all $1\leq j\leq m$. Universality therefore gives a unital $*$-homomorphism $\rho:\mcA\arr\mcC$ with $\rho(u_i)=U_i$ for all $1\leq i\leq d$.

We have proved $\widehat P=(\operatorname{id}_{M_N}\otimes\rho)(P)$ for all the $q,G,C,E,F$ projections in $\mcP$. The same holds for $K,T$ by their linear formulas, so it holds for every $P\in\mcP$. On the generators $e_{x,1}$, this gives the required equality $\pi_\tau=\Psi\circ(\operatorname{id}_{M_N}\otimes\rho)\circ\Phi$. Putting $\tau_{\mcA}:=\sigma\circ\rho$ and taking traces yields
\begin{equation*}
    \tau=(\tr_N\otimes\tau_{\mcA})\circ\Phi.
\end{equation*}
Surjectivity of $\Phi$ makes the amplified trace unique; evaluating on $\Id_N\otimes a$ for $a\in\mcA$ gives uniqueness of $\tau_{\mcA}$. This proves~(c).
\end{proof}

\section{Proof of Theorem~\ref{thm:main}}
\label{sec:proof-main}

\subsection{General result}

\Cref{thm:projection-construction} reduces \Cref{thm:main} to finding an algebra with a unique finite-dimensional tracial state and a distinct amenable tracial state. Such an algebra exists. We first state a general result.

\begin{theorem}
\label{thm:nonrobust-from-traces}
Let $\mcA$ have a finite real-coefficient unitary presentation. Suppose $\mcA$ has a character $\tau_0$ which is its unique finite-dimensional tracial state, and an amenable tracial state $\tau_1\ne\tau_0$. Let $N,X,\{P_x:x\in X\}$, and $\Phi$ be given by \Cref{thm:projection-construction}. Define the tracial state $\wtd{\tau}_0:=(\tr_N\otimes\tau_0)\circ\Phi$ on $\PVM{X,\Z_2}$ and let 
\begin{equation}
    \wtd p(a,b\mid x,y):=\wtd{\tau}_0(e_{x,a}e_{y,b})
    \label{eq:general-nonrobust-correlation}
\end{equation}
for all $x,y\in X$ and $a,b\in\Z_2$. Then $\wtd p\in C_q^s(X,\Z_2)$ is extreme in $C_q(X,X,\Z_2,\Z_2)$. It is a self-test with an ideal strategy using a maximally entangled state of Schmidt rank $N$, but it is not a robust self-test.
\end{theorem}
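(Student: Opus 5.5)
The plan is to work entirely with tracial states, moving everything to $M_N(\mcA)$ through $\Phi$ and then invoking \Cref{thm:tracial-self-testing} together with \Cref{lem:unique-extreme-trace-implies-extreme-correlation}.

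\emph{Step 1: finite-dimensionality.} Since $\tau_0$ is a character, $\tr_N\otimes\tau_0$ is $\tr_N\circ(\operatorname{id}_{M_N}\otimes\tau_0)$, where $\operatorname{id}_{M_N}\otimes\tau_0:M_N(\mcA)\arr M_N(\C)$ is a surjective $*$-homomorphism. Composing with the surjection $\Phi$ from \Cref{thm:projection-construction}(a) gives a surjective $*$-homomorphism $\pi:=(\operatorname{id}_{M_N}\otimes\tau_0)\circ\Phi:\PVM{X,\Z_2}\arr M_N(\C)$ with $\wtd\tau_0=\tr_N\circ\pi$. Thus $\wtd\tau_0$ is finite-dimensional, and \Cref{thm:synchronous-quantum-fd-trace} shows that $\wtd p\in C_q^s(X,\Z_2)$. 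By \Cref{lem:matrix-trace-extreme}, $\wtd\tau_0$ is extreme in $T_{\mathrm{fin}}(\PVM{X,\Z_2})$. Its GNS representation is $(\C^N\otimes\C^N,\pi\otimes\Id,\ket{\varphi_N})$, which gives the ideal strategy with a maximally entangled state of Schmidt rank $N$ via \Cref{thm:ideal-strategy-from-trace}.

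\emph{Step 2: uniqueness among finite-dimensional traces.} Let $\varphi$ be any finite-dimensional tracial state on $\PVM{X,\Z_2}$ implementing $\wtd p$. Because $e_{x,0}=\Id-e_{x,1}$, this condition contains $\varphi(e_{x,1}e_{y,1})=\wtd\tau_0(e_{x,1}e_{y,1})=p_{x,y}$. By \Cref{thm:projection-construction}(c), we get $\varphi=(\tr_N\otimes\tau_\varphi)\circ\Phi$ for some $\tau_\varphi\in T(\mcA)$. \Cref{lemma:pre_traces}(a), which applies because $\Phi$ is surjective, shows that $\tau_\varphi$ is finite-dimensional, so $\tau_\varphi=\tau_0$ and hence $\varphi=\wtd\tau_0$.

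\emph{Step 3: extremality and self-testing.} Step 2 and \Cref{lem:unique-extreme-trace-implies-extreme-correlation} give that $\wtd p$ is extreme in $C_q(X,X,\Z_2,\Z_2)$. \Cref{thm:tracial-self-testing}(a) then says $\wtd p$ is a self-test.

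\emph{Step 4: non-robustness.} Set $\wtd\tau_1:=(\tr_N\otimes\tau_1)\circ\Phi$; by \Cref{lemma:pre_traces}(b) it is amenable.
\begin{itemize}
\item \emph{It implements $\wtd p$.} By \Cref{thm:projection-construction}(b), $\wtd\tau_1(e_{x,1}e_{y,1})=p_{x,y}=\wtd\tau_0(e_{x,1}e_{y,1})$. In addition, $\wtd\tau_1(e_{x,1})=\wtd\tau_1(e_{x,1}e_{x,1})=p_{x,x}$. Expanding $e_{x,0}=\Id-e_{x,1}$ therefore shows that $\wtd\tau_1$ and $\wtd\tau_0$ agree on all products $e_{x,a}e_{y,b}$.
\item \emph{It differs from $\wtd\tau_0$.} Evaluate both at $\Phi^{-1}$-preimages of $\Id_N\otimes a$, which exist by surjectivity. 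This gives $\wtd\tau_i(\cdot)=\tau_i(a)$ for $i=0,1$, and $\tau_1\neq\tau_0$.
\end{itemize}
So $\wtd p$ has two distinct amenable implementing traces, since $\wtd\tau_0$ is amenable as well. By \Cref{thm:tracial-self-testing}(b), $\wtd p$ is not a robust self-test.

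The main obstacle I expect is Step 2. There one must make sure that the hypothesis of part (c) is met exactly; this uses the binary reduction to $e_{x,1}$ products. One must also ensure that finite-dimensionality descends from $\varphi$ to $\tau_\varphi$, which requires the GNS identification in \Cref{lemma:pre_traces}(a) to hold in that direction. The remaining steps are formal.
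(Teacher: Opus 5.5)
Your proposal is correct and follows the paper's proof essentially step for step. You pull back $\tr_N\otimes\tau_i$ along the surjection $\Phi$, use \Cref{thm:projection-construction}(c) with \Cref{lemma:pre_traces}(a) for uniqueness among finite-dimensional traces, use \Cref{lem:matrix-trace-extreme} and \Cref{lem:unique-extreme-trace-implies-extreme-correlation} for extremality, and use \Cref{thm:tracial-self-testing} for the self-test versus non-robust conclusion. The only cosmetic point is that your appeal to \Cref{thm:ideal-strategy-from-trace} in Step 1 needs the extremality and self-testing established in Steps 2--3, so it belongs at the end, as in the paper.
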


\begin{proof}
Let $\wtd{\tau}_1:=(\tr_N\otimes\tau_1)\circ\Phi$. By \Cref{lemma:pre_traces}, $\wtd{\tau}_0$ is finite-dimensional and $\wtd{\tau}_1$ is amenable. Thus \Cref{thm:synchronous-quantum-fd-trace} implies $\wtd p\in C_q^s(X,\Z_2)$. By \Cref{thm:projection-construction}(b), both pullbacks agree on $e_{x,1}e_{y,1}$. Taking $x=y$ gives their values on $e_{x,1}$, and $e_{x,0}=\Id-e_{x,1}$ determines all other outcome pairs. Consequently, both tracial states implement $\wtd p$.

If a finite-dimensional tracial state $\varphi$ implements $\wtd p$, \Cref{thm:projection-construction}(c) gives
\begin{equation*}
    \varphi=(\tr_N\otimes\tau)\circ\Phi
\end{equation*}
for a tracial state $\tau$ on $\mcA$. By \Cref{lemma:pre_traces}, $\tau$ is finite-dimensional, and hence $\tau=\tau_0$ by hypothesis. Therefore $\wtd{\tau}_0$ is the unique finite-dimensional tracial state implementing $\wtd p$.

Since $\tau_0$ is a character, the map
\begin{equation}
    \Phi_0:=(\operatorname{id}_{M_N}\otimes\tau_0)\circ\Phi :\PVM{X,\Z_2}\arr M_N
    \label{eq:general-ideal-representation}
\end{equation}
is a surjective $*$-homomorphism and $\wtd{\tau}_0=\tr_N\circ\Phi_0$. By \Cref{lem:matrix-trace-extreme}, $\wtd{\tau}_0$ is extreme among finite-dimensional tracial states. Its uniqueness as an implementing trace and \Cref{lem:unique-extreme-trace-implies-extreme-correlation} show that $\wtd p$ is extreme in the full quantum correlation set.

The trace $\wtd{\tau}_0$ is also amenable, since it is finite-dimensional. The two amenable implementing traces are distinct: equality of their pullbacks, together with surjectivity of $\Phi$, would imply $\tau_0=\tau_1$. Thus \Cref{thm:tracial-self-testing} shows that $\wtd p$ is a self-test but not a robust self-test.

Since $\tau_0$ is a character and $\Phi_0$ is a surjective $*$-homomorphism, the GNS representation of $\wtd{\tau}_0=\tr_N\circ\Phi_0$ is unitarily equivalent to
\begin{equation*}
    \bigl(\C^N\otimes\C^N, \Phi_0(\,\cdot\,)\otimes\Id_N, \ket{\varphi_N} \bigr),
\end{equation*}
where $\ket{\varphi_N}=\frac{1}{\sqrt{N}}\sum_{i=1}^N\ket{i}\otimes\ket{i}$ is a maximally entangled state in $\C^N\otimes\C^N$. We have shown that $\wtd{\tau}_0$ is extreme and is the unique finite-dimensional tracial state implementing $\wtd p$. Hence \Cref{thm:ideal-strategy-from-trace} gives an ideal strategy $\wtd S$ for the self-test with shared state $\ket{\varphi_N}$, which has Schmidt rank $N$.
\end{proof}

\subsection{An explicit example from rotation algebras}
\label{sec:rotation-example}

Let $\mcA_{\mathrm{rot}}$ be the universal unital $C^*$-algebra generated by unitaries $z,u,v$ subject to
\begin{equation}
    zu=uz,\quad zv=vz,\quad uv=zvu,\quad (2z+2z^*-\Id)(u+v-2\Id)=0.
    \label{eq:rot-presentation}
\end{equation}
These are four real-coefficient $*$-polynomial relations. The assignment $z,u,v\mapsto1$ satisfies them and defines a character $\tau_0$; in particular, $\mcA_{\mathrm{rot}}$ is nonzero.

The unitary $z$ commutes with $u,v$ and specifies their commutation factor. In a finite-dimensional representation, each eigenvalue of $z$ must be a root of unity. The last relation then forces both $u$ and $v$ to be the identity. On the other hand, when $2z+2z^*-\Id=0$, it leaves an irrational rotation algebra as a quotient.

\begin{proposition}
\label{prop:rotation-traces}
The character $\tau_0$ is the unique finite-dimensional tracial state on $\mcA_{\mathrm{rot}}$. There is also an amenable tracial state $\tau_1\ne\tau_0$ on $\mcA_{\mathrm{rot}}$.
\end{proposition}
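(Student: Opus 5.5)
We prove the two claims separately: a spectral argument in finite-dimensional representations gives uniqueness, and an explicit irrational rotation quotient gives the second trace.

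\emph{Uniqueness of $\tau_0$.} Let $\tau$ be a finite-dimensional tracial state. It factors through a finite-dimensional representation $\pi:\mcA_{\mathrm{rot}}\arr M_n(\C)$, and we may decompose $\pi$ into irreducibles. Since $z$ is central in $\mcA_{\mathrm{rot}}$ (it commutes with the generators $u,v$), Schur's lemma makes $\pi(z)=\omega\Id_n$ on each irreducible summand, for some $\omega\in\mathbb{T}$. The relation $uv=zvu$ becomes $\pi(u)\pi(v)=\omega\pi(v)\pi(u)$. Taking determinants gives $\omega^n=1$, so $\omega$ is a root of unity.

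Next we split on the scalar $c:=2\omega+2\bar\omega-1=4\operatorname{Re}\omega-1$.
\begin{itemize}
\item If $c\neq0$, the last relation gives $\pi(u)+\pi(v)=2\Id$. Since $\pi(u),\pi(v)$ are unitaries, each has norm $1$, and the equality case of the triangle inequality $\norm{\pi(u)\xi+\pi(v)\xi}\le 2$ forces $\pi(u)\xi=\pi(v)\xi=\xi$ for every unit vector $\xi$. Hence $\pi(u)=\pi(v)=\Id$, so $\omega=1$ from $uv=zvu$, and the summand is the trivial representation.
\item If $c=0$, then $\operatorname{Re}\omega=1/4$. A root of unity $\omega$ with $2\cos\theta=1/2$ cannot exist, because $2\cos\theta$ is an algebraic integer whenever $\omega=e^{i\theta}$ is a root of unity, and $1/2$ is a rational non-integer.
\end{itemize}
So every finite-dimensional representation is a multiple of the character, and $\tau=\tau_0$.

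\emph{The second trace.} Choose $\theta$ with $\cos(2\pi\theta)=1/4$. By the same algebraic-integer argument, $\theta$ is irrational. Let $A_\theta$ be the irrational rotation algebra generated by unitaries $U,V$ with $UV=e^{2\pi i\theta}VU$. The assignment $z\mapsto e^{2\pi i\theta}\Id$, $u\mapsto U$, $v\mapsto V$ satisfies all four relations, since $2z+2z^*-\Id=0$ kills the last one. Universality then gives a $*$-homomorphism $\rho:\mcA_{\mathrm{rot}}\arr A_\theta$. Let $\tau_\theta$ be the canonical trace on $A_\theta$, and define $\tau_1:=\tau_\theta\circ\rho$. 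Then $\tau_1(z)=e^{2\pi i\theta}\neq1=\tau_0(z)$, so $\tau_1\neq\tau_0$.

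For amenability, note that $A_\theta$ is nuclear, being a crossed product $C(\mathbb{T})\rtimes\Z$ by an amenable group. Hence $\tau_\theta$ is amenable, and amenability is preserved under pullback by a unital $*$-homomorphism, as recorded after \Cref{lemma:pre_traces}.

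The main obstacle is the rigorous reduction in the uniqueness part. We rely on centrality of $z$ to obtain a scalar $\omega$ on irreducibles, and we must note that tracial states on $\mcA_{\mathrm{rot}}$ with finite-dimensional GNS space factor through a finite direct sum of matrix algebras. The arithmetic fact that $\cos(2\pi\theta)=1/4$ is not attained at a rational angle also needs a brief justification, for instance via Niven's theorem.
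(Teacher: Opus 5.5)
Your proposal is correct and follows essentially the same route as the paper. You reduce to blocks where $\pi(z)$ is a root-of-unity scalar $\omega$ (the paper uses eigenspaces of $\pi(z)$ rather than Schur's lemma on irreducible summands), and you exclude $4\operatorname{Re}\omega=1$ by the algebraic-integer argument. You then force $U=V=\Id$ from $U+V=2\Id$ by strict convexity, where the paper uses $(U-\Id)^*(U-\Id)+(V-\Id)^*(V-\Id)=0$. Finally, you pull back the trace of the rotation algebra at the same parameter: your $e^{2\pi i\theta}$ is the paper's $\zeta=(1+i\sqrt{15})/4$ or its conjugate. You justify nuclearity via $C(S^1)\rtimes\Z$, whereas the paper cites it.
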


\begin{proof}
Let $\pi:\mcA_{\mathrm{rot}}\arr M_q$ be a unital $*$-representation. Since $z$ commutes with $u,v$, every eigenspace of $\pi(z)$ is invariant under $\pi(u),\pi(v)$. On a nonzero eigenspace of dimension $r$, write $\pi(z)=\mu\Id_r$ and denote the other restrictions by $U,V$. The identity $UV=\mu VU$ gives $\mu^r=1$ by taking determinants. Moreover, $2\mu+2\overline\mu-1\ne0$: otherwise $\mu+\mu^{-1}=1/2$, which is impossible since a rational algebraic integer is an integer. The last relation in \eqref{eq:rot-presentation} therefore gives $U+V=2\Id_r$. It follows that
\begin{equation*}
    (U-\Id_r)^*(U-\Id_r)+(V-\Id_r)^*(V-\Id_r)=0,
\end{equation*}
so $U=V=\Id_r$, and $UV=\mu VU$ now gives $\mu=1$. Thus every finite-dimensional representation sends all three generators to the identity. Applying this to the GNS representation of a finite-dimensional tracial state proves uniqueness of $\tau_0$.

For the second trace, put $\zeta:=(1+i\sqrt{15})/4$. Then $|\zeta|=1$ and $2\zeta+2\overline\zeta-1=0$. The same algebraic-integer argument shows that $\zeta$ is not a root of unity. The irrational rotation algebra
\begin{equation*}
    A_\zeta:=C^*\langle U,V\text{ unitary}\mid UV=\zeta VU\rangle
\end{equation*}
is nonzero and nuclear and has a tracial state $\tau_\zeta$~\cite[Section~2.2]{BCHL18}. Its trace is therefore amenable. The assignments
\begin{equation*}
    \theta(z)=\zeta\Id,\quad \theta(u)=U,\quad \theta(v)=V
\end{equation*}
satisfy \eqref{eq:rot-presentation} and define a surjective $*$-homomorphism $\theta:\mcA_{\mathrm{rot}}\arr A_\zeta$. By \Cref{lemma:pre_traces}, amenability is preserved under pullback, so $\tau_1:=\tau_\zeta\circ\theta$ is amenable. Finally, $\tau_0\neq \tau_1$ because $\tau_1(z)=\zeta\ne1=\tau_0(z)$.
\end{proof}

We now apply \Cref{thm:projection-construction} using precisely the four relations in \eqref{eq:rot-presentation}. Let $p_{\mathrm{rot}}$ be the resulting correlation defined using \Cref{eq:general-nonrobust-correlation}.

\begin{theorem}
\label{thm:rotation-nonrobust}
The correlation $p_{\mathrm{rot}}$ is an extreme synchronous quantum correlation with $250$ questions and two answers per party. It is a self-test with an ideal strategy using a maximally entangled state of Schmidt rank $36$, but it is not a robust self-test.
\end{theorem}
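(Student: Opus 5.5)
The plan is to apply \Cref{thm:nonrobust-from-traces} to $\mcA_{\mathrm{rot}}$ with the presentation \eqref{eq:rot-presentation}, and then count the construction of \Cref{subsec:pc-construction} explicitly.

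\textbf{Qualitative part.} The presentation \eqref{eq:rot-presentation} is a finite real-coefficient unitary presentation, with $d=3$ unitary generators $z,u,v$ and $m=4$ relations. The algebra admits the character $\tau_0$. By \Cref{prop:rotation-traces}, $\tau_0$ is the unique finite-dimensional tracial state and $\tau_1\neq\tau_0$ is amenable. Hence \Cref{thm:nonrobust-from-traces} applies directly. It shows that $p_{\mathrm{rot}}\in C_q^s(X,\Z_2)$ is extreme and is a self-test with an ideal maximally entangled state of Schmidt rank $N$, but is not a robust self-test. What remains is to compute $N$ and $|X|$ from \eqref{eq:pc-matrix-size} and \eqref{eq:pc-final-family}.

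\textbf{Computing $N$.} First expand the last relation:
\[
2zu+2zv-4z+2z^*u+2z^*v-4z^*-u-v+2\Id .
\]
Together with the monomials of the first three relations, this gives
\[
\mcS=\{zu,uz,zv,vz,uv,zvu,z,z^*u,z^*v,z^*,u,v,\Id\},
\]
so $|\mcS|=13$. Adding all suffixes gives
\[
\mcW=\{\Id,u,z,v,z^*,zu,uz,zv,vz,uv,vu,zvu,z^*u,z^*v\},
\]
so $|\mcW|=14$; the only new suffix is $vu$. Hence $k=27$ and $N=2\cdot3+3+27=36$.

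\textbf{Counting $X$.} I count the members of \eqref{eq:pc-final-family} one block at a time.
\begin{itemize}
\item $\mcP_0$ contributes $36+2\cdot35+2\cdot4+1=115$.
\item The projections $E,F_1,\dots,F_4$ contribute $5$.
\item The set $\mcI$ has $|\mcW|-1+|\mcS|=13+13=26$ triples, each contributing $G,K,T$, which gives $78$.
\item Each triple also requests $C_{s(h),a}$ and $C_{t(h),b}$, giving up to $52$ more.
\end{itemize}
This totals $115+5+78+52=250$, provided all these $C$-projections are pairwise distinct and distinct from $C_{s_0,t_0}$.

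\textbf{Main obstacle: checking there are no coincidences among the $C$'s.} Each $C$ is labelled by an unordered pair consisting of one index in $\{s_i,t_i\}$ and one in $\mcV$. So two such projections coincide only if both the generator index and the $\mcV$-index agree. I will argue this cannot happen, in four steps.
\begin{enumerate}
\item First indices of triples in $\mcI$ are distinct, so the $C_{s(h),a}$ are distinct among themselves.
\item A triple $(a_{hv},a_v,h)$ is determined by $(h,v)$, so the $C_{t(h),b}$ are distinct among themselves.
\item A clash $C_{s(h),a}=C_{t(h'),b}$ would force $s(h)=t(h')$, hence $h'=h^*$ (only $z,z^*$ occur here), and $a=b$. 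This would require $a_{zw}=a_{w'}$ as second index of a $z^*$-triple, or the reverse. Inspecting the list above, the $z^*$-triples have second indices $a_\Id,a_u,a_v$, which never equal $a_z,a_{zu},a_{zv},a_{zvu}$. Symmetrically, no $z$-triple has a second index among $a_{z^*},a_{z^*u},a_{z^*v}$.
\item The triples $(b_w,a_w,\Id)$ use the pair $s_0,t_0\notin\mcV$, so they cannot collide with $C_{s_0,t_0}$ or with the others.
\end{enumerate}
This confirms $|X|=250$ and $N=36$, completing the proof.
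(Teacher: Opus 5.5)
Your proposal is correct and follows the paper's proof. You invoke the general non-robustness theorem through the trace properties of $\mathcal{A}_{\mathrm{rot}}$, then count $|\mathcal{S}|=13$, $|\mathcal{W}|=14$, $k=27$, $N=36$, $|\mathcal{P}_0|=115$, $26$ triples and $52$ distinct requested $C$-projections to reach $|X|=250$. Your case check that no two requested $C_{a,b}$ labels coincide is a more explicit version of the paper's per-index tally $13,13,7,7,3,3,3,3$.
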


\begin{proof}
The self-testing assertions follow from \Cref{prop:rotation-traces} and \Cref{thm:nonrobust-from-traces}. We calculate the sizes for the construction in the preceding section. Expanding \eqref{eq:rot-presentation} gives the monomial set
\begin{equation*}
    \mcS=\{\Id,z,z^*,u,v,zu,uz,zv,vz,uv,z^*u,z^*v,zvu\}.
\end{equation*}
The set $\mcW$ adds only $vu$, so $|\mcS|=13$ and $|\mcW|=14$. Thus $k=27$ and \eqref{eq:pc-matrix-size} gives $N=36$. There are $|\mcW|-1+|\mcS|=26$ equations in $\mcI$.

Assign $u_1=z$, $u_2=u$, and $u_3=v$ in the construction. The common family has
\begin{equation*}
    |\mcP_0|=36+2(36-1)+2(3+1)+1=115.
\end{equation*}
The equations in $\mcI$ request $52$ different connecting projections $C_{a,b}$. More explicitly, the numbers requested with one index $2,3,4,5,6,7,8,9$ are, respectively, $13,13,7,7,3,3,3,3$; their other indices are all of the form $a_w$ or $b_w$. Hence none of these requests repeat or give the initial $C_{2,3}$. Each of the $26$ equations consequently contributes two $G,K$ projections, two $C$ projections, and one $T$ projection. Together with $E$ and the four $F_j$, this gives
\begin{equation*}
    |X|=115+5\cdot26+1+4=250.
\end{equation*}
As throughout the construction, these are counts of the indexed projections; no further identifications are made.
\end{proof}

This proves \Cref{thm:main}. The sizes just calculated belong to this presentation and this projection construction; no minimality is claimed.

\subsection{Lower bounds for small scenarios}
\label{sec:lower}
For binary synchronous quantum correlations that are self-test but not robust, we give lower bounds on the number of questions and the Schmidt rank of the ideal state.

\begin{proposition}
\label{prop:four-question-lower-bound}
Let $p\in C_q^s(X,\Z_2)$ be an extreme synchronous quantum correlation. If $p$ is a self-test but not a robust self-test, then $|X|\geq4$, and every ideal strategy for $p$ employs a maximally entangled state of Schmidt rank at least three.
\end{proposition}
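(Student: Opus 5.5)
Since $p$ is a self-test, \Cref{thm:tracial-self-testing}(a) and \Cref{thm:ideal-strategy-from-trace} give a unique finite-dimensional implementing trace $\tau=\tr_d\circ\pi$, with $\pi:\PVM{X,\Z_2}\arr M_d$ irreducible and ideal state $\ket{\varphi_d}$. Since $p$ is not robust, \Cref{thm:tracial-self-testing}(b) gives an amenable implementing trace $\tau'\neq\tau$. The plan is to show that in each excluded case every amenable implementing trace equals $\tau$, a contradiction. We split into the case of small Schmidt rank $d\leq 2$ and the case of few questions $|X|\leq3$, whatever $d$ is.

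\emph{Schmidt rank $d\le 2$.} The projections $\pi(e_{x,1})$ are $d\times d$.

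\begin{itemize}
\item If $d=1$, the trace $\tau$ is a character and $p$ is deterministic. Then $\tau'(e_{x,a})\in\{0,1\}$ for every $x,a$. Faithfulness on the GNS von Neumann algebra of $\tau'$ forces $\pi_{\tau'}(e_{x,a})\in\{0,\Id\}$, so $\tau'=\tau$.
\item If $d=2$, irreducibility means the $\pi(e_{x,1})$ are not all scalar. Scalar ones are determined as in the case $d=1$. Each nontrivial one has rank one and trace $1/2$, and corresponds to a unit Bloch vector $n_x$. Then $C_{x,y}:=4p(1,1|x,y)-1=n_x\cdot n_y$ for these $x,y$. Since $\pi$ is irreducible, the $n_x$ span a space of dimension $2$ or $3$, so the Gram matrix $C$ has rank $2$ or $3$.
\item If the rank is $2$, I would restrict to the nontrivial questions, where the marginals are unbiased, and apply \Cref{thm:unbiased-even-rank} directly.
\item If the rank is $3$, \Cref{thm:unbiased-even-rank} does not apply, so I would argue directly. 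For an amenable $\tau'$, put $A_x:=2\pi_{\tau'}(e_{x,1})-\Id$, which are symmetries with $\tau'(A_xA_y)=n_x\cdot n_y$. Pick three questions with linearly independent $n_x$. Every other $A_y$ is $L^2$-forced to be the matching linear combination, and the anticommutation-type relations follow from norm equalities and faithfulness. This generates a quotient isomorphic to $M_2$ or $M_2\oplus M_2$ twisted by a central symmetry. Extremality of $p$ should kill the central freedom.
\end{itemize}

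\emph{At most three questions.} Write $P_x:=\pi_{\tau'}(e_{x,1})$. The first step is to reduce to unbiased marginals by discarding deterministic questions. The rigidity of two projections (Halmos) says that $P_1,P_2$ generate a $C^*$-algebra that is a subalgebra of $C([0,1],M_2)$. Its traces are determined by a measure on angles, and such a measure is pinned down by $\tau'(P_1),\tau'(P_2),\tau'(P_1P_2)$ only if the measure is a point mass. For two questions, extremality of $p$ forces a point mass, which gives uniqueness among all traces and hence robustness.

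For three questions, the uniqueness of the finite-dimensional trace together with extremality should force $d\le 2$, i.e.\ a finite-dimensional realization by $2\times2$ blocks. I would show this by noting that an extreme synchronous binary correlation with three questions is determined by $C$, a $3\times3$ correlation matrix. Its extreme rank-$r$ realizations have $r\le 2$ by the Pataki/Barvinok bound $r(r+1)/2\le 3$. This would reduce the three-question case to the qubit case above. The statement claims $|X|\ge4$, so the three-question case must in fact fall into the robust regime.

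\emph{Main obstacle.} The hardest step is the three-question and rank-$3$ qubit analysis. There one must show that every amenable (indeed every) implementing trace is forced to equal $\tau$, using only second-order moments $\tau'(P_xP_y)$. I would first try to reduce every case, via extremality and the bound on the rank of extreme points of the elliptope, to Gram matrices of even rank, and then invoke \Cref{thm:unbiased-even-rank}. Only if that fails would I use the explicit Pauli-type rigidity argument.
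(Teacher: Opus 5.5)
\textbf{Gap 1: the qubit case with Bloch vectors spanning $\R^3$.} Here your argument breaks. You propose to show that every amenable implementing trace equals $\tau$, with extremality of $p$ removing the ``central freedom'' of $M_2\oplus M_2$. That freedom is genuine, and extremality cannot detect it. The transposed projections $\pi(e_{x,1})^{T}$ have the same second moments, since $\tr_2(P_x^{T}P_y^{T})=\tr_2(P_xP_y)$. So they give a second finite-dimensional, hence amenable, implementing trace. The two summands of your $M_2\oplus M_2$ are exactly $\pi$ and its conjugate, and every convex combination of their traces implements the same $p$. These traces differ. With $A_x=n_x\cdot\sigma$, one gets $\tr_2(A_xA_yA_z)=i\det(n_x,n_y,n_z)$ for $\pi$ and $-i\det(n_x,n_y,n_z)$ for the transpose.

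Such a $p$ can nevertheless be extreme. Take the correlator to be a rank-$3$ extreme point of the elliptope, which exists once $|X|\geq 6$. The forced Clifford relations then make all marginals $1/2$ in any decomposition. This also defeats your fallback of reducing to even rank via the Barvinok--Pataki bound.

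The missing idea is to use the self-test hypothesis here, not extremality. Uniqueness of the finite-dimensional implementing trace forces $\tr_2\circ\pi=\tr_2\circ\pi^{T}$, hence $\det(n_x,n_y,n_z)=0$ for all triples. So rank $3$ never occurs. Rank $1$ is excluded, and \Cref{thm:unbiased-even-rank} finishes. This is how the paper rules out $M_2$.

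\textbf{Gap 2: the three-question case.} Discarding deterministic questions does not make the remaining marginals equal to $1/2$. A non-deterministic question can have any marginal in $(0,1)$. Then $p$ is encoded by the $4\times 4$ moment matrix $[\tau(A_iA_j)]_{i,j=0}^{3}$ with $A_0=\Id$, not by a $3\times3$ correlation matrix. The tracial synchronous set is a proper subset of the elliptope $\mathcal{E}_4$; already for two questions it is the cut polytope. So extremality of $p$ yields no extremality in $\mathcal{E}_4$ and no rank bound. Your elliptope argument is valid only on the unbiased slice, which is affinely $\mathcal{E}_3$ via Clifford realizations.

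General marginals need a genuine structural input. The paper uses \cite[Theorem~3.15]{Rus20}: every synchronous binary correlation on three questions is a convex combination of correlations realized by projections in $\C$ or $M_2$. An extreme one is therefore realized there and falls into the excluded qubit case; fewer questions are handled by adjoining deterministic ones.

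Your two-question reasoning is also off. A point mass at an $M_2$ irreducible representation is not pinned down by its moments, because a suitable mixture of the four characters has the same moments. The conclusion survives only because two-question synchronous binary correlations form the classical tetrahedron.

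\textbf{Smaller omissions.} In the rank-$2$ case you must check that the restriction $p_0$ is still extreme and that robustness of $p_0$ lifts to $p$. For the claim about every ideal strategy, you need that the canonical $\ket{\varphi_d}$-strategy is a local dilation of every strategy for $p$. Then all such strategies have Schmidt rank at least $d$.
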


\begin{proof}
We first show that $p$ cannot be realized by projections in $\C$ or $M_2$, with their normalized traces. A realization in $\C$ makes $p$ deterministic. Every implementing tracial state then sends each canonical projection to its prescribed scalar in the tracial GNS quotient, so the implementing trace is unique. By \Cref{thm:tracial-self-testing}, $p$ would be robust.

Now suppose that projections $P_x\in M_2$ satisfy
\begin{equation*}
    p(1,1|x,y)=\tr_2(P_xP_y)
\end{equation*}
for all $x,y\in X$. Let $X_0:=\{x\in X:P_x\notin\{0,\Id_2\}\}$ and let $p_0$ be the restriction of $p$ to $X_0\times X_0$. We may assume $X_0\ne\varnothing$, since otherwise $p$ is deterministic. The omitted questions have fixed answers. Any convex decomposition of $p_0$ extends to one of $p$ by restoring these answers, so $p_0$ is extreme. Moreover, an implementing tracial state of $p$ is determined by its restriction to $X_0$: each omitted generator equals its prescribed scalar in the tracial GNS quotient. Since restriction preserves amenability, \Cref{thm:tracial-self-testing} shows that robustness of $p_0$ implies robustness of $p$.

For $x\in X_0$, the projection $P_x$ has rank one. Thus $A_x:=2P_x-\Id_2$ is a traceless self-adjoint unitary, and we may write $A_x=v_x\cdot\sigma$ for a unit vector $v_x\in\R^3$, where $\sigma=(\sigma_1,\sigma_2,\sigma_3)$ is the triple of Pauli matrices. The marginals of $p_0$ are $1/2$, and its correlator matrix is
\begin{equation*}
    C_{x,y}:=4p_0(1,1|x,y)-1 =\tr_2(A_xA_y)=\langle v_x,v_y\rangle
\end{equation*}
for all $x,y\in X_0$.

The transposed projections $P_x^{T}$ also define a finite-dimensional tracial state implementing $p$, since $\tr_2(P_x^{T}P_y^{T})=\tr_2(P_xP_y)$. By \Cref{thm:tracial-self-testing}, this trace equals the one defined by the $P_x$. Applying this equality to $(2e_{x,1}-\Id)(2e_{y,1}-\Id)(2e_{z,1}-\Id)$ and using the Pauli identities gives
\begin{equation*}
    \begin{aligned}
        i\det(v_x,v_y,v_z) &=\tr_2(A_xA_yA_z)\\
          &=\tr_2(A_zA_yA_x) =-i\det(v_x,v_y,v_z)
    \end{aligned}
\end{equation*}
for all $x,y,z\in X_0$. Hence the vectors $v_x$ span a space of dimension at most two. If this dimension were one, $p_0$ would be the equal mixture of two opposite deterministic assignments, contradicting extremality. Therefore $\operatorname{rank}C=2$, and \Cref{thm:unbiased-even-rank} makes $p_0$, and hence $p$, a robust self-test. This rules out a realization in $M_2$.

Suppose now that $|X|\leq3$. The representations in the proof of \cite[Theorem~3.15]{Rus20} express every synchronous binary correlation on three questions as a convex combination of correlations realized by projections in $\C$ or $M_2$, with their normalized traces. For fewer questions, first adjoin deterministic questions and then restrict these realizations. Extremality therefore gives such a realization for $p$ itself, which we have just ruled out. Thus $|X|\geq4$.

Finally, \Cref{thm:ideal-strategy-from-trace} gives an ideal strategy for $p$ with shared state $\ket{\varphi_d}$ and projections in $M_d$, whose normalized trace implements $p$. The preceding argument forces $d\geq3$. Since this ideal strategy is a local dilation of every strategy realizing $p$, and local isometries preserve Schmidt rank, the shared state of every such strategy has Schmidt rank at least $d$. In particular, this holds for every ideal strategy.
\end{proof}

\printbibliography
\end{document}